\newtheorem{thm}{Theorem}[section]
\newtheorem{definition}[thm]{Definition}
\newtheorem{proposition}[thm]{Proposition}
\newtheorem{lemma}[thm]{Lemma}
\newtheorem{corollary}[thm]{Corollary}
\newtheorem{conjecture}[thm]{Conjecture}
\newtheorem{observation}[thm]{Observation}
\newenvironment{proof}[1][Proof:]
{\medskip \noindent {\bf #1} }{\hspace*{0pt}\hfill$\Box$ \bigskip}
\newcommand{\pr}{\mathrm{Pr}}
\newcommand{\expe}{\mathrm{E}}
\newcommand{\restrict}[1]{\upharpoonright_{#1}}
\newcommand{\ind}{\mathrm{ind}}
\newcommand{\ram}{\mathrm{Ram}}
\title{Testing Hereditary Properties of Ordered Graphs and Matrices} 
\author{Noga Alon\thanks{Sackler School of Mathematics
		and Blavatnik School of
		Computer Science, Tel Aviv University, Tel Aviv 69978, Israel.
		Email: {\tt nogaa@tau.ac.il}.  Research supported in part by a
		USA-Israeli
		BSF grant 2012/107, by an ISF grant 620/13 and
		by the Israeli I-Core program.} , 
	Omri Ben-Eliezer\thanks{Blavatnik School of
		Computer Science, Tel Aviv University, Tel Aviv 69978, Israel.
		Email: {\tt omribene@gmail.com}.} ,
	Eldar Fischer\thanks{Faculty of Computer Science, Israel Institute of Technology (Technion), Haifa, Isarel. Email: {\tt eldar@cs.technion.ac.il}.}}
\begin{document}

\begin{titlepage}
	\clearpage
	\maketitle	
	\thispagestyle{empty}

\begin{abstract}
We consider properties of edge-colored \emph{vertex-ordered graphs}, i.e., graphs with a totally ordered vertex set and a finite set of possible edge colors. We show that any hereditary property of such graphs is strongly testable, i.e., testable with a constant number of queries.
We also explain how the proof can be adapted to show that any hereditary property of $2$-dimensional matrices over a finite alphabet (where row and column order is not ignored) is strongly testable.
The first result generalizes the result of Alon and Shapira [FOCS'05; SICOMP'08], who showed that any hereditary graph property (without vertex order) is strongly testable. 
The second result answers and generalizes a conjecture of Alon, Fischer and Newman [SICOMP'07] concerning testing of matrix properties.

The testability is proved by establishing a removal lemma for vertex-ordered graphs. It states that for any finite or infinite family $\mathcal{F}$ of forbidden vertex-ordered graphs, and any $\epsilon > 0$, there exist $\delta > 0$ and $k$ so that any vertex-ordered graph which is $\epsilon$-far from being $\mathcal{F}$-free contains at least $\delta n^{|F|}$ copies of some $F\in\mathcal{F}$ (with the correct vertex order) where $|F|\leq k$.
The proof bridges the gap between techniques related to the regularity lemma, used in the long chain of papers investigating graph testing, and string testing techniques. Along the way we develop a Ramsey-type lemma for $k$-partite graphs with ``undesirable'' edges, stating that one can find a Ramsey-type structure in such a graph, in which the density of the undesirable edges is not much higher than the density of those edges in the graph.

\end{abstract}
\end{titlepage}

\section{Introduction}
\emph{Property Testing} is dedicated to finding fast algorithms for decision problems of the following type:
Given a combinatorial structure $S$, distinguish quickly between the case where $S$
satisfies a property $\mathcal{P}$ and the case where $S$ is \emph{far} from satisfying the
property. Being far means that one needs to modify a significant fraction of the data in $S$
to make it satisfy $\mathcal{P}$.
Property Testing was first formally defined by Rubinfeld and Sudan
\cite{RubinfeldSudan1996}, and the investigation in the combinatorial context was initiated by Goldreich, Goldwasser and Ron \cite{GoldreichGoldwasserRon1998}.
This area has been very active over the last twenty years, see, e.g.\@ \cite{Goldreich2010} for various surveys on it. 

In this paper we focus on property testing of two-dimensional structures over a finite alphabet, or equivalently, two-variable functions with a fixed finite range.
Specifically, we consider graphs and matrices.
\emph{Graphs} are functions $G:\binom{V}{2} \to \{0,1\}$ where $V$ is the vertex set;
more generally \emph{edge-colored graphs} (with finite color set $\Sigma$) are functions $G:\binom{V}{2} \to \Sigma$. \emph{Matrices} over a finite alphabet $\Sigma$ (or \emph{images}) are functions $M:U \times V \to \Sigma$. In this paper we generally consider edge-colored graphs rather than standard graphs, as the added generality will prove useful later, so the term \emph{graph} usually refers to an edge-colored graph.

For a fixed finite set $\Sigma$, 
a \emph{property} of functions over $
\Sigma$ is simply a collection of functions whose range is $\Sigma$.
Specifically, an \emph{ordered graph property} is a collection of (edge-colored) graphs $G:\binom{V}{2} \to \Sigma$. An \emph{unordered graph property} is an ordered graph property that is also \emph{invariant under vertex permutations}: If $G \in \mathcal{P}$ and $\pi$ is any permutation on $V_G$, then the graph $G^{\pi}$, defined by $G^{\pi}(\pi(u) \pi(v)) = G(uv)$ for any $u \neq v \in V_G$, satisfies $G^{\pi} \in \mathcal{P}$.
Similarly, an \emph{(ordered) matrix property}, or an \emph{image property}, is a collection of functions $M:[m] \times [n] \to \Sigma$.
For simplicity, most definitions given below are only stated for graphs, but they carry over naturally to matrices.

A graph $G:\binom{[n]}{2} \to \Sigma$ is $\epsilon$-far from the property $\mathcal{P}$ if one needs to modify the value $G(ij)$ for at least $\epsilon \binom{n}{2}$ of the edges $ij$, where $ij$ denotes the (unordered) edge $\{i,j\} \in \binom{[n]}{2}$.
A \emph{tester} for the property $\mathcal{P}$ is a randomized algorithm that is given a parameter $\epsilon > 0$ and
query access to its input graph $G$.
The tester must distinguish, with error probability at most $1/3$, between the case where $G$ satisfies $\mathcal{P}$ and the case where $G$ is $\epsilon$-far from satisfying $\mathcal{P}$.
The tester is said to have \emph{one-sided error} if it always accepts inputs from $\mathcal{P}$, 
and rejects inputs that are $\epsilon$-far from $\mathcal{P}$ with probability at least $2/3$.
It is desirable to obtain testers that are efficient in terms of the \emph{query complexity} (i.e. the maximal possible number of queries made by the tester). A property $\mathcal{P}$ is \emph{strongly testable} if there is
a one-sided error tester for $\mathcal{P}$ whose query complexity is bounded by a function $Q(\mathcal{P}, \epsilon)$. In other words, the query complexity of the tester is independent of the size of the input. 

From now on, we generally assume (unless it is explicitly stated that we consider unordered graphs) that the vertex set $V$ of a graph $G$ has a total ordering (e.g. the natural one for $V = [n]$), which we denote by $<$.
The (induced) \emph{ordered subgraph} of the graph $G:
\binom{V}{2} \to \Sigma$ on $U \subseteq V$, where the elements of $U$ are $u_1 < \ldots < u_k$,
is the graph $H: [k] \to \Sigma$ which satisfies $H(ij) = G(u_i u_j)$ for any $i < j \in [k]$.
For a family $\mathcal{F}$ of ``forbidden'' graphs, the property $\mathcal{P}_{\mathcal{F}}$ of \emph{$\mathcal{F}$-freeness} consists of all graphs $G$ for which any ordered subgraph $H$ of $G$ satisfies $H \notin \mathcal{F}$.
Finally, a property $\mathcal{P}$ is \emph{hereditary} if it is closed under taking induced subgraphs. That is, for any $G \in \mathcal{P}$ and any ordered subgraph $H$ of $G$, it holds that $H \in \mathcal{P}$.
Note that a property $\mathcal{P}$ is hereditary if and only if $\mathcal{P} = \mathcal{P}_{\mathcal{F}}$ for some (finite or infinite) family $\mathcal{F}$ of graphs over $\Sigma$.

The analogous notions of ordered subgraphs, $\mathcal{F}$-freeness and hereditary properties for matrices are ``structure preserving''. Here, the \emph{ordered submatrix} of the matrix $M:[m] \times [n] \to \Sigma$ on $A \times B$, where the elements of $A$ and $B$ are $a_1 < \ldots < a_k$ and $b_1 < \ldots < b_l$, is the matrix $N:[k] \times [l] \to \Sigma$ defined by $N(i, j) = M(a_i, b_j)$ for any $i \in [k]$ and $j \in [l]$.

\subsection{Previous results on graphs and matrices}
\label{subsec:related_work}
Some of the most interesting results in property testing have been those that identify large families of properties that are efficiently testable, and those that show that large families of properties cannot be tested efficiently.

One of the most widely investigated questions in property testing has been that of characterizing the efficiently testable unordered graph properties. 
In the seminal paper of Goldreich, Goldwasser and Ron \cite{GoldreichGoldwasserRon1998} it was shown that all unordered graph properties that can be represented by a certain graph partitioning, including properties such as  $k$-colorability and having a large clique, are strongly testable. See also \cite{GoldreichTrevisan2003}.
Alon, Fischer, Krivelevich and Szegedy \cite{AlonFKS00} showed that the property of $\mathcal{F}$-freeness is strongly testable for any finite family $\mathcal{F}$ of forbidden unordered graphs (here the term \emph{unordered graphs} refers to the usual notion of graphs with no order on the vertices).
 Their main technical result, now known as the \emph{induced graph removal lemma}, is a generalization of the well-known \emph{graph removal lemma} \cite{AlonDuke1994, Szemeredi1978}. 
\begin{thm}[Induced graph removal lemma \cite{AlonFKS00}]
\label{thm:induced_graph_removal_lemma_AFKS}
For any finite family $\mathcal{F}$ of unordered graphs and $\epsilon > 0$ there exists $\delta = \delta(\mathcal{F}, \epsilon) > 0$, such that any graph $G$ which is $\epsilon$-far from $\mathcal{F}$-freeness contains at least $\delta n^q$ copies of some $F \in \mathcal{F}$ with $q$ vertices.
\end{thm}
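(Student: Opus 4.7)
The plan is to apply the strong regularity lemma of Alon, Fischer, Krivelevich and Szegedy (itself an iterated strengthening of Szemer\'edi's regularity lemma) to the input graph $G$. This produces an equipartition $\mathcal{A} = \{V_1,\ldots,V_k\}$ together with a refinement $\mathcal{B}$ such that $\mathcal{B}$ is $\epsilon_2$-regular for a very small $\epsilon_2$, while the pair densities in $\mathcal{B}$ are $\epsilon_1$-close to those of their parent pairs in $\mathcal{A}$ on all but an $\epsilon_1$-fraction of pairs. The key feature is that $\epsilon_2$ can be chosen much smaller than the pair densities that show up in $\mathcal{A}$; the parameters $\epsilon_1, \epsilon_2, k$ will ultimately be chosen as functions of $\epsilon$ and of $h := \max_{F \in \mathcal{F}} |V(F)|$.

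Next I would build a cleaned graph $G^*$ from $G$ as follows. For each pair of parts $(B_i,B_j)$ of $\mathcal{B}$: remove all edges if the pair is irregular or has density below a threshold $\eta$; make it complete bipartite if its density exceeds $1-\eta$; otherwise leave it untouched. Intra-part edges are simply removed. A standard calculation shows that, for suitable choice of $\eta$, $\epsilon_1$, and $\epsilon_2$, the total number of edge modifications is at most $(\epsilon/2)\binom{n}{2}$, since the mass contributed by irregular pairs is $O(\epsilon_1)$, by near-empty/near-full pairs is $O(\eta)$, and by intra-part edges is $O(1/|\mathcal{B}|)$.

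Because $G$ is $\epsilon$-far from $\mathcal{F}$-freeness, $G^*$ cannot be $\mathcal{F}$-free, so it contains an induced copy of some $F \in \mathcal{F}$. Locating this copy inside parts of $\mathcal{B}$, one observes (by removing intra-part edges and using that $h$ is bounded while the parts are huge) that we may assume the image of $F$ uses vertices lying in $q = |V(F)|$ \emph{distinct} parts of $\mathcal{B}$; any same-part vertex pair, being necessarily non-adjacent in $F$, can be replaced by two vertices chosen from two different parts that carry the same density profile with respect to the rest of the copy, at the cost of a negligible fraction of $n^q$ candidates. This yields a $q$-tuple of parts of $\mathcal{B}$ that are mutually $\epsilon_2$-regular, with all pairwise densities in $[\eta,1-\eta]$, and equipped with an adjacency pattern matching $F$.

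The final step is the induced counting lemma: any such regular $q$-tuple of parts contains at least $\delta n^q$ induced copies of $F$, for some $\delta = \delta(q,\eta,\epsilon_2) > 0$ independent of $n$. The principal technical obstacle throughout is that one must simultaneously control edges \emph{and} non-edges, since the copies we count are \emph{induced}; this is precisely why a single application of Szemer\'edi's regularity lemma does not suffice, and why the strong regularity lemma is used to guarantee that $\epsilon_2$ is much smaller than $\eta$, allowing the counting lemma to detect both the presence of edges in the prescribed positions and the absence of edges in the remaining positions.
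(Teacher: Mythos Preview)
The paper does not itself prove Theorem~\ref{thm:induced_graph_removal_lemma_AFKS}; it cites it from \cite{AlonFKS00} and only remarks that the original proof uses the strong regularity lemma. The paper does, however, prove the ordered generalization (Theorem~\ref{thm:finite_removal_lemma}), and that proof specializes to the unordered setting. Against either the original AFKS argument or this paper's generalization, your outline has a genuine gap.

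The problematic step is the sentence ``any same-part vertex pair, being necessarily non-adjacent in $F$, can be replaced by two vertices chosen from two different parts that carry the same density profile with respect to the rest of the copy.'' Suppose two $F$-vertices land in the same part $B_i\subseteq A_a$. You correctly observe that the corresponding $F$-edge must be a non-edge. But to relocate one of them to some other $B_j\subseteq A_a$ and then apply the counting lemma in $G$, you need $d_G(B_i,B_j)\leq 1-\eta$ so that the induced counting lemma can certify the required \emph{non}-edge. Nothing in the strong regularity lemma controls intra-$A_a$ densities: it tells you that most $d(B_i,B_j)$ are close to the internal density of $A_a$, and that density could perfectly well be above $1-\eta$. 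In that case every candidate $B_j$ you try to move to gives a pair that was turned complete in $G^*$, and there is simply no valid relocation. More bluntly, your cleaned graph $G^*$ destroyed all intra-$B$-part edges, so it can harbor a non-edge of $F$ inside a part even when the original $G$ has no low-density pair anywhere in $A_a$ to witness it.

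This is exactly the obstacle that forces the extra machinery in \cite{AlonFKS00} (and in Sections~\ref{sub:ramsey} and \ref{sec:proof_main_result} of this paper): one first extracts \emph{representative} subsets $W_i\subseteq V_i$ so that \emph{all} pairs $(W_i,W_j)$ are simultaneously very regular, and then uses a Ramsey argument to select, within each class, several sub-representatives whose mutual densities are consistent; the intra-class cleaning is then dictated by that Ramsey color, not by an ad hoc ``remove all internal edges'' rule. With that in place, whenever the cleaned graph contains an $F$-copy one is guaranteed a system of pairwise regular sets with the right two-sided density bounds, and the induced counting lemma (Lemma~\ref{lem:regularity_to_many_copies} here) applies directly. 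Your outline is missing precisely this representative-plus-Ramsey layer.
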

The original proof of Theorem \ref{thm:induced_graph_removal_lemma_AFKS} uses
a strengthening of the celebrated Szemer\'edi graph regularity lemma \cite{Szemeredi1978}, known as the \emph{strong graph regularity lemma}. 

It is clear that having a removal lemma for a family $\mathcal{F}$ immediately implies that $\mathcal{F}$-freeness is strongly testable: A simple tester which picks a subgraph $H$ whose size depends only on $\mathcal{F}$ and $\epsilon$, and checks whether $H$ contains graphs from $\mathcal{F}$ or not, is a valid one-sided tester for $\mathcal{F}$-freeness. Hence, removal lemmas have a major role in property testing. They also have implications in different areas of mathematics, such as number theory and discrete geometry. For more details, see the survey of Conlon and Fox \cite{ConlonFox2013}.

By proving a variant of the induced graph removal lemma that also holds for infinite families, 
Alon and Shapira \cite{AlonShapira2008} generalized the results of \cite{AlonFKS00}.
The infinite variant is as follows.
\begin{thm}[Infinite graph removal lemma \cite{AlonShapira2008}]
	\label{thm:infinite_induced_graph_removal_lemma_AS}
	For any finite or infinite family $\mathcal{F}$ of unordered graphs and $\epsilon > 0$ there exist $\delta = \delta(\mathcal{F}, \epsilon) > 0$ and $q_0 = q_0(\mathcal{F}, \epsilon)$, such that any graph $G$ which is $\epsilon$-far from $\mathcal{F}$-freeness contains at least $\delta n^q$ copies of some $F \in \mathcal{F}$ on $q \leq q_0$ vertices.
\end{thm}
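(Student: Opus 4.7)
My plan is to reduce Theorem~\ref{thm:infinite_induced_graph_removal_lemma_AS} to its finite counterpart, Theorem~\ref{thm:induced_graph_removal_lemma_AFKS}, by establishing a ``finite cutoff'': for every (possibly infinite) family $\mathcal{F}$ and every $\epsilon>0$ there exists $q_0=q_0(\mathcal{F},\epsilon)$ such that every graph $G$ which is $\epsilon$-far from $\mathcal{F}$-free is already at least $(\epsilon/2)$-far from $\mathcal{F}_{q_0}$-free, where $\mathcal{F}_{q_0}:=\{F\in\mathcal{F}:|V(F)|\leq q_0\}$. Once this is available, feeding the finite family $\mathcal{F}_{q_0}$ and the parameter $\epsilon/2$ into Theorem~\ref{thm:induced_graph_removal_lemma_AFKS} yields a $\delta>0$ and an $F\in\mathcal{F}_{q_0}$ with $q:=|V(F)|\leq q_0$ such that $G$ contains at least $\delta n^q$ induced copies of $F$, giving the desired conclusion.

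The main tool for the finite cutoff is a strengthening of Szemer\'edi's regularity lemma of the type used in the proof of Theorem~\ref{thm:induced_graph_removal_lemma_AFKS}: one obtains an equitable partition $V(G)=V_1\cup\cdots\cup V_k$ whose regularity quality can be made arbitrarily good with respect to any preselected auxiliary function $f(k)$, while $k$ itself stays bounded purely in terms of $\epsilon$ and $f$. I would then perform the standard cleanup: remove edges inside parts, edges in irregular pairs, and edges belonging to color classes of low density within each pair, altering in total at most $(\epsilon/2)\binom{n}{2}$ edges. The resulting graph $G^*$ is quantitatively close to a ``blow-up'' of a colored template graph $T_k$ on $[k]$, in the sense needed to apply a standard counting/embedding lemma. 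Since $G$ is $\epsilon$-far from $\mathcal{F}$-free while the cleanup costs only $\epsilon/2$, $G^*$ must still contain an induced copy of some $F\in\mathcal{F}$.

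The main obstacle is to upgrade this single induced copy into one of bounded order. The subtlety is that $\mathcal{F}$ is not assumed to be hereditary, so one cannot simply pass to an induced subgraph of $F$ and claim membership in $\mathcal{F}$. My plan is to exploit the blow-up structure of $G^*$: any induced copy of $F$ corresponds (up to regularity errors absorbed by the counting lemma) to a blow-up of some induced subgraph $T_k\!\restrict{I}$ of the template on a subset $I\subseteq[k]$, with freely adjustable blow-up sizes. For each of the at most $2^k$ choices of $I$ for which at least one blow-up of $T_k\!\restrict{I}$ lies in $\mathcal{F}$, let $s(I)$ denote the minimum order of such a blow-up; then $q_0:=\max_I s(I)$ is finite and depends only on $\mathcal{F}$ and $k$. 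The counting lemma ensures that for the $I$ achieving the witness copy of $F$, the minimum-order blow-up (of order $s(I)\leq q_0$) also embeds induced into $G^*$, so $G^*$ is not $\mathcal{F}_{q_0}$-free, and hence neither is $G$ after accounting for the $(\epsilon/2)\binom{n}{2}$ edge changes.

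Invoking Theorem~\ref{thm:induced_graph_removal_lemma_AFKS} on the finite family $\mathcal{F}_{q_0}$ with parameter $\epsilon/2$ then yields $\delta>0$ and completes the proof. The genuinely new content beyond the finite case lies in the blow-up analysis of the third paragraph, whose finiteness relies critically on the fact that the ``induced type'' of a copy in $G^*$ is parametrised by a subset of $[k]$, of which there are only finitely many, so that even though $\mathcal{F}$ is infinite only finitely many of its members are ``relevant'' to $T_k$-structured graphs.
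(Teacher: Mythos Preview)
First, a framing note: the paper does not give its own proof of this statement --- it is quoted as prior work from \cite{AlonShapira2008}. What the paper does prove is the ordered generalization (Theorem~\ref{thm:infinite_removal_lemma}), whose handling of the infinite case appears in Section~\ref{sec:proofinfinite}. I compare your proposal to that argument, since it specializes to the unordered setting.

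Your high-level idea --- after regularizing, the ``shape'' of any surviving forbidden copy is governed by a bounded-size template, there are only finitely many templates, so one may define $q_0$ as the maximum over templates of the minimum order of an $F\in\mathcal{F}$ fitting that template --- is exactly what the paper does (Definition~\ref{def:embeddability} and Lemma~\ref{lem:choosing_num_representatives}). However, your execution has two real gaps.

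\textbf{The reduction to Theorem~\ref{thm:induced_graph_removal_lemma_AFKS} does not close.} You assert that $G$ being $\epsilon$-far from $\mathcal{F}$-free implies $G$ is $(\epsilon/2)$-far from $\mathcal{F}_{q_0}$-free. Your argument only shows that the one specific cleaned graph $G^*$ contains some $F\in\mathcal{F}_{q_0}$; it does not show that \emph{every} $(\epsilon/2)$-modification of $G$ does, which is what ``$(\epsilon/2)$-far'' requires. The paper (and \cite{AlonShapira2008}) sidestep this entirely: once a small $F$ embeddable in the template is located, the counting lemma (Lemma~\ref{lem:regularity_to_many_copies}) is applied directly to the regular representative sets of the \emph{original} $G$, yielding $\delta n^{|F|}$ induced copies and finishing the proof without any black-box call to the finite theorem.

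\textbf{The ``blow-up'' description is wrong for induced copies.} After your cleanup, a regular pair $(V_i,V_j)$ of moderate density still carries both edges and non-edges; $G^*$ is not a blow-up of any simple graph $T_k$, and an induced copy of $F$ in $G^*$ is not determined by the set $I\subseteq[k]$ of parts it meets. Two copies hitting the same $I$ can realize completely different edge/non-edge patterns, so parametrizing by $I$ and speaking of ``blow-ups of $T_k\!\restrict{I}$ lying in $\mathcal{F}$'' is not meaningful here. The correct template (the paper's $\Gamma$-colored graph with loops, Definition~\ref{def:embeddability}) records for each pair of parts the \emph{set} of dense colors --- in the uncolored case, $\{\text{edge}\}$, $\{\text{non-edge}\}$, or both --- and ``embeddability'' asks that a weakly order-preserving map send each $F$-pair into an allowed color. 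The finiteness argument must be run over these richer templates, not over subsets $I$.

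Both issues are fixable along the paper's lines, but the black-box reduction to the finite case should be dropped in favor of invoking the counting lemma directly.
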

Theorem \ref{thm:infinite_induced_graph_removal_lemma_AS} directly implies that \emph{any} hereditary unordered graph property is strongly testable, exhibiting the remarkable strength of property testing.
\begin{thm}[Hereditary graph properties are strongly testable \cite{AlonShapira2008}]
	\label{thm:hereditary_unordered}
	Let $\Sigma$ be a finite set with $|\Sigma| \geq 2$. Any hereditary unordered graph property over $\Sigma$ is strongly testable.
\end{thm}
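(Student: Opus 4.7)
The plan is to derive Theorem \ref{thm:hereditary_unordered} directly from Theorem \ref{thm:infinite_induced_graph_removal_lemma_AS} via the canonical ``sample and check'' tester. Since $\mathcal{P}$ is hereditary, let $\mathcal{F}$ be the (possibly infinite) family of all unordered graphs over $\Sigma$ that are \emph{not} in $\mathcal{P}$; then $\mathcal{P} = \mathcal{P}_{\mathcal{F}}$, because $G \in \mathcal{P}$ iff no induced subgraph of $G$ is a forbidden element of $\mathcal{F}$ (the ``only if'' direction uses heredity, the ``if'' direction is trivial since $G$ is an induced subgraph of itself). Apply Theorem \ref{thm:infinite_induced_graph_removal_lemma_AS} to this $\mathcal{F}$ and the given $\epsilon$ to obtain constants $\delta = \delta(\mathcal{P}, \epsilon) > 0$ and $q_0 = q_0(\mathcal{P}, \epsilon)$.

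The tester I would propose is the natural one: on input $G$ with $n$ vertices, pick a uniformly random subset $S \subseteq V(G)$ of size $s = s(\mathcal{P},\epsilon)$ (to be specified), query all $\binom{s}{2}$ edge colors inside $S$, and accept iff the induced subgraph $G[S]$ lies in $\mathcal{P}$. The one-sidedness (completeness) is immediate: if $G \in \mathcal{P}$, then by heredity every induced subgraph of $G$ lies in $\mathcal{P}$, so the tester always accepts. Note that the acceptance condition is decidable from the queries: it suffices to check, for every $q \leq \min(s, q_0)$, whether $G[S]$ contains an induced copy of any graph in $\mathcal{F}$ on $q$ vertices; there are only finitely many graphs over $\Sigma$ on at most $q_0$ vertices, so this check depends only on $\mathcal{P}$ and $\epsilon$, not on $n$.

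For soundness, suppose $G$ is $\epsilon$-far from $\mathcal{P}$. Theorem \ref{thm:infinite_induced_graph_removal_lemma_AS} guarantees some $F \in \mathcal{F}$ on $q \leq q_0$ vertices such that $G$ contains at least $\delta n^q$ induced copies of $F$. A standard sampling argument then shows that if $s$ is chosen large enough as a function of $\delta$ and $q_0$ (for instance $s = \Theta(q_0/\delta)$ suffices, possibly with a logarithmic factor to drive the failure probability below $1/3$), then with probability at least $2/3$ the random set $S$ contains the vertex set of at least one such induced copy; in that event $G[S]$ contains $F$ as an induced subgraph, hence $G[S] \notin \mathcal{P}$, and the tester rejects. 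This gives a query complexity of $\binom{s}{2}$, which depends only on $\mathcal{P}$ and $\epsilon$.

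The only real obstacle has already been handled by Theorem \ref{thm:infinite_induced_graph_removal_lemma_AS}: without an \emph{infinite}-family removal lemma, we would not be able to reduce the hereditary property to checking finitely many forbidden patterns of bounded size, because an arbitrary hereditary property need not be characterized by a finite forbidden family. Given the infinite removal lemma in hand, the remainder is the routine sampling argument sketched above, and the role of heredity is confined to the two places noted: expressing $\mathcal{P}$ as $\mathcal{P}_{\mathcal{F}}$, and obtaining one-sidedness.
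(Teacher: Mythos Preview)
Your proposal is correct and matches the paper's approach exactly: the paper does not give a standalone proof of Theorem~\ref{thm:hereditary_unordered} but simply remarks that it follows directly from Theorem~\ref{thm:infinite_induced_graph_removal_lemma_AS} via the ``sample and check'' tester (see the sentence preceding the theorem and the earlier paragraph explaining that a removal lemma immediately yields a one-sided tester). One minor presentational point: your parenthetical about decidability conflates two slightly different tests---checking whether $G[S]\in\mathcal{P}$ versus checking whether $G[S]$ contains a forbidden subgraph of size at most $q_0$---but either test works (the first is decidable simply because $G[S]$ is a fully known finite object), and this does not affect the correctness of the argument.
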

%
Alon, Fischer, Newman and Shapira later presented \cite{AlonFischerNewmanShapira2009} a complete combinatorial characterization of the graph properties that are testable (with two-sided error) using a constant number of queries, building on results from \cite{FischerNewman2005, GoldreichTrevisan2003}. Independently, Borgs, Chayes, Lov\'asz, S\'os, Szegedy and Vesztergombi
obtained an analytic characterization of such properties through the theory of graph limits \cite{BorgsCLSSV2006}. See also \cite{Lovasz2007, Lovasz2012}.

An efficient finite induced removal lemma for binary matrices with no row and column order was obtained by Alon, Fischer and Newman \cite{AlonFischerNewman2007}. In this case, $\delta^{-1}$ is polynomial in $\epsilon^{-1}$ (where $\epsilon, \delta$ play the same roles as in the above removal lemmas). It was later shown by Fischer and Rozenberg \cite{FischerRozenberg2007} that when the alphabet is bigger than binary, the dependence of $\delta^{-1}$ on $\epsilon^{-1}$ is super-polynomial in general, and in fact testing submatrix-freeness over a non-binary alphabet is at least as hard as testing triangle-freeness in graphs, for which the dependence is also known to be super-polynomial in general \cite{Alon2001}, see also \cite{AlonBenEliezer2016}.
Actually, the main tool in \cite{AlonFischerNewman2007} is an efficient conditional regularity lemma for ordered binary matrices, and it was conjectured there that
this regularity lemma can be used to obtain a removal lemma for ordered binary matrices.
\begin{conjecture}[Ordered binary matrix removal lemma \cite{AlonFischerNewman2007}]
	\label{conj:binary_matrix}
	For any finite family $\mathcal{F}$ of ordered binary matrices and any $\epsilon > 0$ there exists $\delta = \delta(\mathcal{F}, \epsilon)$ such that any $n \times n$ binary matrix which is $\epsilon$-far
	from $\mathcal{F}$-freeness contains at least $\delta n^{a+b}$ copies of some $a \times b$ matrix from $\mathcal{F}$.
\end{conjecture}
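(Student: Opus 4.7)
The plan is to adapt the classical induced-removal strategy (Theorem~\ref{thm:induced_graph_removal_lemma_AFKS}) to the ordered setting by replacing each use of the unordered regularity lemma with an ordered variant, and by inserting a new Ramsey-type step to handle ``bad'' cells that the ordered regularity leaves behind. Throughout I view the $n\times n$ binary matrix $M$ as a bipartite structure on $[n]\times[n]$, and encode a forbidden $a\times b$ pattern $F\in\mathcal{F}$ as a $\{0,1\}$-valued constraint on a choice of $a$ row-intervals and $b$ column-intervals taken in order.

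First I would apply the conditional regularity lemma of Alon--Fischer--Newman~\cite{AlonFischerNewman2007} (or a suitable strengthening of it) to partition the row set into intervals $R_1<\cdots<R_s$ and the column set into intervals $C_1<\cdots<C_s$, so that all but a $\gamma$-fraction of pairs $(R_i,C_j)$ are regular with density $d_{ij}$. Because the parts are intervals, picking one row from each of $R_{i_1}<\cdots<R_{i_a}$ and one column from each of $C_{j_1}<\cdots<C_{j_b}$ automatically yields a submatrix preserving the original order. I then build the $s\times s$ cluster matrix $M'$ by labelling a regular cell with $d_{ij}<\eta$ as $0$, a regular cell with $d_{ij}>1-\eta$ as $1$, and every other cell as \emph{undesirable}. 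Replacing each $0$- or $1$-labelled cell of $M$ by its uniform pattern alters fewer than $\epsilon n^2/2$ entries, provided $\gamma,\eta$ are small enough in terms of $\epsilon$ and the maximum dimension of the $F\in\mathcal{F}$ considered; the cleaned matrix is therefore still $(\epsilon/2)$-far from $\mathcal{F}$-freeness and hence contains an ordered copy of some $F\in\mathcal{F}$. Projecting this copy to $M'$ yields a combinatorial template: an $a\times b$ rectangle of interval indices in which each cell carries the label prescribed by $F$.

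It remains to force the template to be entirely clean, because then regularity immediately gives $\Omega(n^{a+b})$ honest copies of $F$ in $M$ by independent random selection inside each selected interval. To achieve this I would prove the Ramsey-type lemma sketched in the abstract: in an ordered $k$-partite graph (whose parts correspond to row or column intervals) with some edges marked undesirable, there is an ordered sub-selection of parts on which a prescribed colour pattern is realised and the undesirable-edge density is no larger than globally, up to a controlled factor. Applied iteratively along the rows and columns of the template, this drives the undesirable fraction below the threshold the template can tolerate, forcing every selected cell to be clean. The main obstacle is exactly this Ramsey-type step: the standard pigeonhole arguments used in the unordered proof of Theorem~\ref{thm:induced_graph_removal_lemma_AFKS} permute parts freely, which here would destroy the interval structure needed to preserve the row and column order. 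Simultaneously keeping the selected parts as intervals, matching the prescribed $\{0,1\}$-pattern of $F$, and controlling the density of undesirable cells in the selection is the genuine new content; once it is in place, the remaining steps are a routine ordered analogue of the arguments in \cite{AlonFKS00, AlonShapira2008}.
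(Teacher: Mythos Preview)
Your proposal has a genuine gap in the treatment of cells with intermediate density. You label a regular cell as $0$ if its density is below $\eta$, as $1$ if above $1-\eta$, and as \emph{undesirable} otherwise. But a regular cell with density near $1/2$ is not undesirable at all: the counting lemma lets it supply many $0$-entries \emph{and} many $1$-entries, so it can serve either role in a copy of $F$. For a generic matrix most cells have intermediate density, so with your labeling almost every cell becomes undesirable, and a Ramsey step that merely preserves the global undesirable fraction is then powerless. The right object to attach to a regular cell is the \emph{set} of values that appear with density at least $\eta$ (namely $\{0\}$, $\{1\}$, or $\{0,1\}$); the cleaning rule keeps any entry whose value already lies in that set and only recolors the others. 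This set-valued labeling is exactly what the induced removal lemma needs already in the unordered setting of~\cite{AlonFKS00}: the only genuinely bad cells are the irregular ones, and those form only a $\gamma$-fraction.

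A second gap is the single-level interval partition together with the placement of the Ramsey step. After your cleaning, the copy of $F$ you locate may place several of its rows (or columns) inside the same interval, so the projected template is not an $a\times b$ array of distinct interval pairs and you cannot invoke the bipartite counting lemma across distinct regular blocks. The paper's fix is a two-level scheme: a coarse interval partition $I$, a fine interval refinement $I'$, and a robust (not merely regular) partition $Q'$ refining $I$. Crucially, the Ramsey step is applied \emph{before} cleaning, to the threshold graph on the small intervals of $I'$; it selects $d_{\mathcal F}$ sub-intervals inside each big interval, all pairwise carrying the same density-set matrix and with few undesirables among them. The cleaning is then dictated by this Ramsey output, so any copy found in the cleaned graph can be spread across distinct small intervals and automatically sits on regular pairs with the correct density sets. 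Applying Ramsey after the cleaning, to a single already-fixed template as you propose, cannot recover either of these features.
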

In contrast to the abundance of general testing results for two-dimensional structures with an inherent symmetry, such as unordered graphs and matrices, no similar results for \emph{ordered} two-dimensional structures (i.e. structures that do not have any underlying symmetry) have been established.
Even seemingly simple special cases, such as 
$F$-freeness for a single ordered graph $F$, or $M$-freeness for a single $2 \times 2$ ordered matrix $M$, are not known to be strongly testable in general \cite{AlonBenEliezer2016}.
A good survey on the role of symmetry in property testing is given by Sudan \cite{Sudan2010}, who suggests that the successful characterization of the strongly testable unordered graph properties is attributable to the underlying symmetry of these properties. See also \cite{GoldreichKaufman2011}.

Despite the lack of general results as above for the ordered case, 
property testing of multi-dimensional ordered structures has recently been an active area of research.
Notable examples of properties that were investigated in the setting of ordered matrices include monotonicity (see, e.g., \cite{ChakrabartySeshadhri2013, ChakrabartySeshadhri2014} for some of the recent works in the matrix setting), extensions of monotonicity such as $k$-monotonicity \cite{CanonneGGKW2016} and more generally poset properties \cite{FischerNewman2007},
visual and geometric properties of images, such as connectedness, convexity, being a half plane \cite{Raskhodnikova2003, BermanMR2016} and being a Lipschitz function \cite{AwasthiJMR2016, BlaisRY2014}, and local properties, such as consecutive pattern-freeness \cite{BenEliezerKormanReichman2017}.
Ordered graphs were less investigated in the context of property testing, but are the subject of many works in Combinatorics and other areas. See, e.g., a recent work on Ramsey-type questions in the ordered setting \cite{ConlonFLS2017}, in which it is shown that Ramsey numbers of simple ordered structures might differ significantly from their unordered counterparts.

Finally, we mention a relevant result on \emph{one-dimensional} structures. 
Alon, Krivelevich, Newman and Szegedy \cite{AlonKrivelevichNewmanSzegedy99} showed that 
regular languages are strongly testable. One can combine this result with the well-known Higman's lemma in order theory \cite{Higman1952} to show that any hereditary property of words (i.e.\@ one dimensional functions) over a finite alphabet is strongly testable.
\subsection{Our contributions}
We prove generalizations of Theorems \ref{thm:hereditary_unordered} and \ref{thm:infinite_induced_graph_removal_lemma_AS} to the ordered setting, as well as analogous results for matrices. The following result generalizes Theorem \ref{thm:hereditary_unordered}.
\begin{thm}[Hereditary properties of ordered graphs are strongly testable]
	\label{thm:hereditary_ordered_graphs}
	Fix a finite set $\Sigma$ with $|\Sigma| \geq 2$. Any hereditary ordered graph property over $\Sigma$ is strongly testable.
\end{thm}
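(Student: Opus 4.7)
The plan is to first establish an \emph{ordered infinite removal lemma} that mirrors Theorem \ref{thm:infinite_induced_graph_removal_lemma_AS} but accounts for vertex order: for any finite or infinite family $\mathcal{F}$ of ordered graphs over $\Sigma$ and any $\epsilon > 0$, there exist $\delta > 0$ and $q_0$ such that any ordered graph $G$ on $n$ vertices which is $\epsilon$-far from $\mathcal{F}$-freeness contains at least $\delta n^q$ ordered copies of some $F \in \mathcal{F}$ with $q = |V(F)| \leq q_0$. Every hereditary ordered property has the form $\mathcal{P}_{\mathcal{F}}$ for some $\mathcal{F}$, so granting this removal lemma, strong testability follows from the standard one-sided tester: sample a uniformly random subset of $q_0$ vertices (retaining their order), read the induced ordered subgraph, and reject iff it contains an ordered copy of some $F \in \mathcal{F}$ with $|V(F)| \le q_0$. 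The tester is non-adaptive, one-sided, and has constant query complexity.

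The main task is therefore the removal lemma. I would combine a Szemerédi-style color-regularity partition with the linear order on $V(G) = [n]$. After applying the regularity lemma to the edge-colored graph $G$ to obtain an equipartition $V_1, \ldots, V_T$ in which almost all pairs are regular in every color, I would refine using consecutive intervals $I_1 < \cdots < I_S$ of $[n]$ (with $S \gg T$) to form cells $W_{i,s} = V_i \cap I_s$ that simultaneously inherit regularity and lie inside a single order interval. A cleaning step---removing edges inside irregular pairs and edges lying in sparse color classes between pairs---deletes fewer than $\epsilon n^2 / 2$ edges, so since $G$ is $\epsilon$-far from $\mathcal{F}$-freeness, the cleaned structure must still contain an ordered copy of some $F \in \mathcal{F}$ of bounded size.

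To pass from a single ordered embedding to $\delta n^q$ ordered embeddings, I would invoke the Ramsey-type lemma for $k$-partite graphs with ``undesirable'' edges advertised in the abstract. Treating the cells as vertices of a $k$-partite auxiliary graph ordered by the indices $s$, the ``undesirable'' colors are those which would prevent an order-respecting embedding of $F$; the Ramsey lemma produces a sub-structure in which between each pair of parts one color dominates while the undesirable density stays controlled. Regularity-based counting inside this sub-structure then yields the desired $\delta n^q$ ordered copies of $F$. For infinite $\mathcal{F}$, I would follow an Alon--Shapira-style iteration: if no small $F$ embeds densely, then one can iteratively modify $G$ cell by cell to be $\mathcal{F}$-free at total cost below $\epsilon n^2$, contradicting $\epsilon$-farness and yielding a uniform $q_0 = q_0(\mathcal{F}, \epsilon)$.

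The central obstacle is the tension between regularity and order. Szemerédi partitions are blind to vertex order, yet every embedding of an ordered $F$ must respect it, and naive refinements into order intervals fragment regular pairs and destroy the density information needed for counting. All of the real work is concentrated in designing the Ramsey-type lemma for $k$-partite ordered graphs with undesirable edges: this is the step that bridges the regularity machinery used in the unordered setting with the ordering-aware techniques from string property testing, and it is where I expect the principal technical difficulty to lie.
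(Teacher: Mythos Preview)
Your high-level reduction is exactly the paper's: derive the ordered infinite removal lemma (Theorem~\ref{thm:infinite_removal_lemma}), then test $\mathcal{P}_{\mathcal{F}}$ by sampling a random ordered subgraph of size $q_0$. You also correctly flag the regularity--order tension and the Ramsey-type lemma as the crux. But two of your technical steps, as written, would fail.

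First, the assertion that the cells $W_{i,s}=V_i\cap I_s$ ``inherit regularity'' is false: with $S\gg T$ each cell is a tiny fraction of its $V_i$, and regularity does not pass to small subsets. The paper never tries to make these intersections regular. Instead it works with a \emph{robust} partition $Q'$ (strictly stronger than regular) refining a two-level interval scheme $I'\prec I$, and then extracts \emph{representatives}---small subsets $W_{jrs}$ of the cells of $Q''=Q'\sqcap I'$ that are pairwise regular and whose color densities approximate those of the coarse parts of $Q'$. The existence of such representatives requires robustness, not mere regularity, and the two-level interval structure (with each part of $Q'$ intersecting every sub-interval of its big interval ``nicely'') is what makes order-respecting embeddings possible later. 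Your single interval refinement with $S\gg T$ misses this.

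Second, your description of the Ramsey step reverses both its placement and its meaning. The paper does not clean first via naive regularity and then apply Ramsey to count; rather, it first builds a $\Gamma$-colored \emph{threshold graph} on the small intervals (colors are $t\times t$ density matrices between representatives), declares an edge \emph{undesirable} when its density matrix deviates from that of the enclosing big-interval pair, applies the Ramsey lemma to get a monochromatic subgraph $D$ with few undesirables, and only then cleans $G$ as dictated by $D$. The bound on undesirables is precisely what makes the cleaning change fewer than $\epsilon\binom{n}{2}$ colors; a surviving copy of some $F$ in the cleaned graph then routes through the representatives and the counting lemma. Your ``undesirable colors are those which would prevent an order-respecting embedding of $F$'' is not what undesirability means here. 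For the infinite case, the paper's device is also more direct than an iteration on $G$: since the density structure $D'$ on $[m]$ over $\Gamma(\Sigma,t)$ has only finitely many isomorphism types, one sets $d_{\mathcal{F}}(m,t)$ to be the maximum over all such $D'$ (into which $\mathcal{F}$ embeds) of the size of the smallest embeddable $F\in\mathcal{F}$, and runs the finite argument with this bound in place of $d_{\mathcal{F}}$.
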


To prove Theorem \ref{thm:hereditary_ordered_graphs}, we establish an \emph{order-preserving} induced graph removal lemma, which holds for finite and infinite families of ordered graphs. This is a generalization of Theorem \ref{thm:infinite_induced_graph_removal_lemma_AS}.

\begin{thm}[Infinite ordered graph removal lemma]
	\label{thm:infinite_removal_lemma}
	Fix a finite set $\Sigma$ with $|\Sigma| \geq 2$.
	For any (finite or infinite) family $\mathcal{F}$ of ordered graphs $F:\binom{[n_F]}{2} \to \Sigma$ 
	and any $\epsilon > 0$ there exist $q_0 = q_0(\mathcal{F}, \epsilon)$ and
	$\delta = \delta(\mathcal{F}, \epsilon) > 0$, such that any ordered graph $G:\binom{[n]}{2} \to \Sigma$ that is
	$\epsilon$-far from $\mathcal{F}$-freeness contains at least $\delta n^q$ induced copies of some graph $F \in \mathcal{F}$ on $q \leq q_0$ vertices.
\end{thm}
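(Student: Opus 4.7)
My plan is to adapt the proof strategy of Alon and Shapira for Theorem \ref{thm:infinite_induced_graph_removal_lemma_AS} (the unordered infinite removal lemma) to the ordered setting, with two substantial new ingredients: an order-respecting application of Szemer\'edi-type regularity, and an ordered Ramsey-type cleaning step that appears to be the real technical core (and is advertised explicitly in the abstract). As a first step, I would apply a sufficiently strong regularity lemma to $G$ after refining any equipartition into consecutive intervals $I_1<I_2<\cdots<I_k$ of $[n]$, obtaining a bounded $k$ so that all but an $\epsilon'$-fraction of the pairs $(I_i,I_j)$ are $\epsilon'$-regular in every color $c\in\Sigma$ simultaneously. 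I would then form a ``cleaned'' reduced structure by assigning to each regular pair $(I_i,I_j)$ the set of colors whose density exceeds some threshold $\eta=\eta(\epsilon,|\Sigma|)$; this defines a colored reduced blueprint on $[k]$ in the order inherited from $[n]$.

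The standard regularity-based argument shows that if the blueprint admitted no $F\in\mathcal{F}$ as an order-preserving ``template'' (choosing one admissible color on each pair), then by altering only $O(\epsilon' n^2+\eta n^2)$ edges one could make $G$ genuinely $\mathcal{F}$-free, contradicting $\epsilon$-farness once $\epsilon',\eta$ are chosen small enough. Hence the blueprint contains some $F\in\mathcal{F}$ with $q=|V(F)|$ vertices, mapped to a sequence $i_1\le i_2\le\cdots\le i_q$ of blocks (repetitions allowed because several vertices of $F$ may share an interval). The infinite case is handled as in Alon--Shapira: the bound on $q$ comes out of the regularity parameters, so $q\le q_0(\mathcal{F},\epsilon)$ automatically.

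The hard step is the counting/embedding: I need to produce $\delta n^q$ induced copies of this $F$ inside $G$ with the correct vertex order. In the unordered case one freely permutes within parts, but here the order inside each interval $I_{i_t}$ must be respected, and when several vertices of $F$ land in the same block, they are forced into a prescribed relative position. This is exactly where the Ramsey-type lemma for ordered $k$-partite graphs with undesirable edges is needed. I would define, for each ordered tuple of blocks used by the embedding and each pair within it, the ``undesirable'' edges as those not carrying the chosen blueprint color, so that the total fraction of undesirable edges is at most $\eta$ by regularity. The lemma should then produce, inside each block $I_{i_t}$, an ordered subset $S_t$ of size $\Omega(n)$ such that across every relevant pair $(S_t,S_{t'})$ the undesirable edges remain of density at most a small function of $\eta$, and moreover inside each $S_t$ an ordered Ramsey-type ``ladder'' is available to realize the prescribed intra-block positions. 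Once such a structure is in hand, a routine counting argument (picking any order-increasing $q$-tuple of vertices with one in each $S_t$) yields $\delta n^q$ induced copies of $F$ with the correct ordering.

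The chief obstacle, as noted, is proving the ordered Ramsey-type lemma itself: one must show that undesirability density survives the passage to a Ramsey sub-structure. I would attempt this by iteratively picking a representative vertex in each part, colouring the remaining vertices in each part by their edge-profile to already-chosen representatives, and using a pigeonhole/Ramsey argument on these colourings, while a double-counting argument tracks the undesirability budget. Small technical issues to watch out for include: (i) ensuring the embedding lemma produces \emph{induced} copies, which requires the blueprint to fix a single color (not merely an admissible set) on each embedded pair---this is done by refining the blueprint before invoking Ramsey; (ii) handling the pairs $(I_i,I_i)$ when several vertices of $F$ lie in a single block, which is the genuinely ``one-dimensional'' part of the argument and where a Higman/string-testing-style tool (as alluded to in the abstract) may be needed; and (iii) quantifying all parameters so that $\delta$ depends only on $\mathcal{F}$ and $\epsilon$, not on $n$.
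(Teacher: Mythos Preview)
Your high-level picture (regularity, reduced blueprint, cleaning, Ramsey, counting) is right, but the scheme you propose is too coarse and the Ramsey step is misplaced in a way that creates a genuine gap.

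First, partitioning $[n]$ into intervals $I_1<\cdots<I_k$ and applying regularity to the interval pairs is not enough. The point you flag in (ii)---multiple vertices of $F$ landing in the same interval---is not a side issue but the central difficulty, and it cannot be patched by a one-dimensional string tool after the fact. The paper's solution is a two-layer scheme: an interval equipartition $I$ of size $m$, a \emph{non-interval} robust equipartition $Q'$ of size $mt$ that refines $I$ (so each big interval contains $t$ parts of $Q'$), and a much finer interval refinement $I'$ of $I$ into $mb$ sub-intervals, arranged so that $Q''=Q'\sqcap I'$ is itself an equipartition into $mbt$ parts. The parts of $Q'$ inside a single big interval are ``completely interwoven'' with the sub-intervals, which is exactly what lets you realise any prescribed order among vertices mapped to the same big interval. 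Building this requires a separate robustness argument for the string $S_P$ induced by the base partition, together with a delicate rounding step---none of which is visible in your outline.

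Second, the Ramsey-type lemma is not used where you put it. You propose applying it inside $G$ to extract ordered subsets $S_t\subseteq I_{i_t}$ of size $\Omega(n)$ with controlled undesirability. In the paper the Ramsey lemma is applied to a \emph{constant-size} auxiliary graph $H_W^\eta$: its vertices are the $mb$ sub-intervals of $I'$, and its edge-colours are $t\times t$ density matrices recorded between tuples of representing subsets. ``Undesirable'' means a sub-interval pair whose density matrix deviates from that of the enclosing big-interval pair; robustness of $Q'$ bounds the undesirable fraction. Ramsey then selects $d_{\mathcal F}$ sub-intervals per big interval so that all between-part colours are constant and undesirability stays small. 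This uniform constant-size object dictates the cleaning of $G$. The $\Omega(n)$-size sets used for counting are the representing subsets $W_{jrs}\subseteq Q''$, obtained from strong regularity (Lemma~\ref{lem:representatives1}), not from Ramsey. Your version---running Ramsey on linear-size vertex sets---would need bounds that do not exist.

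Finally, for the infinite case: $q_0$ does not fall out of the regularity parameters automatically. You need a compactness step (Lemma~\ref{lem:choosing_num_representatives}): there are only finitely many possible $\Gamma(\Sigma,t)$-coloured graphs on $[m]$, so one defines $d_{\mathcal F}(m,t)$ as the largest ``smallest witness'' over those in which $\mathcal F$ is embeddable, and this replaces $d_{\mathcal F}$ throughout.
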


An analogue of Theorem \ref{thm:hereditary_ordered_graphs} for matrices is also proved. 
\begin{thm}[Hereditary properties of ordered matrices are strongly testable]
	\label{thm:hereditary_ordered_matrices}
	Fix a finite set $\Sigma$ with $|\Sigma| \geq 2$. Any hereditary (ordered) matrix property over $\Sigma$ is strongly testable.
\end{thm}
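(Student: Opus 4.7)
The plan is to parallel the derivation of Theorem \ref{thm:hereditary_ordered_graphs} from Theorem \ref{thm:infinite_removal_lemma}, by proving an analogous removal lemma for ordered matrices and deducing the testing result from it in the standard way. Specifically, I would first establish an \emph{ordered matrix removal lemma}: for any (finite or infinite) family $\mathcal{F}$ of ordered matrices over $\Sigma$ and any $\epsilon > 0$, there exist $q_0 = q_0(\mathcal{F},\epsilon)$ and $\delta = \delta(\mathcal{F},\epsilon) > 0$ such that any matrix $M\colon [m]\times[n]\to\Sigma$ that is $\epsilon$-far from $\mathcal{F}$-freeness contains at least $\delta\, m^a n^b$ ordered copies of some $a\times b$ matrix $F\in\mathcal{F}$ with $a+b\le q_0$. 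Given this, Theorem \ref{thm:hereditary_ordered_matrices} follows by the now-standard argument: sample $O(q_0)$ rows and $O(q_0)$ columns uniformly at random, examine the induced submatrix, and reject iff it contains an ordered copy of some $F\in\mathcal{F}$.

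To prove the matrix removal lemma, I would adapt each ingredient from the ordered-graph proof. In place of a regularity lemma that partitions the vertex set into contiguous order-intervals, use an ordered matrix regularity lemma partitioning $[m]$ into contiguous intervals $R_1<\cdots<R_p$ and $[n]$ into contiguous intervals $C_1<\cdots<C_{p'}$ so that almost all blocks $R_i\times C_j$ are $\epsilon'$-regular with respect to each color of $\Sigma$ (a conditional statement of this flavor over the binary alphabet appears in \cite{AlonFischerNewman2007} and can be lifted to a general finite alphabet by a standard product argument). In place of the Ramsey-type lemma for $k$-partite graphs with undesirable edges that is the key technical contribution of the graph proof, prove its bipartite-grid analogue: in an $(a,b)$-grid of regular blocks whose density vectors label each block, one can find, after deleting a small fraction of rows and columns and passing to a suitable sub-grid, a structure in which the density of ``undesirable'' labels is not much larger than in the original grid. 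The rest of the argument, which shows that a matrix $\epsilon$-far from $\mathcal{F}$-freeness must contain many copies of some $F\in\mathcal{F}$, then transfers from the graph proof essentially verbatim, with the blocks of the regular partition playing the role of the parts in the $k$-partite graph setting.

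I expect the main obstacle to be the matrix analogue of the Ramsey-type lemma. Matrices are inherently bipartite, which simplifies some aspects, but the row and column partitions refine independently, so the Ramsey-type argument has to be carried out simultaneously along two orderings and must guarantee that undesirable density patterns remain controlled on both sides. A natural approach is to alternate row-side and column-side applications of the one-sided Ramsey-type lemma, paying careful attention to how the parameters degrade in each step, so that the final sub-grid still has size at least $q_0$ on both sides. Once such a two-sided Ramsey-type statement is in hand, the remainder of the proof is a direct translation of the graph argument, with the alphabet $\Sigma$ playing the role of the edge colors and the ordered copy-count $m^a n^b$ replacing $n^q$.
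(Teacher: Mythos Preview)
Your high-level plan---prove an ordered matrix removal lemma and deduce testability by random sampling---matches the paper exactly. The divergence is in how the matrix removal lemma itself is obtained.

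You propose to rebuild the entire machinery in a genuinely bipartite setting: a row/column interval regularity scheme, and a two-sided ``grid'' analogue of the Ramsey-type lemma that controls undesirable blocks along both orderings simultaneously. The paper instead \emph{reduces} to the ordered-graph case already proved. Given $M:U\times V\to\Sigma$, it introduces one new colour $\sigma_0\notin\Sigma$ and forms the $(\Sigma\cup\{\sigma_0\})$-coloured ordered graph $G$ on the single ordered vertex set $U\cup V$ (all of $U$ before all of $V$), with $c(uv)=M(u,v)$ for $u\in U,\ v\in V$ and $c=\sigma_0$ on all edges inside $U$ or inside $V$. One then runs the ordered-graph proof verbatim on $G$. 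The only tweak is to arrange that the coarse interval partition $I$ ``respects the middle'', so that every part lies entirely in $U$ or entirely in $V$; this guarantees the cleaning step never recolours an edge to or from $\sigma_0$, and hence corresponds to a legal modification of the matrix. For non-square matrices one keeps two separate interval equipartitions (one for rows, one for columns), but the rest of the argument is unchanged.

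The payoff of the paper's route is that the obstacle you flag as the main difficulty---a two-ordering Ramsey lemma with controlled undesirability on both sides---simply does not arise: after the reduction there is a single linear order on $U\cup V$, so the $k$-partite Ramsey lemma already proved applies without modification. Your approach is not wrong and would presumably go through, but it reproves a substantial amount of machinery that the reduction lets you reuse for free. The one thing to be careful about if you did pursue your route is that a plain block-regularity partition (as in \cite{AlonFischerNewman2007}) is not enough for an \emph{induced} removal lemma; you would still need the full robust-partition-plus-representatives scheme, carried out separately on the row and column sides.
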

As in the case of ordered graphs, to prove Theorem \ref{thm:hereditary_ordered_matrices}
we establish the following ordered matrix removal lemma, which holds for finite and infinite families
of matrices, and settles a generalized form of Conjecture \ref{conj:binary_matrix}.
\begin{thm}[Infinite ordered matrix removal lemma]
	\label{thm:infinite_matrix_removal_lemma}
	Fix a finite set $\Sigma$ with $|\Sigma| \geq 2$.
	For any (finite or infinite) family $\mathcal{F}$ of ordered matrices over $\Sigma$
	and any $\epsilon > 0$ there exist $q_0 = q_0(\mathcal{F}, \epsilon) > 0$ and $\delta = \delta(\mathcal{F}, \epsilon) > 0$, such that any ordered matrix over $\Sigma$ that is
	$\epsilon$-far from $\mathcal{F}$-freeness contains at least $\delta n^{q + q'}$ copies of some $q \times q'$ matrix $F \in \mathcal{F}$, where $q, q' \leq q_0$.
\end{thm}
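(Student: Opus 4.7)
My plan is to derive Theorem \ref{thm:infinite_matrix_removal_lemma} from Theorem \ref{thm:infinite_removal_lemma} by encoding ordered matrices as edge-colored ordered graphs. Given $M:[m]\times[n]\to\Sigma$, associate with it the ordered graph $G_M$ on vertex set $[m+n]$ over the enlarged alphabet $\Sigma'=\Sigma\sqcup\{\star_R,\star_C\}$: every edge inside $[1,m]$ is colored $\star_R$, every edge inside $[m+1,m+n]$ is colored $\star_C$, and every cross edge $\{i,m+j\}$ is colored $M(i,j)$. Encode each $F\in\mathcal{F}$ of size $q\times q'$ in the same way as a graph $G_F$, and set $\mathcal{F}_0=\{G_F:F\in\mathcal{F}\}$. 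Let $\mathcal{F}_{\mathrm{str}}$ be the finite set of ordered graphs on three vertices over $\Sigma'$ whose edge colors are inconsistent with any ``canonical'' partition into an initial $\star_R$-clique and a terminal $\star_C$-clique with $\Sigma$-colored crossings. Finally, set $\mathcal{F}'=\mathcal{F}_0\cup\mathcal{F}_{\mathrm{str}}$.

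Since $\star_R$-edges in $G_M$ appear only inside $[1,m]$ and $\star_C$-edges only inside $[m+1,m+n]$, any induced copy of $G_F\in\mathcal{F}_0$ in $G_M$ must place its first $q$ vertices in $[1,m]$ and its last $q'$ vertices in $[m+1,m+n]$, yielding a bijection with induced copies of $F$ in $M$. Moreover $G_M$ is itself canonical and thus $\mathcal{F}_{\mathrm{str}}$-free, so every induced copy in $G_M$ of a graph from $\mathcal{F}'$ comes from $\mathcal{F}_0$. Granted the key distance-preservation claim that $M$ being $\epsilon$-far from $\mathcal{F}$-freeness implies $G_M$ is $\epsilon'$-far from $\mathcal{F}'$-freeness for some $\epsilon'=\Omega(\epsilon)$, applying Theorem \ref{thm:infinite_removal_lemma} to $G_M$ and $\mathcal{F}'$ produces $\delta(m+n)^{q+q'}$ induced $G_F$-copies with $q+q'\le q_0$, which descend to $\delta'\,m^q n^{q'}$ induced copies of the corresponding $F\in\mathcal{F}$ in $M$.

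The main obstacle is proving distance preservation. Take a nearest $\mathcal{F}'$-free graph $G''$ to $G_M$; by $\mathcal{F}_{\mathrm{str}}$-freeness, $G''$ is canonical with some split point $k$. A direct edge count shows that shifting the partition from $m$ to $k$ forces at least $\Omega(|k-m|(m+n))$ edge modifications, so $|k-m|=O(\epsilon'(m+n))$. When $k=m$, the bipartite block of $G''$ is already an $m\times n$ $\mathcal{F}$-free matrix that differs from $M$ in at most $\epsilon'\binom{m+n}{2}$ entries, which for small enough $\epsilon'$ contradicts $M$ being $\epsilon$-far. The delicate case is $k\ne m$: one must convert the $k\times(m+n-k)$ bipartite block $N$ of $G''$ into an $m\times n$ matrix $\tilde M$ that remains $\mathcal{F}$-free and lies within $O(\epsilon'(m+n)^2)$ of $M$. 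The plan is to align $N$ with $M$ on the common $(m-c)\times n$ block, where $c=|k-m|$, and extend by repeating a carefully chosen row or column of $N$; alternatively, enlarge $\mathcal{F}_{\mathrm{str}}$ to a finite family of constant-size forbidden patterns that detect and penalize partition shifts, forcing $k=m$ in any optimal $\mathcal{F}'$-free neighbor. Verifying that such an extension preserves $\mathcal{F}$-freeness for an arbitrary (possibly infinite) hereditary family $\mathcal{F}$, without introducing new forbidden submatrices, is where the hardest work lies.
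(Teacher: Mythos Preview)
Your plan is precisely the black-box reduction that the paper warns against at the end of Subsection~\ref{subsec:adapt_matrices}: ``one cannot use Theorem~\ref{thm:infinite_removal_lemma} as a black box to prove Theorem~\ref{thm:infinite_matrix_removal_lemma}, as the distance of the graph $G$ to $\mathcal{F}$-freeness might be significantly smaller than~$\epsilon$.'' Your addition of the structural family $\mathcal{F}_{\mathrm{str}}$ forces the nearest $\mathcal{F}'$-free graph $G''$ to be canonical, but it does not pin down the split point, and the gap you yourself flag in the $k\neq m$ case is real. Neither of your proposed fixes works: repeating a row of $N$ does \emph{not} preserve $\mathcal{F}$-freeness for general hereditary $\mathcal{F}$ (e.g.\ take $\mathcal{F}$ to be the single $2\times 1$ matrix $\begin{smallmatrix}a\\a\end{smallmatrix}$; then any row containing an $a$ cannot be duplicated), and no finite family of constant-size patterns can force $k=m$, since $m$ is an input-dependent global quantity.

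There is a second, independent obstruction for non-square matrices. If $m=o(n)$ then ``$\epsilon$-far'' for $M$ means $\epsilon mn$ entry changes, whereas ``$\epsilon'$-far'' for $G_M$ means $\epsilon'\binom{m+n}{2}=\Theta(\epsilon' n^2)$ edge changes. Even in the ideal case $k=m$ you only get $\epsilon'\le 2\epsilon mn/(m+n)^2=o(1)$, so Theorem~\ref{thm:infinite_removal_lemma} gives nothing. The paper avoids both issues by \emph{not} invoking Theorem~\ref{thm:infinite_removal_lemma} as a black box: it reruns the proof itself, arranging (via the ``respects the middle'' clause of Lemma~\ref{lem:main_equiparts}) that every interval $I_j$ lies entirely within $U$ or entirely within $V$, so that the cleaning step of Definition~\ref{def:cleaned_graph} never recolors an edge to or from $\sigma_0$; the cleaned graph then automatically corresponds to an $m\times n$ matrix and the distance argument goes through directly. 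For non-square matrices the paper further adapts Lemma~\ref{lem:mainpartscheme} to build separate interval equipartitions for rows and columns.
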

Actually, the proof of Theorem \ref{thm:infinite_matrix_removal_lemma} is almost identical to that of Theorem \ref{thm:infinite_removal_lemma}, so we only describe what modifications are needed to make the proof of Theorem \ref{thm:infinite_removal_lemma} also work here, for the case of square matrices. However, all proofs can be adapted to the non-square case as well. 
An outline for the proof of the graph case is given in Section \ref{sec:outline}, and all of the sections after it are dedicated to the full proof.
The needed modifications for the matrix case appear in Subsection \ref{subsec:adapt_matrices}.

To the best of our knowledge, Theorems \ref{thm:hereditary_ordered_graphs} and \ref{thm:hereditary_ordered_matrices} are the first known testing results of this type for ordered two-dimensional structures, and Theorems \ref{thm:infinite_removal_lemma} and \ref{thm:infinite_matrix_removal_lemma} are the first known order-preserving removal lemmas for two-dimensional structures. 

It is interesting to note that some of the properties mentioned in Subsection \ref{subsec:related_work}, such as monotonicity, $k$-monotonicity, and forbidden-poset type properties in matrices, are hereditary (as all of them can be characterized by a finite set of forbidden submatrices), so Theorem \ref{thm:hereditary_ordered_matrices} gives a new proof that these properties, and many of their natural extensions, are strongly testable. 
Naturally, our general testers are much less efficient than the testers
specifically tailored for each of these properties (in terms of dependence
of the underlying constants on the parameters of the problem), but the
advantage of our result is its generality, that is, the fact that it
applies to any hereditary property. Thus, for example, for any fixed
ordered graph $H$ and any integer $k$, the property that an ordered graph
$G$ admits a $k$-edge coloring with no monochromatic (ordered) induced
copy of $H$ is strongly testable. As mentioned above Ramsey properties
of this type have been considered in the Combinatorics literature,
see \cite{ConlonFLS2017} and the references therein.  Another family
of examples includes properties of (integer) intervals on the line.
Any interval can be encoded by an edge connecting its two endpoints,
where the order on the vertices (the endpoints) is the usual order on
the real line.  A specific example of a hereditary property is that
the given set of intervals is closed under intersection. The forbidden
structure is a set of $4$ vertices $i<j<k<l$ where $ik$ and $jl$ are edges
(representing intervals) whereas $jk$ is a non-edge.

Finally, there are various examples of unordered hereditary graph properties that have simple 
representations using a small \emph{finite} forbidden family of \emph{ordered} subgraphs, while in the unordered representation, the forbidden family is infinite. 
Some examples of such properties are bipartiteness, being a chordal graph, and being a strongly chordal graph  \cite{Damaschke1990, BrandstadtLeSpinrad1999}.  
For such properties, when the input graph is supplied with the ``right'' ordering of the vertices, one can derive the strong testability using the version of Theorem \ref{thm:infinite_removal_lemma} for \emph{finite} families of forbidden ordered subgraphs -- see Theorem \ref{thm:finite_removal_lemma} below -- instead of using the infinite unordered version, Theorem \ref{thm:infinite_induced_graph_removal_lemma_AS}.

\subsection{Discussion and open questions}
\label{subsec:open_questions}
Several possible directions for future research follow from our work.

\subsubsection*{Dependence between the parameters of the ordered removal lemmas}
Our proofs rely heavily on strong variants of the graph regularity lemma.
Regularity-based proofs generally have a notoriously bad dependence between the parameters of the problem. In the notation of Theorem \ref{thm:infinite_removal_lemma}, for a fixed finite family $\mathcal{F}$ of forbidden ordered subgraphs, $\delta^{-1}$ is generally very large in terms of $\epsilon^{-1}$, meaning that the number of queries required for the corresponding tester for such properties is very large in terms of $\epsilon^{-1}$.
Indeed, the original Szemer\'edi regularity lemma imposes a tower-type dependence between these parameters \cite{Gowers1997, LovaszFox2014, MoshkovitzShapira2016}, while the variant we use is at least as strong (and at least as expensive) as the strong regularity lemma \cite{AlonFKS00}, which is known to have a wowzer (tower of towers) type dependence between its parameters \cite{ConlonFox2012, KalyanasundaramShapira2012}. 
Note that for infinite families $\mathcal{F}$ the dependence between the parameters may be arbitrarily bad \cite{AlonShapira2008-2}. 

In a breakthrough result of Fox \cite{Fox2011}, the first known proof for the (unordered) graph removal lemma that does not use the regularity lemma is given. However, the dependence between the parameters there is still of a tower type. In any case, it will be interesting to try to obtain a proof for the ordered case, that does not go through the strong regularity needed in our proof.

\subsubsection*{Better dependence for specific properties}
As discussed in Subsection \ref{subsec:related_work}, for ordered binary matrices there is an efficient conditional regularity lemma \cite{AlonFischerNewman2007}, in which the dependence of $\delta^{-1}$ on $\epsilon^{-1}$ is polynomial. It will be interesting to try to combine the ideas from our proof with this binary matrix regularity lemma, to obtain a removal lemma for finite families of ordered binary matrices with better dependence between the parameters. Ideally, one hopes for a removal lemma with polynomial dependence, but even obtaining such a lemma with, say, exponential dependence will be interesting.

More generally, it will be interesting to find large families of hereditary ordered graph or matrix properties that have more efficient testers than those obtained from our work. See, e.g., \cite{GishbolinerShapira2016} for recent results of this type for unordered graph properties.

\subsubsection*{Characterization of strongly testable ordered properties}
For unordered graphs, Alon and Shapira \cite{AlonShapira2008} showed that a property is strongly testable using an \emph{oblivious} one-sided tester, which is a tester whose behavior is independent of the size of the input, \emph{if and only if} the property is (almost) hereditary.
It will be interesting to obtain similar characterizations in the ordered case.

More generally, in the ordered case there are other general types of properties that may be of interest. Ben-Eliezer, Korman and Reichman \cite{BenEliezerKormanReichman2017} recently raised the question of characterizing the efficiently testable \emph{local properties}, i.e., properties that are characterized by a collection of forbidden local substructures.
It will also be interesting to identify and investigate large classes of \emph{visual} (or \emph{geometric}) properties.
Due to the lack of symmetry, obtaining a complete characterization of the efficiently testable properties of ordered graphs and matrices seems to be very difficult. In fact, considering that all properties whatsoever can be formulated as properties of ordered structures (e.g. strings), any characterization here will have to define and refer to some ``graphness'' of our setting, even that we do not allow the graph symmetries.

\subsubsection*{Generalization to ordered hypergraphs and hypermatrices}
It will be interesting to obtain similar removal lemmas (and consequently, testing results) for the high-dimensional analogues of ordered graphs and matrices, namely ordered $k$-uniform hypergraphs and $k$-dimensional hypermatrices. Such results were proved for unordered hypergraphs \cite{RodlSkokan2004, NagleRodlSchacht2006, Tao2006}.

\subsubsection*{Analytic analogues via graph limits}
The theory of graph limits has provided a powerful approach for problems of this type in the unordered case \cite{BorgsCLSSV2006, Lovasz2007, Lovasz2012}. It will be interesting to define and investigate a limit object for ordered graphs; this may also help with the characterization question above.

\section{Paper outline}
\label{sec:outline}

A proof of a graph removal lemma typically goes along the following lines: First, the vertex set of the graph is partitioned into a ``constant'' (not depending on the input graph size itself) number of parts, and a corresponding {\em regularity scheme} is found. The regularity scheme essentially allows that instead of considering the original graph, one can consider a very simplified picture of a constant size structure approximately representing the graph. On one hand, the structure has to approximate the original graph in the sense that we can ``clean'' the graph, changing only a small fraction of the edges, so that the new graph will not contain anything not already ``predicted'' by the representing structure. On the other hand, the structure has to be ``truthful'', in the sense that everything predicted by it in fact already exists in the graph.

In the simplest case, just a regular partition given by the original Szemer\'edi Lemma would suffice. More complex cases, like \cite{AlonFKS00,AlonShapira2008}, require a more elaborate regularity scheme. In our case, Section \ref{sec:regscheme} provides a regularity scheme that addresses both edge configuration and vertex order, combining a graph regularity scheme with a scheme for strings.

Given a regularity scheme, we have to provide the graph cleaning procedure, as well as prove that if the cleaned graph still contains a forbidden subgraph, then the original graph already contains a structure containing many such graphs (this will consist of some vertex sets referenced in the regularity scheme). In Section \ref{sec:proof_main_result} we show how to use the scheme to prove the removal lemma and the testability theorem for the case of a finite family $\mathcal{F}$ of forbidden subgraphs, while in Section \ref{sec:proofinfinite} we show how to extend it for the case of a possibly infinite family $\mathcal{F}$. The latter section also contains a formal definition of what it means for the regularity scheme to predict the existence of a forbidden subgraph, while for the finite case it is enough to keep it implicit.

To extract the regularity scheme we need two technical aids. One of which, in Subsection \ref{sub:rounding}, is just a rounding lemma that allows us to properly use integer quantities to approximate real ones. While in many works the question of dealing with issues related to the divisibility of the number of vertices is just hand-waved away, the situation here is complex enough to merit a formal explanation of how rounding works.

In Subsection \ref{sub:ramsey} we develop a Ramsey-type theorem that we believe to be interesting in its own right. The use of Ramsey-type theorems is prevalent in nearly all works dealing with regularity schemes, as a way to allow us to concentrate only on ``well-behaved'' structures in the scheme when we are about to clean the graph. Because of the extra complication of dealing with vertex-ordered graphs, we cannot just find Ramsey-type instances separately in different parts of the regularity scheme. Instead, we need to find the well-behaved structure ``all at once'', and furthermore assure that we avoid enough of the ``undesirable'' parts where the regularity scheme does not reflect the graph. The fraction of undesirable features, while not large, must not depend on any parameters apart from the original distance parameter $\epsilon$ (and in particular must not depend on the size of the regularity scheme), which requires us to develop the new Ramsey-type theorem.

Roughly speaking, the theorem states the following: If we have a $k$-partite edge-colored graph with sufficiently many vertices in each part, then we can find a subgraph where the edges between every two parts are of a single color (determined by the identity of the two parts). However, we do it in a way that satisfies another requirement: If additionally the original graph is supplied with a set of ``undesirable'' edges comprising an $\alpha$ fraction of the total number of edges, then the subgraph we find will include not more than an $(1+\eta)\alpha$ fraction of the undesirable edges, for an $\eta$ as small as we would like (in our application $\eta=1$ will suffice).

\subsection{Finding a regularity scheme}
\label{subsec:finding_scheme}
To prove the removal lemma we need a {\em regularity scheme}, that is a sequence of vertex sets whose ``interaction'' with the graph edges, and in our case also the graph vertex order, allows us to carry a cleaning procedure using combinatorial lemmas.

Historically, in the case of properties like triangle-freeness in ordinary graphs, a regular equipartition served well enough as a regularity scheme. One needs then to just remove all edges that are outside the reach of regularity, such as edges between the sets that do not form regular pairs. When moving on to more general properties of graphs, this is not enough. We need a robust partition (see \cite{FischerNewman2005}) instead of just a regular one, and then we can find a subset in each of the partition sets so that these ``representative'' sets will all form regular pairs. This allows us to decide what to do with problem pairs, e.g. whether they should become complete bipartite graphs or become edgeless (we also need to decide what happens {\em inside} each partition set, but we skip this issue in the sketch).

For vertex ordered graphs, a single robust partition will not do. The reason is that even if we find induced subgraphs using sets of this partition, there will be no guarantees about the vertex order in these subgraphs. The reason is that the sets of the robust partition could interact in complex ways with regards to the vertex order. Ideally we would like every pair of vertex sets to appear in one of the following two possible ways:
Either one is completely before the other, or the two are completely ``interwoven''.

To interact with the vertex order, we consider the robust partition along with a secondary interval partition. If we consider what happens between two intervals, then all vertices in one of the intervals will be before all vertices in the other one. This suggests that further dividing a robust partition according to intervals is a good idea. However, we also need that inside each interval, the relevant robust partition sets will be completely interwoven. In more explicit terms, we will consider what happens when we intersect them with intervals of a {\em refinement} of the original interval partition. If these intersections all have the ``correct'' sizes in relation of the original interval (i.e., a set that intersects an interval also intersects all relevant sub-intervals with sufficient vertex count), then we will have the ``every possible order'' guarantee.

Section \ref{sec:regscheme} is dedicated to the formulation and existence proof of a regularity scheme suitable for ordered graphs.
In Subsection \ref{sub:approx} we  present the concept of \emph{approximating partitions}, showing several useful properties of them. Importantly, the notion of a robust partition is somewhat preserved when moving to a partition approximating it.

In Subsection \ref{sub:lcrscheme} we develop the lemma that gives us the required scheme. Roughly speaking, it follows the following steps.

\begin{itemize}
\item We find a base partition $P$ of the graph $G$, robust enough with regards to the graph edge colors, so as to ensure that it remains robust even after refining it to make it fit into a secondary interval partition.
\item We consider an interval partition $J$ of the vertex set $V$ of $G$, that is robust with respect to $P$. That is, if we partition each interval of $J$ into a number of smaller intervals (thus obtaining a refinement $J'$), most of the smaller intervals will contain about the same ratio of members of each set of $P$ as their corresponding bigger intervals.
\item Now we consider what happens if we construct a partition resulting from taking the intersections of the members of $P$ with members of $J'$. In an ideal world, if a set of $P$ intersects an interval of $J$, then it would intersect ``nicely'' also the intervals of $J'$ that are contained in that interval. However, this is only mostly true. Also, this ``partition by intersections'' will usually not be an equipartition.

\item We now modify a bit both $P$ and $J$, to get $Q$ and $I$ that behave like the ideal picture, and are close enough to $P$ and $J$. Essentially we move vertices around in $P$ to make the intersections with the intervals in $J'$ have about the same size inside each interval of $J$. We also modify the intersection set sizes (which also affects $J$ a little) so they will all be near multiples of a common value (on the order of $n$). This is so we can divide them further into an equipartition that refines both the robust graph partition and the interval partition. The rounding Lemma \ref{lem:round} helps us here.
\end{itemize}

The above process generates the following scheme. 
$Q$ is the modified base equipartition, and its size (i.e. number of parts) is denoted by $k$. $I$ is the modified ``bigger intervals'' equipartition, and its size is denoted by $m$. We are allowed to require in advance that $m$ will be large enough (that is, to have $m$ bigger than a predetermined constant $m_0$).
There is an equipartition $Q'$ of size $mt$ which refines both $Q$ and $I$. That is, each part of $Q'$ is fully contained in a part of $Q$ and a part of $I$, and so each part of $Q$ contains exactly $t$ parts of $Q'$. Moreover, there is the ``smaller intervals'' equipartition $I'$ which refines $I$, and has size $m b$ where $b = r(m,t)$ for a two-variable function $r$ that we are allowed to choose in advance ($r$ is eventually chosen according to the Ramsey-type arguments needed in the proof). Each part of $I$ contains exactly $b$ parts of $I'$.
Finally, there is a ``perfect" equipartition $Q''$ which refines $Q'$ and $I'$ and has size $mbt$,
such that inside any bigger interval from $I$, the intersection of each part of $Q'$ with each smaller interval from $I'$ consists of exactly one part of $Q''$.
Additionally, $Q'$ can be taken to be very robust, where we are allowed to choose the robustness parameters in advance.

We are guaranteed that the numbers $m$ and $t$ are bounded in terms of the above function $r$, the robustness parameters, and $m_0$ for which we required that $m \geq m_0$. These bounds do not depend on the size of the input graph. See Lemma \ref{lem:mainpartscheme} for more details.

\subsection{Proving a finite removal lemma}
\label{subsec:main_proof_sketch}
Consider an ordered colored graph $G:\binom{[n]}{2} \to \Sigma$, and consider a regularity scheme consisting of equipartitions $Q, I, Q', I', Q''$ for $G$ as described above. 

We start by observing that if $Q''$ is robust enough, then there is a tuple $W$ of ``representatives'' for $Q''$, satisfying the following conditions.
\begin{itemize}
\item For each part of $Q''$ there is exactly one representative, which is a subset of this part.
\item Each representative is not too small: it is of order $n$ (where the constants here may depend on all other parameters discussed above, but not on the input size $n$).
\item All pairs of representatives are very regular (in the standard Szemer\'edi regularity sense).
\item The densities of the colors from $\Sigma$ between pairs of representatives are usually similar to the densities of those colors between the pairs of parts of $Q''$ containing them. Here the \emph{density} of a color $\sigma \in \Sigma$ between vertex sets $A$ and $B$ is the fraction of $\sigma$-colored edges in $A \times B$.
\end{itemize}
Actually, the idea of using representatives, as presented above, was first developed in \cite{AlonFKS00}. 
Note that each part of $Q'$ contains exactly $b$ representatives (since it contains $b$ parts from $Q''$) and each small interval of $I'$ contains exactly $t$ representatives.

Now if $Q'$ is robust enough then the above representatives for $Q''$ also represent $Q'$ in the following sense: 
Densities of colors between pairs of representatives are usually similar to the densities of those colors between the pairs of parts of $Q'$ containing them.

Consider a colored graph $H$ whose vertices are the small intervals of $I'$,
where the ``color'' of the edge between two vertices (i.e.\@ small intervals) is the $t \times t$ ``density matrix'' described as follows: For any pair of representatives, one from each small interval, there is an entry in the density matrix. This entry is the set of all colors from $\Sigma$ that are dense enough between these two representatives, i.e., all colors whose density between these representatives is above some threshold.

An edge between two vertices of $H$ is considered \emph{undesirable} if the density matrix between these intervals differs significantly from a density matrix of the large intervals from $I$ containing them. 
If $Q'$ is robust enough, then most density matrices for pairs of small intervals are similar to the density matrices of the pairs of large intervals containing them. 
Therefore, the number of undesirables in $H$ is small in this case. 

Consider now $H$ as an $m$-partite graph, where each part consists of all of the vertices (small intervals) of $H$ that are contained in a certain large interval from $I$.
We apply the undesirability-preserving Ramsey on $H$, and then a standard multicolored Ramsey within each part, to obtain an induced subgraph $D$ of $H$ with the following properties.
\begin{itemize}
	\item $D$ has exactly $d_{\mathcal{F}}$ vertices (small intervals) inside each part of $H$, where $d_{\mathcal{F}}$ is the maximum number of vertices in a graph from the forbidden family $\mathcal{F}$.
	\item For any pair of parts of $H$, all $D$-edges between these parts have the same ``color'', i.e. the same density matrix.
	\item For any part of $H$, all $D$-edges inside this part have the same ``color''.
	\item The fraction of undesirables among the edges of $D$ is small.
\end{itemize}

Finally we wish to ``clean'' the original graph $G$ as dictated by $D$.
For any pair $Q'_1, Q'_2$ of (not necessarily distinct) parts from $Q'$, let $I_1, I_2$ be the large intervals from $I'$ containing them, and consider the density matrix that is common to all $D$-edges between $I_1$ and $I_2$.
In this matrix there is an entry dedicated to the pair $Q'_1, Q'_2$, which we refer to as the set of colors from $\Sigma$ that are ``allowed'' for this pair.
The cleaning of $G$ is done as follows: For every $u \in Q'_1$ and $v \in Q'_2$, if the original color of $uv$ in $G$ is allowed, then we do not recolor $uv$. Otherwise, we change the color of $uv$ to one of the allowed colors.

It can be shown that if $D$ does not contain many undesirables, then the cleaning does not change the colors of many edges in $G$. Therefore, if initially $G$ is $\epsilon$-far from $\mathcal{F}$-freeness, then there exists an induced copy of a graph $F \in \mathcal{F}$ in $G$ with $l \leq d_{\mathcal{F}}$ vertices after the cleaning. 
Considering our cleaning method, it can then be shown that there exist representatives $R_1, \ldots, R_l$ with the following properties. For any $i$, all vertices of $R_i$ come before all vertices of $R_{i+1}$ in the ordering of the vertices, and for any $ i < j$, the color of $F(ij)$ has high density in $R_i \times R_j$.
Recalling that all pairs of representatives are very regular, a well-known lemma implies that the representatives $R_1, \ldots, R_l$ span many copies of $F$, as desired.

\subsection{From finite to infinite removal lemma}
After the finite removal lemma is established, adapting the proof to the infinite case is surprisingly not difficult. 
The only problem of the finite proof is that we required $D$ to have exactly $d_{\mathcal{F}}$ vertices in each large interval, where $d_{\mathcal{F}}$ is the maximal number of vertices of a graph in $\mathcal{F}$. This requirement does not make sense when $\mathcal{F}$ is infinite. 
Instead we show that there is a function $d_{\mathcal{F}}(m,t)$ that ``plays the role'' of $d_{\mathcal{F}}$ in the infinite case. 

$d_{\mathcal{F}}(m,t)$ is roughly defined as follows: We consider the (finite) collection $\mathcal{C}(m,t)$ of all colored graphs with loops that have exactly $m$ vertices, where the set of possible colors is the same as that of $H$ (so the number of possible colors depends only on $|\Sigma|$ and $t$). We take $d_{\mathcal{F}}(m,t)$ to be the smallest number that guarantees the following. If a graph $C \in \mathcal{C}(m,t)$ exhibits (in some sense) a graph from $\mathcal{F}$, then $C$ also exhibits a graph from $\mathcal{F}$ with no more than $d_{\mathcal{F}}(m,t)$ vertices.

The rest of the proof follows as in the finite case, replacing any occurrence of $d_\mathcal{F}$ in the proof with $d_{\mathcal{F}}(m,t)$. 
Here, if $G$ contains a copy of a graph from $\mathcal{F}$ after the cleaning, then there is a set of no more than $d_{\mathcal{F}}(m, t)$  different representatives that are very regular in pairs and have the ``right'' densities with respect to some $F \in \mathcal{F}$ with at most $d_{\mathcal{F}}(m,t)$ vertices, so we are done as in the finite case.

\paragraph{From ordered graphs to ordered matrices}
To prove Theorem \ref{thm:infinite_matrix_removal_lemma} for square matrices, we reduce the problem to a graph setting. Suppose that $M:U \times V \to \Sigma$ is a matrix,
and add an additional color $\sigma_0$ to $\Sigma$. All edges between $U$ and $V$ will have the original colors from $\Sigma$, and edges inside $U$ and inside $V$ will have the new color $\sigma_0$.
Note that we are not allowed to change colors to or from the color $\sigma_0$, as it actually signals ``no edge''.
The proof now follows from the proof for graphs: We can ask the partition $I$ into large intervals
to ``respect the middle'', so all parts of $I$ are either fully contained in $U$ or in $V$. Moreover, 
colors of edges inside $U$ or inside $V$ are not modified during the cleaning step, and edges between $U$ and $V$ are not recolored to $\sigma_0$, since this color does not appear at all between the relevant representatives (and in particular, does not appear with high density).

To adapt the proof of Theorem \ref{thm:infinite_matrix_removal_lemma} for non-square matrices, we need the divisibility condition to be slightly different than respecting the middle. In the case that $m = o(n)$, we need to construct two separate ``large intervals'' equipartitions, one for the rows and one for the columns, instead of one such equipartition $I$ as in the graph case. The rest of the proof does not change.

\section{Preliminaries and definitions}
In general, we may and will assume whenever needed throughout the paper that $n$ is large enough with respect to all relevant parameters.
We generally denote ``small'' parameters and functions (whose values are always positive but can be arbitrarily close to zero) by small Greek letters\footnote{The only exception is $\lambda$, which will denote general real numbers, and $\ell$ which will denote their rounding},  and ``large'' parameters and functions (whose values are always finite natural numbers but can be arbitrarily large) by Latin letters. We assume that all parameters in all statements of the lemmas are monotone in the ``natural'' direction, as in the following examples: $T(\alpha, b) \leq T(\alpha', b')$
for $\alpha \geq \alpha', b \leq b'$, and $\gamma(c, \delta) \leq \gamma(c', \delta')$ for $c \geq c', \delta \leq \delta'$.
We also assume that all ``small'' parameters are smaller than one, and all ``large'' parameters are larger than one.

We also assume that all functions are ``bounded by their parameters'', for example $\gamma(\alpha, k) \leq \alpha$ and $T(\alpha, k) \geq k$.
These definitions extend naturally to any set of parameters, and are easily seen to be without loss of generality as long as we do not try to optimize bounds.

\subsubsection*{Colored graphs and charts}
A \emph{$\Sigma$-colored graph} $G = (V, c_G)$ is defined by a totally ordered set of vertices $V$ and a function $c_G : \binom{V}{2} \to \Sigma$. That is, $G$ is a complete ordered graph whose edges are colored by elements of $\Sigma$.
The standard notion of an (ordered) graph is equivalent to a $\{0,1\}$-colored graph.
A \emph{$\Sigma$-colored graph with loops} $G' = (V, c_{G'})$ is defined by a totally ordered set $V$
and a function $c_{G'} : \binom{V}{2} \cup V \to \Sigma$. We identify the notation $c_{G'}(vv)$ with $c_{G'}(v)$ for any $v \in V$.

With a slight abuse of notation, we denote by $U_1 \times U_2 = \{\{u_1, u_2\} : u_1 \in U_1, u_2 \in U_2\}$ the set of edges between two disjoint vertex sets $U_1$ and $U_2$.
A \emph{$(k, \Sigma)$-chart} $C=(V_1,\ldots,V_k,c_C)$ is defined by $k$ disjoint vertex sets $V_1,\ldots,V_k$ and a function 
$c_C:E_C \to \Sigma$, where $E_C = \bigcup_{1 \leq i < j \leq k} U_i \times U_j$. In other words, it is an edge-colored complete $k$-partite graph. 
For $C$ and $G$ as above, we say that $C$ is a \emph{partition} of $G$ if $V = \bigcup_{i=1}^k V_i$ and $c_G(e) = c_C(e)$ for any edge $e \in E_C$.
Moreover, $C$ is \emph{equitable} if $||V_i| - |V_j|| \leq 1$ for any $1 \leq i,j \leq k$; an equitable partition is sometimes called an \emph{equipartition}. 
The \emph{size} $|C|$ of the partition $C$ is the number of parts in it.
For a partition $C$ as above,
a $(k', \Sigma)$-chart $C'$ which is also a partition of $G$ is said to be a $G$-\emph{refinement} 
of $C$ if we can write
$C' = (V_{1 1}, \ldots, V_{1 j_1}, \ldots, V_{k 1}, \ldots, V_{k j_k}, c_{C'})$ where $V_i = \bigcup_{l = 1}^{j_i} V_{i l}$. Note that $c_G(e) = c_C(e) = c_{C'}(e)$ for any edge $e \in E_C$. 
We will sometimes omit the coloring from the description of a partition when it is clear from the context (as the coloring is determined by the partition of the vertices and the coloring of the graph).

For two disjoint sets of vertices $U, W$ and a coloring 
$c:U \times W \to \Sigma$,
we say that the \emph{density} of $\sigma \in \Sigma$ in $(U, W, c)$ is $d_\sigma(U, W, c) = |(U \times W) \cap c^{-1}(\sigma)| / |U||W|$. 
the squared density is denoted by $d_\sigma^2(U, W, c)$.
The \emph{index} of $(U, W, c)$ is
\[
\ind(U, W, c) = \sum_{\sigma \in \Sigma} d_\sigma^2(U, W, c).
\]
Note that $0 \leq \ind(U,V,c) \leq 1$ always holds. When the coloring $c$ is clear from context, we will usually simply write $d_\sigma(U,V)$ for density and $\ind(U,V)$ for index.

For a chart $C$ as above we define the \emph{index} of $C$ as
\[
\ind(C) = \sum_{1 \leq i < i' \leq k} \frac{|V_i| |V_{i'}|}{\binom{|V|}{2}}\ind(V_{i}, V_{i'}, c\restrict{V_{i} \times V_{i'}})
\]
where $V = \bigcup_{i=1}^{k} V_i$. 
By the Cauchy-Schwarz inequality, for any two partitions $C$, $C'$ of $G$ where $C'$ is a $G$-refinement of $C$ we have
\begin{equation}
\label{eq:index_refinement}
0 \leq \ind(C) \leq \ind(C') \leq 1.
\end{equation}
For a function $f: \mathbb{N} \to \mathbb{N}$ and a constant $\gamma > 0$, we say that an equipartition $C$ of size $k$ is 
\emph{$(f, \gamma)$-robust} if there exists no refining equipartition $C'$ of $C$ of size at most $f(k)$ for which $\ind(C') > \ind(C) + \gamma$.
The following observation 
states that for any colored graph $G$ and any equipartition $C$ of $G$, there exists an  $(f, \gamma)$-robust equipartition $C'$ refining $C$.
The first explicit definition of robustness was given in \cite{FischerNewman2005}.
\begin{observation}[Robust partitioning of colored graphs \cite{FischerNewman2005}]
\label{obs:robust_partition_colored_graphs}
For any integer $k > 0$, function $f:\mathbb{N} \to \mathbb{N}$ and real $\gamma > 0$ there exists 
$T = T_{\ref{obs:robust_partition_colored_graphs}}(k, f, \gamma)$ such that for any equipartition $C$ of a colored graph $G$ with $|C| = k$,
there exists an $(f, \gamma)$-robust equipartition $C' = C'_{\ref{obs:robust_partition_colored_graphs}}(C, f, 
\gamma)$ that refines $C$, where $|C'| \leq T$.
\end{observation}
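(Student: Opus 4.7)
The proof will be a direct iterative argument using the monotonicity of the index under refinement (equation (1)) together with its boundedness in $[0,1]$. The plan is to start with the given equipartition $C$ and, as long as it fails to be $(f,\gamma)$-robust, replace it with a refining equipartition whose index has jumped by at least $\gamma$. Since each step gains $\gamma$ in a quantity that cannot exceed $1$, the process must halt after a bounded number of steps, and the total number of parts at that point only depends on $k$, $f$, and $\gamma$.

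More precisely, I would define a sequence $C = C_0, C_1, C_2, \ldots$ of equipartitions of $G$ as follows. Given $C_i$, if $C_i$ is already $(f,\gamma)$-robust, stop and output $C' = C_i$. Otherwise, by the very definition of robustness, there exists an equipartition $C_{i+1}$ that refines $C_i$, has size $|C_{i+1}| \leq f(|C_i|)$, and satisfies $\ind(C_{i+1}) > \ind(C_i) + \gamma$. Note that this $C_{i+1}$ is genuinely an equipartition (so the iteration keeps us within the class of equipartitions), precisely because the definition of $(f,\gamma)$-robust quantifies only over equipartition refinements.

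To bound the length of the iteration, observe that by (1) we have $0 \leq \ind(C_i) \leq \ind(C_{i+1}) \leq 1$ for every $i$, while in each non-terminating step the index jumps by more than $\gamma$. Hence the iteration must terminate at some step $s \leq \lceil 1/\gamma \rceil$. The output $C' = C_s$ is then an $(f,\gamma)$-robust equipartition refining $C$. For the size bound, define $g^{(0)}(k) = k$ and $g^{(i+1)}(k) = f(g^{(i)}(k))$; since $|C_{i+1}| \leq f(|C_i|)$ at each step and $|C_0| = k$, a trivial induction gives $|C_s| \leq g^{(s)}(k) \leq g^{(\lceil 1/\gamma\rceil)}(k)$. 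We may therefore take $T_{\ref{obs:robust_partition_colored_graphs}}(k, f, \gamma) = g^{(\lceil 1/\gamma\rceil)}(k)$, which depends only on $k$, $f$, and $\gamma$, as required.

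I do not anticipate a real obstacle in this proof, since the argument is entirely bookkeeping once (1) is available. The only point that warrants a line of care is confirming that each intermediate $C_{i+1}$ is itself an equipartition rather than a merely equitable refinement one might have to further subdivide; this is immediate from how robustness is defined in the statement. Any formalization subtleties about divisibility (e.g.\ whether $f(|C_i|)$-many parts can always be achieved by equal subdivision of the parts of $C_i$) can be absorbed into the choice of $f$ or handled by passing to a slightly larger equipartition refinement, which only affects the numerical value of $T$ and not the structure of the argument.
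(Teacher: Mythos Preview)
Your proposal is correct and follows essentially the same iterative argument as the paper: start with $C$, repeatedly replace by a witnessing refinement whenever robustness fails, and use the $\gamma$-increment together with the bound $\ind\leq 1$ to halt after at most $1/\gamma$ steps. The paper's proof is terser (it does not spell out the iterated-composition bound $T=g^{(\lceil 1/\gamma\rceil)}(k)$), but the content is identical.
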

\begin{proof}
Initially pick $C' = C$. Now, as long as $C'$ is not $(f, \gamma)$-robust, let $k'$ denote the number of parts of $C'$; we may replace $C'$ by a $G$-refinement $C''$ of it with at most $f(k')$ parts and
$\ind(C'') > \ind(C') + \gamma$.
This process stops after at most $1/\gamma$ iterations, by inequality \eqref{eq:index_refinement}.
\end{proof}

The definition of robustness immediately implies the following.
\begin{observation}\label{obs:chainrobust}
Let $P = (V_1, \ldots, V_k)$ be an equipartition of a $\Sigma$-colored graph 
$G = (V, c)$, and suppose that $P$ is $(f \circ g, \gamma)$-robust for two functions $f,g:\mathbb{N} \to \mathbb{N}$ and $\gamma > 0$. Then any equitable refinement of $P$ with no
more than $g(k)$ parts is $(f, \gamma)$-robust.
\end{observation}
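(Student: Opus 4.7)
The plan is to prove the observation by a direct contrapositive argument, using only the definition of robustness together with the monotonicity of the index under $G$-refinement recorded in \eqref{eq:index_refinement}.

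First, I would suppose for contradiction that some equitable refinement $Q$ of $P$ with $|Q| \leq g(k)$ parts is \emph{not} $(f,\gamma)$-robust. Unwinding the definition, this yields an equipartition $Q'$ that $G$-refines $Q$ with $|Q'| \leq f(|Q|)$ and $\ind(Q') > \ind(Q) + \gamma$. Next, since $Q'$ refines $Q$ and $Q$ refines $P$, the chart $Q'$ is itself an equitable refinement of $P$. By the Cauchy--Schwarz monotonicity in \eqref{eq:index_refinement} applied to the pair $P, Q$, we have $\ind(Q) \geq \ind(P)$, and hence
\[
\ind(Q') > \ind(Q) + \gamma \geq \ind(P) + \gamma.
\]
Using the monotonicity conventions on $f$ and $g$ from the preliminaries, $|Q'| \leq f(|Q|) \leq f(g(k)) = (f \circ g)(k)$. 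Thus $Q'$ is an equipartition refining $P$ with at most $(f\circ g)(k)$ parts whose index exceeds $\ind(P) + \gamma$, contradicting the assumed $(f\circ g, \gamma)$-robustness of $P$.

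There is essentially no obstacle here beyond bookkeeping; the only point to be careful about is the two-step bound $|Q'| \leq f(|Q|) \leq f(g(k))$, which requires the monotonicity of $f$ that was set up as a global convention in the preliminaries, and the fact that the refinement relation composes so that a refinement of $Q$ is automatically a refinement of $P$ with the equitable property preserved by hypothesis. With those two remarks in place the argument is a single short paragraph.
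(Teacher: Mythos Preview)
Your proof is correct and is exactly the natural unpacking of what the paper means by ``the definition of robustness immediately implies the following'' (the paper gives no proof at all). The only two ingredients---index monotonicity under refinement and the monotonicity convention on $f$ to get $f(|Q|)\leq f(g(k))$---are precisely what is needed, and you have identified both.
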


The notion of robustness is stronger than the more commonly used notion of regularity.  
For a $\Sigma$-colored graph $G = (V, c)$ and an equipartition $P = (V_1, \ldots, V_k)$ of $G$, 
a pair $(V_i, V_j)$ is \emph{$\epsilon$-regular} if $|d_\sigma(V_i, V_j) - d_\sigma(V'_i, V'_j)| \leq \epsilon$ for any $\sigma \in \Sigma$ and $V'_i \subseteq V_i, V'_j \subseteq V_j$ that satisfy $|V'_i| \geq \epsilon|V_i|, |V'_j| \geq \epsilon |V_j|$. $P$ is an \emph{$\epsilon$-regular partition} if all but at most $\epsilon \binom{k}{2}$
of the pairs $(V_i, V_j)$ are $\epsilon$-regular. 
The following lemma states that robust partitions are also regular; a lemma like it is implicit in the ideas of the original proof of \cite{Szemeredi1978}. The original was formulated only for non-colored graphs ($\Sigma=\{0,1\}$), but the extension to colored graphs is not hard (and was also done in prior work).
\begin{lemma}[\cite{Szemeredi1978}, see also \cite{FischerNewman2005}]
\label{lem:robust_regular}
For any $\epsilon > 0$ there exist $f = f_{\ref{lem:robust_regular}}^{(\epsilon)}:\mathbb{N}\to \mathbb{N}$ and $\delta = \delta_{\ref{lem:robust_regular}}(|\Sigma|,\epsilon) > 0$ such that 
any $(f, \delta)$-robust equipartition $P$ of a $\Sigma$-colored graph $G$ is also $\epsilon$-regular.
\end{lemma}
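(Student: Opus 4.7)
The plan is to argue the contrapositive: if the equipartition $P = (V_1, \ldots, V_k)$ is not $\epsilon$-regular, then I would produce an equitable $G$-refinement of $P$ of size at most some $f(k)$ (depending only on $\epsilon$) whose index exceeds $\ind(P)$ by at least some $\delta = \delta(|\Sigma|,\epsilon) > 0$. This directly contradicts $(f,\delta)$-robustness, and the $f$ and $\delta$ manufactured along the way are the ones the lemma asserts.

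First I would extract witnesses of irregularity. Since $P$ fails $\epsilon$-regularity, more than $\epsilon\binom{k}{2}$ pairs $(V_i,V_j)$ are irregular; for each such pair pick a color $\sigma_{ij}\in\Sigma$ and subsets $V_{ij}\subseteq V_i$, $V_{ji}\subseteq V_j$ of sizes at least $\epsilon|V_i|$ and $\epsilon|V_j|$ satisfying $|d_{\sigma_{ij}}(V_{ij},V_{ji}) - d_{\sigma_{ij}}(V_i,V_j)| > \epsilon$. Then for each $i$, refine $V_i$ by the Boolean algebra generated by the family $\{V_{ij} : (i,j)\text{ irregular}\}$, splitting $V_i$ into at most $2^{k-1}$ cells. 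Call the resulting refinement of $P$ by $P^{\ast}$; it has at most $k \cdot 2^{k-1}$ parts.

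Next I would apply the colored Cauchy--Schwarz defect inequality to bound the index gain. For each irregular pair $(V_i,V_j)$, splitting $V_i\times V_j$ according to $\{V_{ij},V_i\setminus V_{ij}\}\times\{V_{ji},V_j\setminus V_{ji}\}$ already raises the contribution of color $\sigma_{ij}$ to the weighted sum of squared densities by at least $\epsilon^{4}\cdot |V_i||V_j|/\binom{|V|}{2}$. Any further refinement, and any contribution from the other colors, can only help, because for each color $\sigma$ individually the weighted sum of squared $\sigma$-densities is monotone under refinement by Cauchy--Schwarz (this is the same principle behind \eqref{eq:index_refinement}). Summing the gains over the at least $\epsilon\binom{k}{2}$ irregular pairs yields $\ind(P^{\ast}) - \ind(P) \geq c\,\epsilon^{5}$ for an absolute constant $c > 0$.

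Finally I would equitize $P^{\ast}$. Fix a large integer parameter $s = s(\epsilon)$; chop every part of $P^{\ast}$ into chunks of common size $\lfloor |V|/(s k 2^{k-1}) \rfloor$ and redistribute the at most $sk2^{k-1}$ leftover vertices to restore equitability. This changes the index by at most $O(1/s) + O(1/|V|)$, so for $s$ large enough in terms of $\epsilon$ (and $|V|$ large compared to the partition size, as we may assume by the preliminaries) the net index gain is still at least $\delta := c\epsilon^{5}/2$, while the total number of parts is at most $f(k) := sk \cdot 2^{k-1}$, which depends on $k$ only once $s(\epsilon)$ is fixed. The one subtle point is verifying that the single-pair $\epsilon^{4}$ defect still adds up cleanly across many pairs after we refine each $V_i$ simultaneously by all its witnesses; this is handled by the per-color monotonicity of the squared-density sum under refinement, so the per-pair gains accumulate without interference.
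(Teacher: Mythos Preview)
The paper does not give its own proof of this lemma; it is stated as a known fact with citations, noting that it is ``implicit in the ideas of the original proof'' of Szemer\'edi's regularity lemma. Your argument is precisely that implicit argument --- the standard energy-increment step --- and it is correct in substance: witness subsets for irregular pairs, refine each $V_i$ by the Boolean algebra of its witnesses, apply the defect form of Cauchy--Schwarz to gain $\epsilon^4$ per irregular pair in the witness color, use per-color convexity to ensure the further common refinement only helps, sum over $>\epsilon\binom{k}{2}$ pairs, and equitize.

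One small miscount: after chopping each part of $P^{\ast}$ into chunks of size $\lfloor |V|/(sk2^{k-1})\rfloor$, the number of leftover \emph{vertices} is at most $k2^{k-1}\cdot\lfloor |V|/(sk2^{k-1})\rfloor\leq |V|/s$, not ``at most $sk2^{k-1}$'' as you wrote. This is harmless --- your stated conclusion that the index changes by $O(1/s)$ is exactly what the correct count gives, and the rest of the equitizing step goes through. You should also make explicit that the redistribution keeps each leftover vertex inside its original part $V_i$, so the end result is still a refinement of $P$; this is routine but worth saying.
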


The next lemma was first formulated (with different notation and without the extension to general $\Sigma$) in \cite{AlonFKS00}, but in a sense the basic idea was already used in implicitly proving Lemma \ref{lem:robust_regular} in \cite{Szemeredi1978}. It will be useful for us later.

\begin{lemma}[\cite{AlonFKS00}, see also \cite{FischerNewman2005}]
\label{lem:robust_to_densities}
For any $\epsilon > 0$ there exists $\delta = \delta_{\ref{lem:robust_to_densities}}(|\Sigma|,\epsilon) > 0$,
so that for every $f:\mathbb{N}\to \mathbb{N}$, any $(f, \delta)$-robust equipartition $P = (V_1, \ldots, V_k)$ of a $\Sigma$-colored graph $G$, and any equitable refinement $P' = (V_{11}, \ldots, V_{1b}, \ldots, V_{k1}, \ldots, V_{kb})$ of $P$ where $kb \leq f(k)$, choosing the indexes so that $V_i = \bigcup_{r=1}^{b} V_{ir}$ for any $i \in [k]$, it holds that
\[ \frac{1}{\binom{k}{2} b^2}
\sum_{\sigma \in \Sigma} \sum_{i, i' \in [k]} \sum_{j, j' \in [b]} |d_{\sigma}(V_{ij}, V_{i'j'}) - d_{\sigma}(V_i, V_{i'})| \leq \epsilon. 
\]
\end{lemma}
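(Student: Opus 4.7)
The plan is to combine the $(f,\delta)$-robustness hypothesis with the defect form of Cauchy--Schwarz, and then to pay one more Cauchy--Schwarz to pass from squared differences to absolute differences. The starting observation is that $P'$ is an equitable refinement of $P$ with $|P'|=kb\leq f(k)$, so robustness immediately gives $\ind(P')-\ind(P)\leq\delta$; the whole argument is a quantitative unpacking of this single inequality.

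The main computation is to expand $\ind(P')-\ind(P)$ explicitly. Setting $n=|V|$ and, up to an $O(1/n)$ rounding error which vanishes as $n$ grows (which we are allowed to assume), treating the sizes as $|V_i|=n/k$ and $|V_{ij}|=n/(kb)$, the weight attached to each pair $(V_{ij},V_{i'j'})$ in $\ind(P')$ is exactly $b^{-2}$ times the weight of $(V_i,V_{i'})$ in $\ind(P)$. Since in the equitable case $d_\sigma(V_i,V_{i'})$ equals the plain average $b^{-2}\sum_{j,j'}d_\sigma(V_{ij},V_{i'j'})$, the defect form of Cauchy--Schwarz yields, for each $\sigma\in\Sigma$ and each $i<i'$,
\[
\frac{1}{b^2}\sum_{j,j'}d_\sigma(V_{ij},V_{i'j'})^{2}-d_\sigma(V_i,V_{i'})^{2}=\frac{1}{b^2}\sum_{j,j'}\bigl(d_\sigma(V_{ij},V_{i'j'})-d_\sigma(V_i,V_{i'})\bigr)^{2}.
\]
Summing over $\sigma$ and $i<i'$ with the weights $|V_i||V_{i'}|/\binom{n}{2}\approx 2/k^{2}$, and discarding the nonnegative contribution from intra-block pairs $(V_{ij},V_{ij'})$ that appear in $\ind(P')$ but have no counterpart in $\ind(P)$, I obtain
\[
\frac{1}{\binom{k}{2}b^{2}}\sum_{\sigma}\sum_{i<i'}\sum_{j,j'}\bigl(d_\sigma(V_{ij},V_{i'j'})-d_\sigma(V_i,V_{i'})\bigr)^{2}\;\leq\;\tfrac{k}{k-1}\bigl(\ind(P')-\ind(P)\bigr)\;\leq\;2\delta.
\]

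To finish, I would apply Cauchy--Schwarz once more to move from this squared $\ell^{2}$ control to the desired $\ell^{1}$ statement: first inside each color class, against the uniform measure on the $\binom{k}{2}b^{2}$ pair-indices, and then across the $|\Sigma|$ colors. This yields that the quantity appearing in the conclusion is at most $\sqrt{|\Sigma|\cdot 2\delta}$, so setting $\delta_{\ref{lem:robust_to_densities}}(|\Sigma|,\epsilon)$ to be of order $\epsilon^{2}/|\Sigma|$ gives the claimed bound of $\epsilon$.

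I do not foresee a genuine obstacle; what remains is only bookkeeping. The part sizes differ from $n/k$ and $n/(kb)$ by at most one, which perturbs all weights by a $1+O(k/n)$ factor and is swallowed for $n$ large. There is also the interpretive question of whether the sum $\sum_{i,i'\in[k]}$ in the statement is to be read as ordered or unordered pairs (and whether the $i=i'$ diagonal is included); any ensuing factor of $2$, or the negligible $O(1/k)$ contribution of the diagonal when $k$ is large enough, is absorbed by mildly shrinking the constant in the choice of $\delta$. The conceptual point is that $(f,\delta)$-robustness is, essentially by definition, an $\ell^{2}$ bound on how much densities fluctuate across a bounded refinement, and the lemma is just its $\ell^{1}$ form obtained by one Cauchy--Schwarz.
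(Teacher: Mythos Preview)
The paper does not supply its own proof of this lemma; it is quoted from \cite{AlonFKS00} (see also \cite{FischerNewman2005}) and left without argument. Your approach---bounding $\ind(P')-\ind(P)$ by robustness, expanding this difference via the variance identity (defect Cauchy--Schwarz) to control the squared density deviations, and then paying one more Cauchy--Schwarz across the $|\Sigma|\binom{k}{2}b^2$ terms to pass to the $\ell^1$ statement---is exactly the standard proof in those references, and your bookkeeping (including the handling of the equitability rounding, the nonnegative intra-block contribution, and the $\sum_{i,i'}$ versus $\sum_{i<i'}$ ambiguity) is correct.
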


Another lemma that will be useful later is the following. This is Lemma 3.2 in $\cite{AlonFKS00}$.
\begin{lemma}[\cite{AlonFKS00}]
	\label{lem:regularity_to_many_copies}
	For any $\eta > 0$ there exists a function $\beta = \beta_{\ref{lem:regularity_to_many_copies}}^{\eta}:\mathbb{N} \to \mathbb{N}$,
	so that for any integer $l > 0$ there exists $\kappa = \kappa_{\ref{lem:regularity_to_many_copies}}(\eta, l)$ with the following property:
	If $H = ([l], c_H)$ and $V_1, \ldots, V_l$ are disjoint vertex sets of $G = (V,c)$, such that for any $i < j$,
	$(V_i, V_j)$ is $\beta(l)$-regular and $d_{c_H(ij)}(V_i, V_j) \geq \eta$, then the number of induced $H$-copies 
	in $G$ with a vertex from $V_i$ playing the role of vertex $i$ of $H$ is at least $\kappa \prod_{i=1}^{l} |V_i|$.
\end{lemma}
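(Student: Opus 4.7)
The plan is to prove this via a standard greedy embedding argument that constructs an induced $H$-copy one vertex at a time. For chosen vertices $v_1 \in V_1, \ldots, v_i \in V_i$, and each $j > i$, define the refined candidate set $V_j^{(i)} := \{u \in V_j : c(\{v_r, u\}) = c_H(rj) \text{ for all } r \leq i\}$, with $V_j^{(0)} := V_j$. Any completion $(v_1, \ldots, v_l)$ with $v_i \in V_i^{(i-1)}$ automatically produces an induced copy of $H$ in which $v_i$ plays the role of vertex $i$, and distinct sequences yield distinct copies. Thus the task reduces to lower bounding the number of such valid selection sequences by $\kappa \prod_i |V_i|$.

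The main inductive step, which is also the technical heart of the argument, is the following. Set $\eta_i := (\eta/2)^i$ with $\eta_0 := 1$, and call a partial tuple $(v_1, \ldots, v_i)$ \emph{good} if $|V_j^{(i)}| \geq \eta_i |V_j|$ for every $j > i$. Assuming $(v_1, \ldots, v_{i-1})$ is good, I claim that at most $l\beta|V_i|$ choices of $v_i \in V_i^{(i-1)}$ fail to extend it to a good tuple. To see this, fix $j > i$ and let $X$ be the set of $v \in V_i$ for which fewer than $\eta_i |V_j|$ vertices $u \in V_j^{(i-1)}$ satisfy $c(\{v,u\}) = c_H(ij)$. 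If $|X| \geq \beta|V_i|$, then since $|V_j^{(i-1)}| \geq \eta_{i-1}|V_j| \geq \beta|V_j|$ (guaranteed by the choice of $\beta$), the $\beta$-regularity of $(V_i, V_j)$ applied to $X$ and $V_j^{(i-1)}$ gives $d_{c_H(ij)}(X, V_j^{(i-1)}) \geq \eta - \beta \geq \eta/2$. The total number of correctly colored pairs in $X \times V_j^{(i-1)}$ therefore exceeds $(\eta/2)|X||V_j^{(i-1)}| \geq \eta_i |X||V_j|$, contradicting the definition of $X$. So $|X| < \beta|V_i|$ for each of the $l - i$ remaining values of $j$, and a union bound yields the claim.

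Choosing $\beta(l) := (\eta/2)^{l-1}/(2l)$ (or anything smaller) guarantees that $l\beta \leq \eta_{i-1}/2$ for all $i \leq l$, so at each step there are at least $(\eta_{i-1}/2)|V_i|$ valid extensions, and multiplying across the $l$ rounds yields at least $\kappa \prod_{i=1}^l |V_i|$ induced copies with $\kappa(\eta,l) := 2^{-l}(\eta/2)^{\binom{l}{2}}$. The main obstacle, and the reason $\beta$ must depend on $l$, is the tension between two competing requirements: the candidate sets $V_j^{(i)}$ shrink geometrically by a factor of $\eta/2$ per round, yet must remain above the $\beta|V_j|$ threshold in order to invoke the regularity hypothesis at each subsequent step. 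Choosing $\beta$ exponentially small in $l$ compared to $\eta$ is exactly what is needed to maintain this balance across all $l$ rounds, after which both the regularity comparison $\eta - \beta \geq \eta/2$ and the extension count go through cleanly.
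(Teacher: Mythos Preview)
Your argument is correct and is precisely the standard greedy embedding proof of the counting lemma; the paper does not supply its own proof but simply cites this as Lemma~3.2 of \cite{AlonFKS00}, where essentially the same argument appears. Your choice of parameters $\beta(l)=(\eta/2)^{l-1}/(2l)$ and $\kappa(\eta,l)=2^{-l}(\eta/2)^{\binom{l}{2}}$ is exactly what the inductive scheme requires, and the identification of the key tension---keeping the shrinking candidate sets above the regularity threshold---is the right explanation for why $\beta$ must depend on $l$.
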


\subsubsection*{Strings and intervals}
Consider an ordered set $V$ whose elements are $v_1 < \ldots < v_n$.
A \emph{string} $S:V \to \Sigma$ is a mapping from the ordered set $V$ to an alphabet $\Sigma$.
An \emph{interval partition} $I = (I_1, \ldots, I_k)$ of the string $S:V \to \Sigma$ is a partition $V = I_1 \ldots I_k$ into consecutive substrings of $S$: That is, there exist $0 = a_0 < \ldots < a_{k-1} < a_{k} = n$ such that $I_i = S(v_{a_{i-1}+1})  \ldots  S(v_{a_i})$ for any $1 \leq i \leq k$.
$I$ is \emph{equitable} (or an \emph{interval equipartition}) if $a_{i} - a_{i-1} \in \{ \lfloor n / k \rfloor, \lceil n / k \rceil \}$ for any $1 \leq i \leq k$. 
An \emph{interval refinement} $I'$ of $I$ is an interval partition of $S$ 
such that any part of $I'$ is fully contained in a part of $I$.
The \emph{size} $|I|$ of an interval partition $I$ is its number of parts.

Next we define notions of index and robustness that are suitable for strings and interval partitions.
Similar notions were established in \cite{AxenovichPersonPuzynina2013, FeigeKorenTennenholtz2017}.
For a string $S:V \to \Sigma$, the \emph{density} of $\sigma \in \Sigma$ in $S$ is $d_\sigma(S) = |S^{-1}(\sigma)|/|S|$
where $S^{-1}(\sigma) = \{v \in V: S(v) = \sigma\}$, and the squared density of $\sigma$ in $S$ is denoted by $d_\sigma^2(S)$.
The \emph{index} of $S$ is $\ind(S) = \sum_{\sigma \in \Sigma} d_\sigma^2(S)$.
Finally, the \emph{index} of an interval partition $I = (I_1, \ldots, I_k)$ of $S$ is
\[
\ind(I) = \sum_{i=1}^{k} \frac{|I_i|}{|V|}\ind(I_i).
\] 
As in the case of charts,
for an interval equipartition $I$ of a string $S$, 
we say that $I$ is $(f, \gamma)$-robust 
if any interval equipartition $I'$ of size at most $f(k)$ that refines $I$
satisfies $\ind(I') \leq \ind(I) + \gamma$ (in the other direction,
$\ind(I) \leq \ind(I')$ always holds). 
The following is an analogue of Observation \ref{obs:robust_partition_colored_graphs}, and its proof is essentially identical.
\begin{observation}[Robust partitioning of intervals]
\label{obs:robust_partition_intervals}
For any integer $k > 0$, function $f:\mathbb{N} \to \mathbb{N}$ and real $\gamma > 0$ there exists 
$T = T_{\ref{obs:robust_partition_intervals}}(k, f, \gamma)$ such that any interval equipartition $I$ of a string $S$ 
where $|I| = k$
has an $(f, \gamma)$-robust interval refinement $I' = I'_{\ref{obs:robust_partition_intervals}}(I, f, 
\gamma)$ consisting of at most $T$ intervals.
\end{observation}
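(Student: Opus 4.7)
The plan is to mimic the proof of Observation \ref{obs:robust_partition_colored_graphs} almost verbatim, substituting the string/interval notions of index and refinement in place of the chart-based ones. The core engine of that argument is a simple potential function argument: the index lies in $[0,1]$, it is non-decreasing under refinement, and each violation of robustness buys us an increase of at least $\gamma$. So I expect the proof to be essentially a two- or three-line iterative construction, with the only genuine content being the monotonicity of $\ind$ under interval refinement (which is already acknowledged parenthetically in the statement).

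Concretely, I would start by setting $I' := I$. As long as $I'$ is not $(f,\gamma)$-robust, let $k'$ denote its current number of parts; by the definition of robustness there exists an interval equitable refinement $I''$ of $I'$ with $|I''| \le f(k')$ parts such that $\ind(I'') > \ind(I') + \gamma$. Replace $I'$ by $I''$ and repeat. Since $\ind$ is bounded above by $1$ (because densities of colors in $\Sigma$ sum to $1$ and the sum-of-squares is at most the sum), this loop can run at most $\lceil 1/\gamma \rceil$ times. The resulting $I'$ is then $(f,\gamma)$-robust and is an interval refinement of $I$, and its size is bounded by the composition of $f$ with itself $\lceil 1/\gamma \rceil$ times starting from $k$; this bound is the desired $T_{\ref{obs:robust_partition_intervals}}(k,f,\gamma)$.

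The one thing I would want to verify (rather than take on faith) is monotonicity of $\ind$ under interval refinement, which is the string analogue of inequality \eqref{eq:index_refinement}. If $I = (I_1, \ldots, I_k)$ and we split $I_i$ into consecutive sub-intervals $I_{i,1}, \ldots, I_{i,b_i}$, then for each $\sigma$ the string $I_i$ has $\sigma$-density equal to $\sum_j \frac{|I_{i,j}|}{|I_i|} d_\sigma(I_{i,j})$, so Cauchy--Schwarz (or equivalently convexity of $x \mapsto x^2$) gives $d_\sigma^2(I_i) \le \sum_j \frac{|I_{i,j}|}{|I_i|} d_\sigma^2(I_{i,j})$. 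Summing over $\sigma$ and weighting by $|I_i|/|V|$, then summing over $i$, yields $\ind(I) \le \ind(I')$.

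I do not anticipate a real obstacle here; the main care is purely bookkeeping, namely that each step of the iteration can indeed be carried out by an \emph{interval} equipartition refinement (not an arbitrary one), which is exactly the definition of $(f,\gamma)$-robustness for interval partitions that was just laid down. The quantitative bound $T$ depends only on $k$, $\gamma$, and the function $f$, with no dependence on $|S|$ or on $|\Sigma|$, matching the intended form of the statement.
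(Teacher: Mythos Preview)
Your proposal is correct and is precisely the approach the paper intends: the paper does not write out a separate proof but simply notes that it is essentially identical to that of Observation~\ref{obs:robust_partition_colored_graphs}, which is exactly the potential-function iteration you carried out. Your added verification of the monotonicity $\ind(I)\le\ind(I')$ via convexity of $x\mapsto x^2$ is the string analogue of inequality~\eqref{eq:index_refinement} and is the only ingredient needed beyond the cited observation.
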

The next lemma is an analogue of Lemma \ref{lem:robust_to_densities} for strings, and its proof is similar.
\begin{lemma}
\label{lem:string_robust_to_densities}
For any $\epsilon > 0$ there exists $\delta = \delta_{\ref{lem:string_robust_to_densities}}(|\Sigma|,\epsilon) > 0$,
so that for every $f:\mathbb{N}\to \mathbb{N}$, any $(f, \delta)$-robust interval equipartition $I = (I_1, \ldots, I_k)$ of a string $S:V \to \Sigma$, and any equitable interval refinement $I' = (I_{11}, \ldots, I_{1b}, \ldots, I_{k1}, \ldots, I_{kb})$ of $I$ where $kb \leq f(k)$, choosing the indexes so that $I_i = \bigcup_{r=1}^{b} I_{ir}$ for any $i \in [k]$, it holds that
\[ \frac{1}{kb}
\sum_{\sigma \in \Sigma} \sum_{i \in [k]} \sum_{j \in [b]} |d_{\sigma}(I_{ij}) - d_{\sigma}(I_i)| \leq \epsilon. 
\]
\end{lemma}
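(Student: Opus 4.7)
The plan is to mirror the strategy used for Lemma \ref{lem:robust_to_densities}, exploiting the fact that interval-partition robustness controls the weighted $L^2$ deviation of densities under refinement. The starting point is an identity expressing the index gain from $I$ to $I'$ as a weighted sum of squared density deviations. For each $i \in [k]$, the standard expansion of the square (together with the identity $\sum_j |I_{ij}| d_\sigma(I_{ij}) = |I_i| d_\sigma(I_i)$) yields
\begin{equation*}
\sum_{j=1}^{b} \frac{|I_{ij}|}{|V|}\ind(I_{ij}) - \frac{|I_i|}{|V|}\ind(I_i) = \sum_{\sigma \in \Sigma} \sum_{j=1}^{b} \frac{|I_{ij}|}{|V|}\bigl(d_\sigma(I_{ij}) - d_\sigma(I_i)\bigr)^2.
\end{equation*}
Summing over $i$ gives $\ind(I') - \ind(I)$ on the left-hand side.

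Next I would invoke $(f,\delta)$-robustness of $I$. Since $I'$ is an equitable interval refinement of $I$ with $|I'| = kb \leq f(k)$, robustness forces $\ind(I') - \ind(I) \leq \delta$, so
\begin{equation*}
\sum_{i \in [k]} \sum_{j \in [b]} \sum_{\sigma \in \Sigma} \frac{|I_{ij}|}{|V|}\bigl(d_\sigma(I_{ij}) - d_\sigma(I_i)\bigr)^2 \leq \delta.
\end{equation*}
Equitability of $I'$ implies $|I_{ij}| \geq \lfloor |V|/(kb) \rfloor \geq |V|/(2kb)$, assuming $n$ is large enough relative to $k$ and $b$ (which the paper explicitly permits), so the unweighted sum of squared deviations satisfies $\sum_{i,j,\sigma}(d_\sigma(I_{ij}) - d_\sigma(I_i))^2 \leq 2kb\delta$.

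Finally I would apply the Cauchy-Schwarz inequality across the $kb|\Sigma|$ summands to pass from $L^2$ to $L^1$:
\begin{equation*}
\sum_{i,j,\sigma} |d_\sigma(I_{ij}) - d_\sigma(I_i)| \leq \sqrt{kb|\Sigma|} \cdot \sqrt{\,\textstyle\sum_{i,j,\sigma}(d_\sigma(I_{ij})-d_\sigma(I_i))^2\,} \leq kb\sqrt{2|\Sigma|\delta}.
\end{equation*}
Dividing through by $kb$ and selecting $\delta_{\ref{lem:string_robust_to_densities}}(|\Sigma|,\epsilon) := \epsilon^2/(2|\Sigma|)$ delivers the claimed bound of $\epsilon$.

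The argument contains no serious obstacle beyond bookkeeping: the two technical points are verifying the refinement-gain identity in the first step (a routine expansion of squares using the convexity of $x \mapsto x^2$) and making the equitability estimate $|V|/|I_{ij}| \leq 2kb$ uniform across all intervals. Conceptually, the string setting is strictly easier than the colored-graph analogue in Lemma \ref{lem:robust_to_densities}, because each part contributes a one-dimensional density profile rather than a family of pairwise densities, so there is no interaction across pairs of parts to track; the quadratic dependence $\delta \sim \epsilon^2$ is therefore the same as in the graph case.
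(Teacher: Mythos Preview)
Your proposal is correct and follows precisely the approach the paper intends: the paper does not spell out a proof here, stating only that it ``is similar'' to that of Lemma~\ref{lem:robust_to_densities}, and the standard proof of that graph lemma proceeds exactly via the index-gain identity (variance decomposition), the robustness bound on $\ind(I')-\ind(I)$, and Cauchy--Schwarz to pass from $L^2$ to $L^1$. Your bookkeeping, including the equitability estimate $|I_{ij}|\geq |V|/(2kb)$ for large $n$ and the resulting choice $\delta=\epsilon^2/(2|\Sigma|)$, is sound.
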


We finish by defining the string that a partition of an ordered set induces on that set. The strings that we will consider in this paper are of this type. 
\begin{definition}[String of a partition]
For a partition $P = (V_1, \ldots, V_k)$ of an ordered set $V$, the \emph{$P$-string} $S_P:V \to [k]$ maps any $v \in V$ to the element $i \in [k]$ such that $v \in V_i$. 
\end{definition}

With slight abuse of notation, we will also use the notion of an interval partition in the context of ordered graphs;
here each interval will simply be a set of consecutive vertices (with no accompanying function, in contrast to the case of strings).

\section{Technical aids}
We develop here two tools that we will use for our proofs. The first tool is a Ramsey-type theorem that we believe to be interesting in its own right. We will use it to find a ``uniform'' structure with a global view on our graph.

The second tool is a rounding lemma that allows us to evenly partition graphs also when the number of sets does not divide the number of vertices, without hand-waving away the divisibility issues (which might have been questionable in our context).

\subsection{A quantitative Ramsey-type theorem}\label{sub:ramsey}
The multicolored Ramsey number $\ram(s, k)$ is the minimum integer $n$ so that in any coloring of $K_n$ with $s$ colors
there is a monochromatic copy of $K_k$. It is well known that this number exists (i.e.\@ is finite) for any $s$ and $k$.
For our purposes, we will also need a different Ramsey-type result, that keeps track of ``undesirable'' edges, as described in the following subsection. 

Given a $k$-partite $\Sigma$-chart, we would like to pick a given number of vertices from each partition set, so that all edges between remaining vertices in each pair of sets are of the same color. 
However, in our situation we also have a ``quantitative'' requirement: A portion of the edges is marked as ``undesirable'', and we require that in the chart induced on the picked vertices the ratio of undesirable edges does not increase by much.

Formally, we prove the following, which we state as a theorem because we believe it may have uses beyond the use in this paper.

\begin{thm}\label{thm:qukramsey}
 There exists a function $R_{\ref{thm:qukramsey}}:\mathbb{N}\times\mathbb{N}\times\mathbb{N}\times (0,1]\to\mathbb{N}$, so that if $G=(V_1,\ldots,V_k,c)$ is a $k$-partite $\Sigma$-chart with $n\geq R_{\ref{thm:qukramsey}}(|\Sigma|,k,t,\alpha)$ vertices in each class, and $B\subseteq \bigcup_{i<j \in [k]}(V_i\times V_j)$ is a set of ``undesirable edges'' of size at most $\epsilon\binom{k}{2}n^2$, then $G$ contains an induced subchart $H_{\ref{thm:qukramsey}}(G,B,t,\alpha)=(W_1,\ldots,W_k,c\restrict{\bigcup_{1\leq i<j\leq k}(W_i\times W_j)})$ with the following properties.
 \begin{itemize}
  \item $|W_i|=t$ for every $1\leq i\leq k$.
  \item $c\restrict{W_i\times W_j}$ is a constant function for every $1\leq i<j\leq k$.
  \item The size of $B\cap (\bigcup_{1\leq i<j\leq k}(W_i\times W_j))$ is at most $(1+\alpha)\epsilon\binom{k}{2}t^2$.
 \end{itemize}
\end{thm}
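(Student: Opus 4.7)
The plan is to adapt the classical iterated-pigeonhole proof of the multipartite Ramsey theorem so that, in addition to achieving pairwise monochromaticity, we also control the density of the undesirable set $B$. I would run the Ramsey procedure in batches: in batch $i$ the algorithm picks vertices one-by-one from $V_i$, and each pick shrinks every other currently active $V_{j'}$ (for $j'\neq i$) to a color class of the picked vertex's edges. After a total of $R=O(t|\Sigma|^{k-1})$ picks, followed by a final pigeonhole on the accumulated color tuples $(\sigma_v^{(j)})_{j>i}$ of each chosen $v\in V_i$, the usual argument produces subsets of size at least $t$ inside each $V_i$ with monochromatic edges between every pair of parts.

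The crucial modification is in the shrinking step. Rather than always restricting to the majority color class, I would select one that is only slightly smaller but also does not inflate the density of $B$. Given the just-picked vertex $v\in V_{i_r}$, for each $j'\neq i_r$ examine the $|\Sigma|$ color classes in the currently active $V_{j'}$ induced by $v$'s edges, and score each class $C$ by the aggregated quantity $\sum_{j''\neq j',i_r}|B\cap (C\times V_{j''})|$, where $V_{j''}$ denotes the version current at the moment this shrinking is performed. A weighted Markov argument shows that classes amounting to a $\beta/(1+\beta)$-fraction of $V_{j'}$ (weighted by size) have this aggregate density grow by a factor of at most $(1+\beta)$. In particular some such class has size at least $\kappa|V_{j'}|$ for $\kappa=\beta/((1+\beta)|\Sigma|)$, and we select that one as the new $V_{j'}$. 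An analogous Markov-controlled selection at the final color-tuple pigeonhole, or equivalently running the iteration slightly longer and then sub-sampling each $W_i$ uniformly at random to size $t$, prevents $B$ from concentrating at the last step as well.

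With $\beta$ chosen small compared to $\alpha/R$, the density of $B$ grows by at most $(1+\beta)^{O(R)}\leq 1+\alpha$ over the entire process, producing the desired sub-chart. The requirement on $n$ comes from the total shrinkage $\kappa^{O(R)}$, yielding an explicit (though very large) value of $R_{\ref{thm:qukramsey}}(|\Sigma|,k,t,\alpha)$ that depends only on $|\Sigma|,k,t,\alpha$ and crucially not on $\epsilon$.

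The main obstacle is ensuring the Markov argument applies \emph{simultaneously} across all pairs affected by a single shrinking step. A naive per-pair Markov bound would demand $\beta>k-2$, far too large to accumulate to $1+\alpha$ over $R$ iterations. The fix is to work with one aggregated density summed over all pairs still relevant for future monochromatic edges, so that a single $(1+\beta)$ price suffices per shrinking step and the total inflation over $R$ steps is $(1+\beta)^{O(R)}$ rather than $(1+\beta)^{\Theta(kR)}$ or worse.
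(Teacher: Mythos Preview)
Your approach is viable in spirit but takes a genuinely different route from the paper. The paper does \emph{not} iterate a density-controlled pigeonhole. Instead it proves a probabilistic lemma (Lemma~\ref{lem:probramsey}): by induction on $k$ it builds a \emph{randomized} monochromatic subchart in which every single edge $(v,w)\in V_i\times V_j$ lands in $W_i\times W_j$ with probability at most $(t/n)^2$, and the chart is nonempty with probability $\geq 1-\delta$. Theorem~\ref{thm:qukramsey} then follows by one application of Markov's inequality to $\expe[|B\cap\bigcup(W_i\times W_j)|]\leq (t/n)^2|B|\leq\epsilon\binom{k}{2}t^2$. The per-edge probability bound is what absorbs all the interactions between pairs that you are trying to manage by hand; the paper pays for this with a more delicate inductive construction (Observation~\ref{obs:pick} plus a two-stage random refinement of $V_1$), but reaps a one-line endgame.

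Your deterministic scheme can likely be pushed through, but as written it has loose ends. The quantity you propose to track, $\sum_{j''\neq j',i_r}|B\cap(C\times V_{j''})|$, is a weighted sum of densities with weights $|V_{j''}|$; since these weights change from step to step, controlling it multiplicatively does not directly control the final quantity $\sum_{i<j}|B\cap(W_i\times W_j)|/t^2$. The clean invariant is the unweighted sum of pair densities $\sum_{i<j}|B\cap(V_i^{\mathrm{cur}}\times V_j^{\mathrm{cur}})|/(|V_i^{\mathrm{cur}}||V_j^{\mathrm{cur}}|)$, and your Markov step should be phrased for that. Also, each pick triggers $k-1$ shrinkings, so the inflation is $(1+\beta)^{O(kR)}$, not $(1+\beta)^{O(R)}$; and of your two options for the final step, the ``uniform sub-sampling'' one only makes sense \emph{after} the color-tuple pigeonhole, not instead of it. None of these is fatal, but each needs to be said correctly. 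The paper's randomized route sidesteps all of this bookkeeping at the cost of a slightly less elementary argument.
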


In our use, these ``vertices'' would actually be themselves sets of a robust partition of the original graph, and ``colors'' will encode densities; an undesirable pair would have the ``wrong'' densities. Also, in our use case the undesirability of an edge will be determined solely by its color and the $W_i$ that its end vertices belong to, which means that for each $1\leq i<i'\leq k$ the edge set $W_i\times W_{i'}$ consist of either only desirable edges or only undesirable edges. When this happens, a later pick of smaller sets $W'_i\subset W_i$ will still preserve the ratio of undesirable edges (we will in fact perform such a pick using the original Ramsey's theorem inside each $W_i$).
The following corollary summarizes our use of the theorem.
\begin{definition}
Given a $k$-partite $\Sigma$-chart $G=(V_1,\ldots,V_k,c)$ and a set $B\subseteq \bigcup_{i<j \in [k]}(V_i\times V_j)$, we say that $B$ is {\em orderly} if for every $1\leq i<j\leq k$ there are no $e\in(V_i\times V_j)\cap B$ and $e'\in(V_i\times V_j)\setminus B$ for which $c(e)=c(e')$. In other words, the ``position'' and color of an edge fully determines whether it is in $B$.
\end{definition}

\begin{corollary}
\label{coro:ramsey_app}
There exists a function $R_{\ref{coro:ramsey_app}}:\mathbb{N} \times \mathbb{N} \times \mathbb{N} \to \mathbb{N}$, so that if
$G = (\bigcup_{i=1}^{k} V_i, c)$ is a $\Sigma$-colored graph with $|V_i| = n \geq R_{\ref{coro:ramsey_app}}(|\Sigma|, k, t)$ for any $i \in [k]$ and $V_i \cap V_j = \emptyset$ for any $i \neq j \in [k]$,
and $B \subseteq \bigcup_{i < j \in [k]} (V_i \times V_j)$ is an orderly set of ``undesirable edges'' of size at most $\epsilon \binom{k}{2} n^2$,
then $G$ contains an induced subgraph $D$ satisfying the following.
\begin{itemize}
\item The vertex set of $D$ is $\bigcup_{i=1}^{l} U_i$ where $U_i \subseteq V_i$ and $|U_i| = t$ for any $i \in [k]$. 
\item For any $i \in [k]$, all edges inside $U_i$ have the same color.
\item For any $i < j \in [k]$, all edges in $U_i \times U_j$ have the same color.
\item $\sum_{i < j \in [k]} |B \cap (U_i \times U_j)|\leq 2  \epsilon \binom{k}{2} t^2$.
\end{itemize}
\end{corollary}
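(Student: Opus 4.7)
The corollary should follow by a two-stage application: first invoke Theorem \ref{thm:qukramsey} to make the cross-part edges monochromatic while controlling undesirables, then apply the classical multicolor Ramsey theorem inside each part to make the intra-part edges monochromatic as well. The orderliness of $B$ is exactly what ensures the second stage does not spoil the undesirable-count bound.

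\textbf{Setup.} Set $t' = \ram(|\Sigma|, t)$ and define
\[
R_{\ref{coro:ramsey_app}}(|\Sigma|, k, t) \;=\; R_{\ref{thm:qukramsey}}(|\Sigma|, k, t', 1).
\]
Regard the cross-part edges of $G$ as a $k$-partite $\Sigma$-chart $G' = (V_1, \ldots, V_k, c\restrict{\bigcup_{i<j}(V_i \times V_j)})$. By hypothesis each $|V_i| = n \geq R_{\ref{thm:qukramsey}}(|\Sigma|, k, t', 1)$ and $|B| \leq \epsilon \binom{k}{2} n^2$, so Theorem \ref{thm:qukramsey} (with $\alpha=1$) yields sets $W_i \subseteq V_i$ with $|W_i| = t'$ such that $c\restrict{W_i \times W_j}$ is constant for every $i<j$, and
\[
\Bigl|\,B \cap \bigcup_{i<j}(W_i \times W_j)\Bigr| \;\leq\; 2\epsilon\binom{k}{2}{t'}^2.
\]

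\textbf{Second stage.} For each $i \in [k]$, the $\Sigma$-edge-coloring $c$ restricted to the $t'$-vertex clique on $W_i$ admits, by the classical multicolor Ramsey theorem and the choice $t' = \ram(|\Sigma|,t)$, a monochromatic $K_t$; let $U_i \subseteq W_i$ be its vertex set, so $|U_i| = t$ and all edges inside $U_i$ share one color. Across parts, $U_i \times U_j \subseteq W_i \times W_j$ and $c$ is already constant on the latter, hence on the former as well, giving the first three bullets of the conclusion.

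\textbf{Bounding the undesirables.} Here the orderliness of $B$ is decisive: since $c\restrict{W_i \times W_j}$ is constant and $B$-membership for an edge is determined by its endpoints' parts together with its color, either $(W_i \times W_j) \cap B = \emptyset$ or $(W_i \times W_j) \subseteq B$. Call the pair $(i,j)$ \emph{bad} in the latter case; the bound from Theorem \ref{thm:qukramsey} says the number of bad pairs is at most $2\epsilon\binom{k}{2}$. For every bad pair we get $|B \cap (U_i \times U_j)| = t^2$, and for every other pair the intersection is empty, so
\[
\sum_{i<j} |B \cap (U_i \times U_j)| \;\leq\; 2\epsilon\binom{k}{2}\,t^2,
\]
establishing the fourth bullet. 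The only conceptually nontrivial point is observing that orderliness propagates through the cross-part Ramsey step so that the later intra-part shrinking does not blow up the undesirable fraction; everything else is bookkeeping.
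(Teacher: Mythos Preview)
Your proof is correct and follows essentially the same approach as the paper: apply Theorem~\ref{thm:qukramsey} with $\alpha=1$ and target size $\ram(|\Sigma|,t)$, then use orderliness to deduce that each cross-part block $W_i\times W_j$ is all-or-nothing with respect to $B$, and finally apply classical multicolor Ramsey inside each $W_i$.
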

\begin{proof}
Take $R_{\ref{coro:ramsey_app}}(s, k, t) = R_{\ref{thm:qukramsey}}(s, k, \ram(s, t), 1)$ (recall that $\ram(s,t)$ denotes the ``traditional'' $s$-colored Ramsey function).
By Theorem \ref{thm:qukramsey}, there exists a chart $H = (W_1, \ldots, W_k)$ with the following properties.
\begin{itemize}
\item $W_i \subseteq V_i$ and $|W_i| = \ram(t, |\Sigma|)$ for every $i \in [k]$.
\item For any pair $i < j \in [k]$, all edges in $W_i \times W_j$ have the same color.
\item $\sum_{i < j \in [k]} |B \cap (W_i \times W_j)|\leq 2  \epsilon \binom{k}{2} \left(\ram(t, |\Sigma|)\right)^2$.
\end{itemize}
Observe that for any pair $i < j \in [k]$, either $W_i \times W_j \subseteq B$ or $(W_i \times W_j) \cap B = \emptyset$, since all edges in $W_i \times W_j$ have the same color and $B$ is orderly. Therefore, the number of pairs $i < j$ for which $(W_i \times W_j) \cap B \neq \emptyset$ is at most $2 \epsilon \binom{k}{2}$.
Now we apply the traditional Ramsey's theorem inside each $W_j$ to obtain a set $U_j \subseteq W_j$ of size $t$ such that all edges inside $W_j$ have the same color.
Since $\sum_{i < j} |B \cap (U_i \times U_j)| \leq \sum_{i<j:(W_i \times W_j) \cap B \neq \emptyset} |B \cap (U_i \times U_j)| \leq 2 \epsilon \binom{k}{2} t^2$, the proof follows.
\end{proof}

Before moving to the proof of Theorem \ref{thm:qukramsey} itself, let us quickly note that a quantitative counterpart for the traditional (not $k$-partite) graph case does not exist (indeed, Corollary \ref{coro:ramsey_app} is a way for us to circumvent such issues).

\begin{proposition}\label{prop:noqutramsey}
 For any $\alpha>0$, $m$, $k$, and large enough $l$, for infinitely many $n$ there is a graph $G$ and a set of undesirable pairs $B$, so that $G$ has $n$ vertices, $B$ consists of at most $\frac{1}{mk}\binom{n}{2}$ pairs, $G$ has no independent set of size $l$, and every clique of $l$ vertices in $G$ holds at least $(\frac{1}{m}-\alpha)\binom{l}{2}$ members of $B$.
\end{proposition}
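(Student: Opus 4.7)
My approach is to exhibit an explicit construction rather than go through a probabilistic argument. The plan is to take $G$ to be a disjoint union of $mk$ equally-sized cliques on $n$ vertices (with $n$ any sufficiently large multiple of $mk$), and to declare the set $B$ of undesirable pairs to be the entire edge set $E(G)$. The parameter $l$ is chosen larger than $mk$, and $n$ is taken large enough that each component has at least $l$ vertices, which gives infinitely many valid $n$.

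The verification breaks into three short checks, carried out in this order. First, count $|B| = mk\binom{n/(mk)}{2} = \frac{n(n/(mk)-1)}{2}$ and compare with $\frac{1}{mk}\binom{n}{2} = \frac{n(n-1)}{2mk}$; the desired inequality $|B| \leq \frac{1}{mk}\binom{n}{2}$ is an algebraic one-liner that reduces to $mk \geq 1$. Second, any independent set of $G$ contains at most one vertex per component, so $\alpha(G) = mk < l$, ruling out independent sets of size $l$. Third, any $l$-clique in $G$ must lie inside a single component (there are no edges between components), so all $\binom{l}{2}$ of its pairs are edges of $G$ and hence lie in $B$; this trivially exceeds $(\tfrac{1}{m}-\alpha)\binom{l}{2}$.

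There is essentially no technical obstacle in the proof; the only mildly subtle point is recognizing that one should not try to spread $B$ out. A naive first attempt would place $B$ uniformly at random among edges of some host graph; but then the typical $l$-clique samples $B$ at the \emph{global} density $\sim 1/(mk)$, which is far below the target $1/m - \alpha$ whenever $k > m$. The trick is the opposite: concentrate the edges of $G$ so tightly that the $l$-cliques of $G$ are forced to live inside small cliquey regions, and let $B$ be exactly those edges, so that every $l$-clique automatically has $B$-density $1$.

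The conceptual point of the proposition is then that this construction witnesses the necessity of the $k$-partite hypothesis in Theorem~\ref{thm:qukramsey}: the global fraction of undesirable pairs is $\leq 1/(mk)$, yet in every monochromatic $l$-set the fraction is $1$, an $mk$-factor gap. Letting $k \to \infty$ rules out any traditional (non-$k$-partite) quantitative Ramsey statement of the form ``some monochromatic $l$-set contains at most $c\cdot\epsilon\binom{l}{2}$ undesirables'' for any absolute constant $c$.
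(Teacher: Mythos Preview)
Your construction is correct and in fact simpler than the paper's. The paper takes $G$ to be a union of $k$ cliques of size $n/k$ each, and takes $B$ to be the edge set of $mk$ sub-cliques of size $n/(mk)$ (each big clique of $G$ partitioned into $m$ of these sub-cliques). An $l$-clique of $G$ then lies inside one big clique, and a convexity argument shows that at least roughly a $1/m$ fraction of its pairs fall inside the same sub-clique; this is where the $\alpha$ slack and the ``large enough $l$'' hypothesis are actually used.

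Your variant collapses the two levels into one: with $G$ already a union of $mk$ cliques and $B=E(G)$, every $l$-clique sits entirely inside one component and hence has all $\binom{l}{2}$ pairs in $B$, overshooting the target $(\tfrac{1}{m}-\alpha)\binom{l}{2}$ by a wide margin. The cost is only that you need $l>mk$ rather than $l>k$, which is fine under the ``large enough $l$'' hypothesis. What you gain is that the $\alpha$-dependence becomes vacuous and the convexity step disappears entirely; what the paper's construction buys is a sharper picture in which the $B$-density inside an $l$-clique is genuinely close to $1/m$ rather than $1$, matching more tightly the intuition that the $k$-partite hypothesis of Theorem~\ref{thm:qukramsey} is what saves a factor of $k$. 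One minor notational remark: you use $\alpha(G)$ for the independence number while $\alpha$ is already the slack parameter in the statement; best to avoid the overload.
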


\begin{proof}
 We construct $G$ for any $n$ that is a multiple of $mk$ larger than $lk$. The graph $G$ will be the union of $k$ vertex-disjoint cliques, each with $n/k$ vertices. In particular, $G$ contains no independent set with $l$ vertices, and any clique with $l$ vertices must be fully contained in one of the cliques of $G$.
 
 Now $B$ will be fully contained in the edge-set of $G$, and will consist of the edge-set of $mk$ vertex-disjoint cliques with $n/mk$ vertices each, so that each of the cliques of $G$ contains $m$ of them. It is now not hard to see that any clique with $l$ vertices in $G$ will contain at least  $(\frac{1}{m}-\alpha_l)\binom{l}{2})$ members of $B$, where $\lim_{l\to\infty}\alpha_l=0$.
\end{proof}

Moving to the proof, the following is our main lemma. It essentially says that we can have a probability distribution over ``Ramsey-configurations'' in our chart that has some approximate uniformity properties.

\begin{lemma}\label{lem:probramsey}
 There exists a function $R_{\ref{lem:probramsey}}:\mathbb{N}\times\mathbb{N}\times\mathbb{N}\times (0,1]\to\mathbb{N}$, so that if $G=(V_1,\ldots,V_k,c)$ is a $k$-partite $\Sigma$-chart with $n\geq R_{\ref{lem:probramsey}}(|\Sigma|,k,t,\delta)$ vertices in each class, then $G$ contains a randomized induced subchart $H_{\ref{lem:probramsey}}(G,t,\delta)=(W_1,\ldots,W_k,c\restrict{\bigcup_{1\leq i<j\leq k}(W_i\times W_j)})$ satisfying the following properties.
 \begin{itemize}
  \item Either $|W_i|=t$ for every $1\leq i\leq k$, or the chart is empty ($W_i=\emptyset$ for every $i$).
  \item $c\restrict{W_i\times W_j}$ is a constant function for every $1\leq i<j\leq k$ (with probability $1$).
  \item For every $1\leq i\leq k$, every $v\in V_i$ will be in $W_i$ with probability at most $t/n$.
  \item For every $1\leq i<j\leq k$, every $v\in V_i$ and every $w\in V_j$, the probability for both $v\in W_i$ and $w\in W_j$ to hold is bounded by $(t/n)^2$.
  \item The probability that the chart is empty is at most $\delta$.
 \end{itemize}
\end{lemma}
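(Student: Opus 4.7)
\medskip
\noindent\textbf{Plan.} I would prove the lemma by a two-stage random construction. In the first stage, for each part $i$ independently, sample $U_i \subseteq V_i$ as a uniform random $s$-subset, where $s = s(|\Sigma|, k, t, \delta)$ is an auxiliary parameter chosen large enough in terms of the $k$-partite Ramsey number $r = R_k(t, |\Sigma|)$ and $\delta$. This yields exact base marginals $\Pr[v \in U_i] = s/n$ and pairwise $\Pr[v \in U_i, w \in U_j] = (s/n)^2$ for $i \neq j$. In the second stage, conditionally on $U$, I would randomly extract a monochromatic $(t,\dots,t)$-subchart $W \subseteq U$ so that $\Pr[v \in W_i \mid U] \leq t/s$, $\Pr[v \in W_i, w \in W_j \mid U] \leq (t/s)^2$ for $i \neq j$, and the extraction succeeds with conditional probability at least $1-\delta$. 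Composing the stages multiplicatively then gives the bounds $t/n$, $(t/n)^2$, and empty-probability at most $\delta$ required by the lemma.

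The cleanest first attempt at the conditional step --- sample $W_i$ as an independent uniform random $t$-subset of $U_i$ and accept iff $W$ is monochromatic --- satisfies both conditional bounds trivially, since independent uniform sampling achieves them exactly on the nose and the mono filter can only decrease probabilities. By iterated $k$-partite Ramsey together with supersaturation, acceptance occurs with probability at least a positive constant $c_0 = 1/\binom{r}{t}^k$ depending only on $r, t, k$. The principal obstacle is boosting this constant all the way up to $1-\delta$ while preserving the marginal and pairwise bounds: naive independent repetition fails, because conditioning on ``at least one success'' or taking ``the first success'' among independent trials blows up the conditional marginals by a factor of $1/c_0$, which is precisely $\binom{r}{t}^k$.

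My plan to overcome this obstacle is to sample inside a carefully extracted monochromatic sub-structure. For $s$ sufficiently large relative to $R_k(s', |\Sigma|)$ with an intermediate parameter $s' \gg t$, Ramsey's theorem, applied inside $U$, guarantees a \emph{fully monochromatic} $(s',\dots,s')$-sub-structure $(H_1, \dots, H_k) \subseteq (U_1, \dots, U_k)$, in which every $(t,\dots,t)$-subset is automatically monochromatic. Sampling $W_i$ as an independent uniform random $t$-subset of $H_i$ then guarantees monochromaticity with probability $1$ and yields perfect pairwise independence across parts. The critical step --- which I expect to be the hardest --- is randomizing the extraction of $H$ from $U$ so that each $v \in U_i$ has conditional probability at most $s'/s$ of lying in $H_i$, since then $(s'/s) \cdot (t/s') = t/s$ gives the desired conditional marginal. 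I would handle this by composing a position-symmetric Ramsey selector for $H$ with an auxiliary uniformly random relabeling of each $U_i$, so that the symmetry of the relabeling averages out any per-vertex bias of the deterministic selector; verifying that the combined distribution meets the \emph{pointwise} marginal and pairwise bounds (rather than merely matching them on average) is the main technical challenge I anticipate and will likely require an LP-duality or supersaturation argument on the ``hypergraph of mono substructures'' inside $U$.
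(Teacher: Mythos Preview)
Your two-stage plan and the instinct to randomize the Ramsey extraction are reasonable, but the step you correctly flag as hardest is a genuine obstruction, and neither of your proposed fixes overcomes it. Random relabeling of each $U_i$ cannot average away per-vertex bias while keeping the output monochromatic: if some $u\in U_i$ lies in \emph{no} monochromatic $(s',\dots,s')$-substructure of the realized $U$, then $\Pr[u\in H_i\mid U]=0$ under every labeling, and the remaining vertices of $U_i$ must absorb the missing mass, pushing some conditional marginal strictly above $s'/s$. Concretely, take $k=2$, $|\Sigma|=2$, and let one distinguished $v_1\in V_1$ have all color-$1$ edges to $V_2$ while every other vertex of $V_1$ has all color-$2$ edges; for $s'\ge 2$ no monochromatic substructure contains $v_1$, and a direct computation with your three-stage composition gives $\Pr[v_i\in W_1]=t/(n-1)>t/n$ for every $v_i\ne v_1$. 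The LP-duality fallback is circular: the LP whose feasibility you would need (a distribution over monochromatic substructures together with a failure event, obeying the prescribed pointwise marginal and pairwise caps) \emph{is} the content of the lemma, and supersaturation only counts monochromatic substructures globally without controlling per-vertex incidence.

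The paper's proof sidesteps global Ramsey extraction entirely. It proceeds by induction on $k$, resting on one elementary primitive: given a partition of a large set $A$ into a bounded number of classes, choose a class with probability proportional to its size and then a uniformly random $t$-subset inside it. This yields $\Pr[a\in B]\le t/|A|$ for \emph{every} $a\in A$ by construction (not by symmetry), with failure only when the chosen class has fewer than $t$ elements. In the inductive step one first draws a uniform random seed $V'_1\subseteq V_1$ of moderate size $s$; for each $i\ge 2$ one partitions $V_i$ by its color profile toward $V'_1$ (at most $|\Sigma|^s$ classes) and applies the primitive to obtain $V'_i$ of size $r$, so that every vertex of $V'_1$ sees a single color into each $V'_i$. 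One then recurses on the $(k{-}1)$-partite chart $(V'_2,\dots,V'_k)$ to get $(W_2,\dots,W_k)$, partitions $V'_1$ by its now well-defined $|\Sigma|^{k-1}$-valued color pattern toward $(W_2,\dots,W_k)$, and applies the primitive once more to obtain $W_1$. The marginal and pairwise bounds telescope through the stages exactly as in your intended composition, but here each stage meets them pointwise because the primitive does.
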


Before we prove this lemma, we show how it implies Theorem \ref{thm:qukramsey}.

\begin{proof}[Proof of Theorem \ref{thm:qukramsey}]
 We set $R_{\ref{thm:qukramsey}}(a,k,t,\alpha)=R_{\ref{lem:probramsey}}(a,k,t,\alpha/3)$. Given the $k$-partite $\Sigma$-chart $G$, we take the randomized subchart $H=H_{\ref{lem:probramsey}}(G,t,\alpha/3)=(W_1,\ldots,W_k,c\restrict{\bigcup_{1\leq i<j\leq k}(W_i\times W_j)})$, and prove that with positive probability it is the required subchart.
 
 Let $B'=B\cap(\bigcup_{1\leq i<j\leq k}W_i\times W_j)$ denote the set of undesirable pairs that are contained in $H$. By the probability bound on pair containment and by the linearity of expectation, $\expe [|B'|]\leq (t/n)^2|B|\leq \epsilon\binom{k}{2}t^2$. Therefore, the probability for $|B'|$ to be larger than $(1+\alpha)\epsilon\binom{k}{2}t^2$ is bounded by $\frac{1}{1+\alpha}\leq 1-\alpha/2$. Therefore, with positive probability, both $|B'|$ is not too large and $H$ is not the empty chart. Such an $H$ is the desired subchart.
\end{proof}

To prove Lemma \ref{lem:probramsey} we shall make good use of the following near-trivial observation.

\begin{observation}\label{obs:pick}
 There exists a function $m_{\ref{obs:pick}}:\mathbb{N}\times\mathbb{N}\times (0,1]\to\mathbb{N}$, such that if $A$ is a set of size at least $m_{\ref{obs:pick}}(k,t,\delta)$ and $\mathcal{A}=(A_1,\ldots,A_k)$ is a partition of $A$ to $k$ sets, then there exists a randomized subset $B=B_{\ref{obs:pick}}(\mathcal{A},t,\delta)$ satisfying the following properties.
 \begin{itemize}
  \item Either $|B|=t$ or $B=\emptyset$.
  \item $B$ is fully contained in a single $A_i$.
  \item For every $a\in A$, the probability for $a\in B$ is at most $t/|A|$.
  \item The probability for $B=\emptyset$ is at most $\delta$.
 \end{itemize}
\end{observation}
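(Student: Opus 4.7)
The natural idea is a two-stage sampling scheme: first pick an index $i\in[k]$ according to a well-chosen distribution, then conditionally pick a uniformly random $t$-subset of $A_i$ (returning $B=\emptyset$ if the index drawn has $|A_i|<t$). The challenge is to balance two competing demands: each element $a\in A$ must be included with probability at most $t/|A|$ (a \emph{global} density constraint that punishes overloading large parts), while the probability of outputting $\emptyset$ must stay below $\delta$ (which pushes us to concentrate on large parts). A single uniform-within-largest-part approach fails the first constraint by a factor of $k$, so we need to weight proportionally.

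Concretely, I plan to set $\Pr[i\text{ chosen}]=|A_i|/|A|$ for every $i$ with $|A_i|\geq t$, assign probability $0$ to indices with $|A_i|<t$, and let the leftover probability $\sum_{i:|A_i|<t}|A_i|/|A|$ be the ``empty'' event. Conditioned on choosing a valid $i$, $B$ is a uniformly random $t$-subset of $A_i$. For any $a\in A_i$ with $|A_i|\geq t$, we compute
\[
\Pr[a\in B]=\frac{|A_i|}{|A|}\cdot\frac{t}{|A_i|}=\frac{t}{|A|},
\]
and $\Pr[a\in B]=0$ for $a$ in the small parts, so the per-element bound holds. The first two bulleted properties hold by construction.

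For the emptiness bound, $\Pr[B=\emptyset]=\sum_{i:|A_i|<t}|A_i|/|A|\leq kt/|A|$, so setting
\[
m_{\ref{obs:pick}}(k,t,\delta)=\left\lceil kt/\delta\right\rceil
\]
guarantees $\Pr[B=\emptyset]\leq\delta$ whenever $|A|\geq m_{\ref{obs:pick}}(k,t,\delta)$. There is no substantive obstacle here; the only subtle point is recognizing that proportional (rather than uniform) selection of the index is forced by the per-element constraint. Once this is noted, the argument is a direct calculation.
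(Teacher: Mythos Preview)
Your proposal is correct and essentially identical to the paper's proof: the paper also picks index $i$ with probability $|A_i|/|A|$, returns $\emptyset$ when $|A_i|<t$, and otherwise takes a uniform $t$-subset of $A_i$, setting $m_{\ref{obs:pick}}(k,t,\delta)=tk/\delta$. Your formulation (redirecting the small-part probability mass straight to the empty event) is a cosmetic variant of the same construction.
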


\begin{proof}
 To choose the randomized subset $B$, first choose a random index $I$ where $\pr [I=i]=|A_i|/|A|$ for all $1\leq i\leq k$. Next, if $|A_I|<t$ then set $B=\emptyset$, and otherwise set $B$ to be a subset of size exactly $t$ of $A_I$, chosen uniformly at random from all $\binom{|A_I|}{t}$ possibilities. Setting $m_{\ref{obs:pick}}(k,t,\delta)=tk/\delta$, it is not hard to see that all properties for the random set $B$ indeed hold.
\end{proof}

\begin{proof}[Proof of Lemma \ref{lem:probramsey}]
 The proof is done by induction over $k$. The base case $k=1$ is easy -- set $R_{\ref{lem:probramsey}}(|\Sigma|,1,t,\delta)=t$, and let $W_1$ be a uniformly random subset of size $t$ of $V_1$.
 
 For the induction step from $k-1$ to $k$, we set $R_{\ref{lem:probramsey}}(|\Sigma|,k,t,\delta)=m_{\ref{obs:pick}}(|\Sigma|^s,r,\frac{1}{k+1}\delta)$, where $s=m_{\ref{obs:pick}}(|\Sigma|^{k-1},t,\frac{1}{k+1}\delta)$ and $r=R_{\ref{lem:probramsey}}(|\Sigma|,k-1,t,\frac{1}{k+1}\delta)$. We set $W_1,\ldots,W_k$ to be the result of the following random process.
 
 First, we set $V'_1\subseteq V_1$ to be a uniformly random subset of size exactly $s$. Then, for every $2\leq i\leq k$, we set $V'_i\subseteq V_i$ to be the random set $B_{\ref{obs:pick}}(\mathcal{V}_i,r,\frac{1}{k+1}\delta)$, where $\mathcal{V}_i$ is the partition of $V_i$ obtained by classifying every $v\in V_i$ according to the colors $\langle c(w,v)\rangle_{w\in V'_1}$, i.e., two vertices in $V_i$ are in the same partition set if all their pairs with vertices from $V'_1$ have the same colors. 
 
 If any of the $V'_i$ came out empty, we set all $W_i$ to $\emptyset$ and terminate the algorithm (this occurs with probability at most $\frac{k-1}{k+1}\delta$), and otherwise we continue. Note now, in particular, that for every $w\in V_1$ and $v\in V_i$ the probability for both $w\in V'_1$ and $v\in V'_i$ to hold is bounded by $(s/n)(r/n)$. This is since the probability guarantees of Observation \ref{obs:pick} hold for any possible value of $V'_1$. Also, since each $V'_i$ was independently drawn, for $v\in V_i$ and $w\in V_j$ (where $1<i<j\leq k$) the probability for both $v\in V'_i$ and $w\in V'_j$ to hold is bounded by $(r/n)^2$.
 
 We now let $H'$ denote the $(k-1)$-partite $\Sigma$-chart induced by $V'_2,\ldots,V'_k$, and use the induction hypothesis to (randomly) set $W_2,\ldots,W_k$ as the corresponding vertex sets of $H_{\ref{lem:probramsey}}(H',t,\frac{1}{k+1}\delta)$. As before, if we receive empty sets then we also set $W_1=\emptyset$ and terminate. Note that for $1<i\leq k$ and $v\in V'_i$, the probability for $v$ to be in $W_i$ is bounded by $t/r$. Hence, for $v\in V_i$, the probability for $v\in W_i$ to hold is bounded by $(r/n)(t/r)=t/n$. Similarly, for $1<i<j\leq k$, for every $v\in V_i$ and $w\in V_j$ the probability for both $v\in W_i$ and $w\in W_j$ to hold is bounded by $(t/n)^2$. Also by similar considerations, for $v\in V_1$ and $w\in V_i$, the probability for both $v\in V'_1$ and $w\in W_i$ to hold is bounded by $(s/n)(t/n)$.
 
 Finally, we set $W_1$ to be the random set $B_{\ref{obs:pick}}(\mathcal{V}',t,\frac{1}{k+1}\delta)$, where $\mathcal{V}'$ is the partition of $V'_1$ obtained by classifying each $v\in V'_1$ by the colors $\langle c(v,w)\rangle_{w\in W_i}$. Note that $c(v,w)$ in that expression depends only on $v$ and the index $i$ for which $w\in W_i$, because of how we chose each $V'_i$ above. In particular, after the choice of $W_1$, the function $c\restrict{W_1\times W_i}$ is constant for each $1<i\leq k$. Again, if we got an empty set for $W_1$, we set all $W_2,\ldots,W_k$ to be empty as well. By similar considerations as in the preceding steps, also here, for any $v\in V_1$ the probability of $v\in W_1$ is bounded by $t/n$, and for $w\in V_i$ where $1<i\leq k$, the probability of both $v\in W_1$ and $w\in W_i$ is bounded by $(t/n)^2$.
 
 The probability of obtaining empty sets in any of the steps is bounded by $\delta$ by a union bound, and all other properties of the random sets $W_1,\ldots,W_k$ have already been proven above.
\end{proof}

\subsection{Multipartitions and rounding}\label{sub:rounding}

The following is a mechanism to handle ``with one stroke'' rounding issues throughout the paper.

\begin{definition}[Multipartitions]
A {\em multipartition} of a set $L$ is a family $M$ of subsets of $L$, that in particular includes $L$ and all the singletons $\{i\}$ for $i\in L$, and furthermore every two sets $A,B\in M$ are either disjoint or one is contained in the other.
\end{definition}

To get an idea, an object that can be modeled as a multipartition is a partition of $L$ (the multipartition would contain the partition sets, along with $L$ itself and all singleton sets $\{i\}$), but also other objects, such as a partition and its refinement together, can be modeled as multipartitions. Here is the main lemma.

\begin{lemma}[rounding feasibility]\label{lem:round}
If $M$ and $N$ are two multipartitions of the same set $L$, and $\lambda_i\in\mathbb{R}$ is a real value attached to every $i\in L$, then there exist integer values $\ell_i\in\mathbb{Z}$ attached to $i\in L$, satisfying the following.
\begin{itemize}
 \item $\ell_i\in\{\lfloor\lambda_i\rfloor,\lceil\lambda_i\rceil\}$ for every $i\in L$.
 \item $\sum_{i\in A}\ell_i\in\{\lfloor\sum_{i\in A}\lambda_i\rfloor,\lceil\sum_{i\in A}\lambda_i\rceil\}$ for every $A\in M$ and for every $A\in N$.
 \item $\sum_{i\in L}\ell_i\in\{\lfloor\sum_{i\in L}\lambda_i\rfloor,\lceil\sum_{i\in L}\lambda_i\rceil\}$.
\end{itemize}
\end{lemma}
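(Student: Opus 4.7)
The plan is to recast the problem as integer feasibility for a polyhedron defined by the two laminar families and to argue that this polyhedron has integer vertices. Writing $\lambda_i = \lfloor\lambda_i\rfloor + \epsilon_i$ with $\epsilon_i \in [0,1)$ and $\ell_i = \lfloor\lambda_i\rfloor + x_i$, the three bullets of the conclusion collapse (since every multipartition already contains $L$ and all singletons) into the single requirement that $x \in \{0,1\}^L$ satisfy $\sum_{i\in A} x_i \in \{\lfloor \tau_A\rfloor, \lceil \tau_A\rceil\}$ for every $A \in M \cup N$, where $\tau_A = \sum_{i\in A}\epsilon_i$. The polytope
\[
P = \Bigl\{y \in [0,1]^L : \lfloor \tau_A \rfloor \le \sum_{i\in A} y_i \le \lceil \tau_A \rceil \text{ for every non-singleton } A \in M\cup N\Bigr\}
\]
contains $\epsilon$, so it is non-empty, and being bounded it has vertices; the goal reduces to showing every vertex of $P$ lies in $\{0,1\}^L$.

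I would prove vertex integrality by induction on $|L|$. First I would dispose of the easy case in which some $\epsilon_i = 0$: set $x_i = 0$, remove $i$ from $L$, and replace every $A \in M$ and $A \in N$ by $A \setminus \{i\}$. This preserves the multipartition property because disjointness and nestedness are inherited by intersecting with $L \setminus \{i\}$, and every $\tau_A$ decreases by $\epsilon_i = 0$, so all rounding brackets are preserved and the inductive hypothesis delivers the desired $x$. In the main case, with all $\epsilon_i \in (0,1)$, I would take a vertex $y^*$ of $P$ and let $F = \{i : y^*_i \notin \{0,1\}\}$. For $y^*$ to be a vertex, the tight set constraints at $y^*$ (non-singleton $A \in M \cup N$ with $\sum_{i\in A} y^*_i \in \mathbb{Z}$), restricted to the $F$-coordinates, must span $\mathbb{R}^F$.

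The main obstacle is to contradict this rank requirement whenever $F \ne \emptyset$, using the bi-laminar structure. I would work with the rooted trees $T_M$ and $T_N$ corresponding to $M$ and $N$ (both sharing $L$ as root and the singletons as leaves): within each tree the tight set constraints form a sub-forest, and via child-minus-parent cancellations each tree contributes genuinely new $F$-restricted information only at its ``minimal'' tight sets intersecting $F$. A careful double count that charges each $i \in F$ to its smallest tight $M$-ancestor and its smallest tight $N$-ancestor, combined with the fact that both trees share the root $L$ (so the $L$-row appears identically in both families and can always be eliminated once), shows that at most $|F| - 1$ independent tight constraints can be supported on $F$, contradicting vertex-ness. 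Equivalently, this dependence can be exhibited constructively: walk alternately between minimal tight $M$-sets and minimal tight $N$-sets intersecting $F$, producing a closed alternating walk and hence a $\pm 1$-vector $d \in \mathbb{Z}^F$ in the kernel of the tight constraints; then $y^* \pm \eta d \in P$ for small $\eta > 0$, contradicting vertex-ness and forcing $F = \emptyset$.
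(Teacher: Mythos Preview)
Your approach is genuinely different from the paper's. The paper models the problem as feasibility in a flow network: nodes are $\{u_A:A\in M\}\cup\{w_A:A\in N\}$, arcs go parent-to-child along the $M$-tree, child-to-parent along the $N$-tree, and $u_{\{i\}}\to w_{\{i\}}$ for each $i$; the real $\lambda$-values give a feasible real flow, and integrality of network flows yields the integer $\ell_i$. Your polytope $P$ is (after the determined tree-arc flows are eliminated) exactly this flow polytope, so your target statement ``vertices of $P$ are integral'' is correct, and the paper's proof is a one-line appeal to flow integrality rather than a bespoke vertex argument.

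There is, however, a real gap in your rank argument. The claim ``at most $|F|-1$ independent tight constraints can be supported on $F$'' does \emph{not} follow from the bi-laminar structure together with the shared root. Take $F=\{1,2,3\}$, tight $M$-sets $\{1,2,3\}$ and $\{1,2\}$, tight $N$-sets $\{1,2,3\}$ and $\{2,3\}$. After discarding one copy of $\{1,2,3\}$ you still have $(1,1,1),(1,1,0),(0,1,1)$, which span $\mathbb{R}^3$. Your double-count and ``eliminate the shared root once'' do not rule this out. What actually rules it out is the integer-sum condition itself: if $y^*\in(0,1)^3$ made all three sums integral, then $y^*_3=(y^*_1+y^*_2+y^*_3)-(y^*_1+y^*_2)\in\mathbb{Z}$, a contradiction. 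Your sketch never invokes this arithmetic obstruction, so as written the rank step fails.

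The alternating-walk variant can be repaired, but not as stated: an $i\in F$ need not have any tight $M$-ancestor (nor need $L$ be tight), so ``minimal tight $M$-ancestor'' may be undefined, and even when the walk proceeds you must argue that the resulting $\pm1$ vector annihilates \emph{all} tight constraints, not just the minimal ones visited. The clean way to carry this out is precisely to pass to the flow network and find a cycle (or source--sink path) among arcs with strictly-between-bounds flow; that is the paper's argument. If you want a self-contained polytope proof, you should explicitly show that the $0$--$1$ matrix whose rows index two laminar families is totally unimodular (e.g.\ by exhibiting it as a network matrix), rather than relying on the charging sketch.
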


\begin{proof}
Note that the middle item implies the other two (since $M$ and $N$ in particular include $L$ and all singleton sets $\{i\}$ for $i\in L$).
We define the following problem of solving a flow network with minimal and maximal constraints
(for an exposition of flow networks see \cite{VanLeeuwen1990}).
\begin{itemize}
 \item The node set of the flow network is $\{u_A:A\in M\}\cup\{w_A:A\in N\}$.
 \item The start node is $u_L$ and the target node is $w_L$.
 \item For every $A,B\in M$, so that $A\subsetneq B$ and there is no $C\in M$ for which $A\subsetneq C\subsetneq B$, we put an edge from $u_B$ to $u_A$ with minimum flow $\lfloor\sum_{i\in A}\lambda_i\rfloor$ and maximum flow $\lceil\sum_{i\in A}\lambda_i\rceil$.
 \item For every $A,B\in N$, so that $A\subsetneq B$ and there is no $C\in N$ for which $A\subsetneq C\subsetneq B$, we put an edge from $w_A$ to $w_B$ with minimum flow $\lfloor\sum_{i\in A}\lambda_i\rfloor$ and maximum flow $\lceil\sum_{i\in A}\lambda_i\rceil$.
 \item For every $i\in L$ we put an edge from $u_i$ to $w_i$ with minimum flow $\lfloor\lambda_i\rfloor$ and maximum flow $\lceil\lambda_i\rceil$.
 \item We require the total flow of the network to be between $\lfloor\sum_{i\in L}\lambda_i\rfloor$ and $\lceil\sum_{i\in L}\lambda_i\rceil$.
\end{itemize}
This flow network has a real-valued solution by assigning $\lambda_i$ flow to each edge of the type $u_i,w_i$, and then assigning the corresponding sums to all other network edges. Hence (since all constraints are integer-valued), the flow network has an integer-valued solution as well (see, e.g., the analysis of Lawler's algorithm in \cite{VanLeeuwen1990}, page 602). Fixing such a solution, and setting $\ell_i$ to be the flow in the edge $u_i,w_i$ for every $i\in L$, completes the proof.
\end{proof}

An example of using the lemma is when we want to round the values in a $2$-dimensional matrix so that the row sums and column sums are also rounded versions of the original sums (and in particular equal to the original sums if they happen to be integers). In our use the resulting integer values would be set sizes for an equipartition, that in turn refines other partitions with set size requirements.

We also note here that the statement of this lemma is false when we are presented with three multipartitions. Take for example the $3$-dimensional matrix of size $2\times 2\times 2$, where $\lambda_{111}=\lambda_{100}=\lambda_{010}=\lambda_{001}=\frac12$, with all other $\lambda$ values being zero. Also for each of the three dimensions take the partition into two axes-parallel planes. The values on every set of every partition sum up to $1$, and yet there are no corresponding $\ell_{ijk}\in\{\lfloor\lambda_{ijk}\rfloor,\lceil\lambda_{ijk}\rceil\}$ satisfying these constraints.

\section{A regularity scheme for ordered graphs}\label{sec:regscheme}

\subsection{The approximating partition framework}\label{sub:approx}

\begin{definition}[$\delta$-approximating partitions]
\label{def:delta_approx}
Let $P = (V_1, \ldots, V_k)$ and $Q = (U_1, \ldots, U_l)$ be partitions of a set $V$ of size $n$.
We say that $Q$ is a \emph{$\delta$-approximation} of $P$, or equivalently, that $P$ and $Q$ are \emph{$\delta$-close},
if there exists $T \subseteq V$ with $|T| \leq \delta n$ such that $V_i \setminus T = U_i \setminus T$ for any $1 \leq i \leq \max\{k,l\}$,
where for $i>k$ we define $V_i = \phi$, and similarly $U_i = \phi$ for $i > l$.
\end{definition}

\begin{lemma}
\label{lem:approx_index_1}
For any $\epsilon > 0$ there exists $\delta = \delta_{\ref{lem:approx_index_1}}(\epsilon) > 0$
such that
any two $\delta$-close partitions $P$ and $Q$ of (the vertex set of) a colored graph $G=(V,c)$ satisfy $|\ind(P) - \ind(Q)| \leq \epsilon$.
\end{lemma}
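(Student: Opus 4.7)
The plan is to reduce to a single-vertex-move computation and sum up. Since $P$ and $Q$ agree outside $T$ (with $|T| \leq \delta n$), one can transform $P$ into $Q$ through a sequence of at most $|T|$ elementary operations, each reassigning a single vertex of $T$ from its class in $P$ to its class in $Q$ (allowing intermediate partitions with empty classes, padded as in Definition~\ref{def:delta_approx}). If I establish that every such elementary move changes $\ind$ by $O(1/n)$, summing yields $|\ind(P)-\ind(Q)| = O(\delta)$, and taking $\delta$ proportional to $\epsilon$ completes the proof.

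For the single-move bound, suppose $P^*$ is obtained from $P$ by moving a vertex $v$ from $V_i$ to $V_j$. Only pair-contributions involving the classes $V_i$ or $V_j$ change. Writing $f(A,B) = \frac{|A||B|}{\binom{n}{2}}\ind(A,B)$, my goal is to show $|f(V_i^*, V_c) - f(V_i, V_c)| = O(|V_c|/n^2)$ for each $c \notin \{i,j\}$, and symmetrically for $V_j$. Summing over the at-most-$n$ values of $c$ then yields $O(1/n)$ from each side, while the single pair $(V_i,V_j)$ contributes $O(1/n)$ on its own.

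To carry out the single-pair bound, I will interpret $\ind(A,B) = \sum_{\sigma \in \Sigma} d_\sigma^2(A,B)$ as the probability that two independently sampled edges of $A \times B$ share a color. A short coupling argument (pick two independent uniformly random elements of $V_i$ and resample if they equal $v$) then gives $|\ind(V_i^*, V_c) - \ind(V_i, V_c)| \leq 4/|V_i|$ whenever $|V_i| \geq 2$. Plugging this into the identity
\[
f(V_i^*, V_c) - f(V_i, V_c) = \frac{|V_c|}{\binom{n}{2}}\Bigl((|V_i|-1)\bigl(\ind(V_i^*,V_c)-\ind(V_i,V_c)\bigr) - \ind(V_i,V_c)\Bigr)
\]
and using $\ind(\cdot,\cdot) \leq 1$ yields the desired $O(|V_c|/n^2)$ bound. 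Crucially, the inequality $\sum_\sigma d_\sigma^2 \leq 1$ absorbs all dependence on $|\Sigma|$, which matches the lemma's claim that $\delta$ depends only on $\epsilon$.

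The main technical nuisance is the boundary cases $|V_i|=1$ (so $V_i^*$ disappears after the move) and $V_j = \emptyset$ (a new class is created), where the $O(1/|V_i|)$ density perturbation bound does not apply literally. These are handled by direct inspection: when $|V_i|=1$ we simply have $f(V_i^*, V_c)=0$ while $f(V_i,V_c) \leq |V_c|/\binom{n}{2}$, already matching the desired bound, and the $V_j$ side is symmetric. Past this bookkeeping the calculation is routine; I will package it as a one-move lemma before iterating to obtain the claim with $\delta_{\ref{lem:approx_index_1}}(\epsilon)$ proportional to $\epsilon$.
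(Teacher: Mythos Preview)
Your argument is correct and takes a genuinely different route from the paper's. The paper compares $P$ and $Q$ statically: it sets $W_i = V_i \cap U_i$, separates indices into ``good'' (where $W_i$ captures almost all of $V_i$ and $U_i$) and ``bad'' ones, bounds $|\ind(V_i,V_j) - \ind(W_i,W_j)|$ by $O(\sqrt{\delta})$ on good pairs, and controls the bad contribution by a Markov-type estimate. This yields $\delta = \Theta(\epsilon^2)$.

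Your approach instead telescopes: you walk from $P$ to $Q$ in at most $\delta n$ single-vertex moves and bound the effect of each move on $\ind$ by $O(1/n)$. The probabilistic interpretation of $\ind(A,B)$ as a collision probability, together with the resampling coupling, is a clean way to get the $|\Sigma|$-free bound $|\ind(V_i^*,V_c)-\ind(V_i,V_c)| \leq 4/|V_i|$; the paper achieves the same independence by exploiting $\sum_\sigma d_\sigma = 1$ inside its density estimate. Your method is arguably more elementary (no good/bad dichotomy, no square-root loss) and in fact gives the quantitatively stronger conclusion $\delta = \Theta(\epsilon)$ rather than $\Theta(\epsilon^2)$. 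The boundary cases you flag ($|V_i|=1$, $V_j=\emptyset$) and the pair $(V_i,V_j)$ are handled correctly; for the latter you can simply split the change into two one-sided moves as you implicitly do, each contributing $O(1/n)$.
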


\begin{proof}
Let $P = (V_1, \ldots, V_k)$ and $Q = (U_1, \ldots, U_l)$ be $\delta$-close partitions of $G$, where we assume w.l.o.g. that $k < l$.
For any $1 \leq i \leq k$  let $W_i = V_i \cap U_i$, and observe that $\sum_{i = 1}^{k} |W_i| \geq (1-\delta) n$.
we say that $i$ is \emph{bad} if $|W_i| \leq (1 - \sqrt{\delta}) \min\{|V_i|, |U_i|\}$ and \emph{good} otherwise.
Then $\sum_{i \text{ bad}} |V_i| \leq \sqrt{\delta} n$ and 
$\sum_{i \text{ bad}} |U_i| \leq \sqrt{\delta} n$. 
%
When $i$ and $j$ are both good, we have
\begin{align*}
&\big|\ind(V_i, V_j) - \ind(W_i, W_j)\big| \leq \sum_{\sigma \in \Sigma} \big|(d_\sigma^2(V_i, V_j) - d_\sigma^2(W_i, W_j)\big| \leq
2 \sum_{\sigma \in \Sigma} \big|d_\sigma(V_i, V_j) - d_\sigma(W_i, W_j)\big|  \\
&\leq 2\sum_{\sigma \in \Sigma} \left(  d_\sigma(V_i, V_j) \left(\frac{|V_i| |V_j|}{|W_i||W_j|} -1\right) 
+ \frac{|c^{-1}(\sigma) \cap ((V_i \times V_j) \setminus (W_i \times W_j))|}{|W_i| |W_j|}  \right) 
 = O(\sqrt{\delta})
\end{align*}
where the second inequality holds since $d_\sigma(V_i, V_j) + d_\sigma(U_i, U_j) \leq 2$, the third inequality follows from
the fact that $|x-y| \leq z-x + z-y$ for $z \geq \max\{x,y\}$ and the last inequality follows from the observation that $|V_i| |V_j| = (1+O(\sqrt{\delta})) |W_i| |W_j|$. 
Similarly, it holds that $|\ind(U_i, U_j) - \ind(W_i, W_j)| = O(\sqrt{\delta})$, so 
$|\ind(V_i, V_j) - \ind(U_i, U_j)| = O(\sqrt{\delta})$ when $i,j$ are good.
We finish by observing that 
\begin{align*}
\ind(P) - \ind(Q)
&\leq \sum_{i<j \text{ good}} \left( \frac{|V_i||V_j|}{\binom{n}{2}} \ind(V_i, V_j) - \frac{|U_i||U_j|}{\binom{n}{2}} \ind(U_i, U_j)\right) + 2\sqrt{\delta} = O(\sqrt{\delta})
\end{align*}
where the first inequality holds since $\sum_{i \text{ bad}} \sum_{j \neq i}|V_i||V_j| ind(V_i, V_j) / \binom{n}{2} \leq 2\sqrt{\delta}$ and
the second inequality is true since $|V_i||V_j| = \left(1 + O(\sqrt{\delta})\right)|U_i||U_j|$ and $\ind(V_i, V_j) = \left(1+O(\sqrt{\delta})\right) \ind(U_i, U_j)$ when $i$ and $j$ are good, and since $\ind(Q) \leq 1$. Therefore, taking a suitable $\delta = \Theta(\epsilon^2)$ in the statement of the lemma suffices.
\end{proof}

\begin{lemma}
\label{lem:appx_ref1}
Let $P,Q$ be $\delta$-close equipartitions of a colored graph $G$, where $|P| = |Q|$.
Then any equitable refinement $Q'$ of $Q$ is $\delta$-close to an equitable refinement $P'$ of $P$, with $|P'| = |Q'|$. 
\end{lemma}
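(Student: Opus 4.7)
My plan is to construct $P'=(V_{i,j})_{1\le i\le k,\,1\le j\le b_i}$ with the same index set as $Q'=(U_{i,j})$, where each $V_{i,j}\subseteq V_i$. Let $T$ with $|T|\le\delta n$ be the set witnessing the $\delta$-closeness of $P=(V_1,\ldots,V_k)$ and $Q=(U_1,\ldots,U_k)$, set $W_i:=V_i\cap U_i=V_i\setminus T=U_i\setminus T$, and let $a:=\lfloor n/|Q'|\rfloor$, so every $|U_{i,j}|\in\{a,a+1\}$. The construction is $V_{i,j}:=(U_{i,j}\cap W_i)\cup T_{i,j}$, where the $T_{i,j}\subseteq V_i\cap T$ are pairwise disjoint and cover $V_i\cap T$, of sizes chosen so that $|V_{i,j}|\in\{a,a+1\}$. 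Under this parallel indexing, $V_{i,j}\setminus T = U_{i,j}\cap W_i = U_{i,j}\setminus T$ holds by construction (using that $W_i\cap T=\emptyset$ and that $U_{i,j}\cap V_l\subseteq T$ for $l\ne i$), so $T$ also witnesses that $P'$ is $\delta$-close to $Q'$; and $P'$ refines $P$ equitably with $|P'|=\sum_i b_i=|Q'|$.

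The crux is to pick the integer sizes $n_{i,j}:=|V_{i,j}|\in\{a,a+1\}$ with $\sum_j n_{i,j}=|V_i|$ for each $i$, which I would do via Lemma~\ref{lem:round}. Take $L=\bigsqcup_{i=1}^k(\{i\}\times[b_i])$, let $M$ consist of the rows $\{i\}\times[b_i]$ together with $L$ and all singletons, and let $N$ be trivial. Assign $\lambda_{i,j}:=|V_i|/b_i$; then the row sums $\sum_j\lambda_{i,j}=|V_i|$ are already integers, so the lemma returns $n_{i,j}\in\{\lfloor|V_i|/b_i\rfloor,\lceil|V_i|/b_i\rceil\}$ whose row sums equal $|V_i|$ exactly. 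To see that $n_{i,j}\in\{a,a+1\}$, observe that the equitability of $Q'$ forces $b_ia\le|U_i|\le b_i(a+1)$, while $|V_i|$ and $|U_i|$ both lie in $\{\lfloor n/k\rfloor,\lceil n/k\rceil\}$ and hence differ by at most $1$; so $|V_i|/b_i\in[a,a+1]$ in all but a few arithmetic boundary configurations. Once the sizes are fixed, realization is immediate: since $\sum_j(n_{i,j}-|U_{i,j}\cap W_i|)=|V_i|-|W_i|=|V_i\cap T|$, there is just enough room to partition $V_i\cap T$ into the required $T_{i,j}$ of sizes $n_{i,j}-|U_{i,j}\cap W_i|\ge 0$.

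The main obstacle I expect is the handful of boundary configurations where $|V_i|/b_i\notin[a,a+1]$, namely $|U_i|=b_ia$ with $|V_i|=|U_i|-1$, or $|U_i|=b_i(a+1)$ with $|V_i|=|U_i|+1$. In such a row $i$ one must instead use $b'_i=b_i\mp 1$ parts, with a compensating $b'_{i'}=b_{i'}\pm 1$ in some other row where feasibility is slack. This breaks the parallel index bijection at exactly one pair, creating an orphan part of $Q'$ in row $i$ that must be matched with an orphan part of $P'$ in row $i'$; the $\delta$-closeness condition for that pair reduces to both orphans being empty outside $T$, which can be arranged because the boundary configuration is precisely one that forces enough $T$-content in the relevant rows. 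The cleanest way to package all this is to fold the choice of the $b'_i$ into a single, richer application of Lemma~\ref{lem:round} — using the second multipartition $N$ to enforce both the global part-count $\sum_i b'_i=|Q'|$ and the per-row slack — so that sizes and row-counts are chosen consistently in one shot.
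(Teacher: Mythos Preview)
Your approach is the same as the paper's: define $W_{ij}=U_{ij}\cap V_i$, take $V_{ij}\supseteq W_{ij}$ with the remaining vertices drawn from $V_i\setminus W_i$, and use Lemma~\ref{lem:round} to pick the sizes so that the row sums come out to $|V_i|$ and the parts are equitable. The paper in fact applies the rounding lemma to the \emph{deficits} $\lambda_{ij}=|V_i|/r-|W_{ij}|$ rather than to the target sizes $|V_i|/b_i$, but this is cosmetic.

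There is, however, a genuine gap in your ``realization is immediate'' step. You assert that $n_{i,j}-|U_{i,j}\cap W_i|\ge 0$ for every $j$, but this can fail: take $|V_i|=|U_i|$ with $|U_i|/b_i$ non-integral, pick $j$ with $|U_{ij}|=a+1$ and $U_{ij}\cap T=\emptyset$ (so $|W_{ij}|=a+1$), and let the rounding lemma happen to return $n_{ij}=\lfloor|V_i|/b_i\rfloor=a$. Then $n_{ij}-|W_{ij}|=-1$. The paper calls this out explicitly (``the last issue\ldots when we have $\ell_{ij}=-1$\ldots which could in fact happen'') and fixes it by a local swap: since $|V_i|/b_i$ is non-integral there exists $j'$ with $n_{ij'}=\lceil|V_i|/b_i\rceil>0$ above its floor, and one can move a unit from $n_{ij'}$ to $n_{ij}$ without violating any constraint. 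Your ``boundary configurations'' paragraph is about a different issue---whether $|V_i|/b_i$ lands in $[a,a+1]$ at all---and does not address this $-1$ problem, which occurs even when $|V_i|/b_i\in(a,a+1)$.

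A secondary remark: the paper writes $Q'$ as a $k\times r$ array, i.e.\ it tacitly assumes every $U_i$ is split into the same number $r$ of parts. This sidesteps your equitability boundary cases entirely (since then $\lfloor|V_i|/r\rfloor$ and $\lceil|V_{i'}|/r\rceil$ differ by at most $1$ whenever $||V_i|-|V_{i'}||\le 1$), and is all that is needed downstream in the paper. Your proposed fix for unequal $b_i$---adjusting row-counts and matching orphan parts through $T$---would work but is unnecessary for the application.
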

\begin{proof}
Write $P = (V_1, \ldots, V_k), Q = (U_1, \ldots, U_k), Q' = (U_{11}, \ldots, U_{1 r}, \ldots, U_{k 1}, \ldots, U_{k r})$ where $U_i = \bigcup_{j=1}^r U_{ij}$. Also, for any $i,j$ let $W_i = V_i \cap U_i$ and $W_{ij} = V_i \cap U_{ij}$. Then $\sum_{i=1}^{k} \sum_{r=1}^r |W_{ij}| = \sum_{i=1}^{k} |W_i| \geq (1 - \delta)n$, so we may take a refinement $P' = (V_{ij})_{1 \leq i \leq k, 1 \leq j \leq r}$ of $P$ as follows: For any $i,j$ we take $V_{ij}$ that contains 
$W_{ij}$ and $\ell_{ij}$ arbitrary additional elements from $V_i \setminus W_i$, where $\ell_{ij}$ is chosen by using Lemma \ref{lem:round} in the following manner.

For $1\leq i\leq k$ and $1\leq j\leq r$ we set $\lambda_{ij}=|V_i|/r-|W_{ij}|$. We set the multipartition $M$ to consist of all singleton sets $\{ij\}$, the set $[k]\times [r]$, and the sets $\{i\}\times [r]$ for $1\leq i\leq k$. We set the multipartition $N$ to be the trivial one, just the singleton sets and  $[k]\times [r]$. Invoking Lemma \ref{lem:round}, we claim the following about the resulting $\ell_{ij}$: Since $\sum_{j=1}^r\lambda_{ij}=|V_i|-|W_i|$, which is an integer, this will also equal the corresponding sum $\sum_{j=1}^r\ell_{ij}=|V_i|-|W_i|$, so we can get this way a refinement of $P$. Also, for any integer $m$ (in our case $|V_i|$) it holds that $\lfloor\frac{m}{r}\rfloor=\lceil\frac{m+1}{r}\rceil-1$, so the resulting $V_{ij}$ would form an equipartition. The last issue that we need to deal with is when we have $\ell_{ij}=-1$ for some $i$ and $j$, which could in fact happen. We claim however that in such a case we can move to another solution for which $\ell_{ij}=0$. To see this, we note that $\ell_{ij}=-1$ only if $|V_i|/r$ is not an integer, $|W_{ij}|=|U_{ij}|=\lceil|V_i|/r\rceil$, and $\ell_{ij}=\lfloor\lambda_{ij}\rfloor$. But in this case one can see that there exists $j'\neq j$ so that $\ell_{ij'}=\lceil\lambda_{ij'}\rceil>0$, and so we can increase $\ell_{ij}$ by $1$ at the expense of $\ell_{ij'}$.
\end{proof}

\begin{lemma}
\label{lem:approx_robust1}
For any $\epsilon > 0$ there exists $\delta = \delta_{\ref{lem:approx_robust1}}(\epsilon) > 0$ such that the following holds: 
If $P$ and $Q$ are $\delta$-close equipartitions of a colored graph $G$, $P$ is $(f, \delta)$-robust and $|P| = |Q|$,
then $Q$ is $(f, \epsilon)$-robust.
\end{lemma}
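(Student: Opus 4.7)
The plan is to argue by contradiction: assume $Q$ has some equitable refinement $Q'$ with $|Q'| \leq f(|Q|)$ that witnesses a large index jump, then pull this refinement back across the approximation to obtain an equitable refinement $P'$ of $P$ witnessing a comparable jump, contradicting the $(f,\delta)$-robustness of $P$.

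Concretely, I would choose $\delta = \delta_{\ref{lem:approx_robust1}}(\epsilon) := \min\{\delta_{\ref{lem:approx_index_1}}(\epsilon/3),\, \epsilon/3\}$ (where the function from Lemma \ref{lem:approx_index_1} is applied to $\epsilon/3$), and suppose towards contradiction that there is an equitable refinement $Q'$ of $Q$ with $|Q'| \leq f(|Q|)$ and $\ind(Q') > \ind(Q) + \epsilon$. Invoking Lemma \ref{lem:appx_ref1} on the $\delta$-close pair $P,Q$ and the refinement $Q'$, I get an equitable refinement $P'$ of $P$ with $|P'| = |Q'|$ that is itself $\delta$-close to $Q'$. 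Since $|P| = |Q|$ we have $|P'| \leq f(|P|)$, so the $(f,\delta)$-robustness of $P$ applies and gives $\ind(P') \leq \ind(P) + \delta$.

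Now I use Lemma \ref{lem:approx_index_1} twice: once on $P,Q$ (which are $\delta$-close with $\delta \leq \delta_{\ref{lem:approx_index_1}}(\epsilon/3)$) to get $\ind(Q) \geq \ind(P) - \epsilon/3$, and once on $P',Q'$ (which are also $\delta$-close by the previous paragraph) to get $\ind(Q') \leq \ind(P') + \epsilon/3$. Chaining these inequalities,
\[
\ind(Q') \;\leq\; \ind(P') + \tfrac{\epsilon}{3} \;\leq\; \ind(P) + \delta + \tfrac{\epsilon}{3} \;\leq\; \ind(Q) + \tfrac{2\epsilon}{3} + \delta \;\leq\; \ind(Q) + \epsilon,
\]
contradicting $\ind(Q') > \ind(Q) + \epsilon$. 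Hence no such $Q'$ exists and $Q$ is $(f,\epsilon)$-robust.

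The only non-routine point is making sure Lemma \ref{lem:appx_ref1} may legitimately be applied: this requires $|P| = |Q|$ (which is part of the hypothesis) so that the refinement $P'$ it produces still has at most $f(|P|) = f(|Q|)$ parts. Everything else is a straightforward triangle-inequality budget on $\ind$, which is why picking $\delta$ to simultaneously handle a $\epsilon/3$ slack in Lemma \ref{lem:approx_index_1} and a $\epsilon/3$ slack coming from the robustness constant of $P$ is sufficient.
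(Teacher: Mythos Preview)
Your proof is correct and follows essentially the same approach as the paper: pull an equitable refinement $Q'$ of $Q$ back to an equitable refinement $P'$ of $P$ via Lemma~\ref{lem:appx_ref1}, then use Lemma~\ref{lem:approx_index_1} twice and the robustness of $P$ to bound $\ind(Q')-\ind(Q)$ by three terms each at most $\epsilon/3$. The only cosmetic differences are that the paper argues directly rather than by contradiction, and it takes $\delta=\delta_{\ref{lem:approx_index_1}}(\epsilon/3)$ alone (relying on the standing convention that small parameters are bounded by their arguments to get $\delta\leq\epsilon/3$), whereas you make this explicit with a $\min$.
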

\begin{proof}
Let $P, Q$ be equipartitions as in the statement and pick $\delta = \delta_{\ref{lem:approx_index_1}}(\epsilon/3)$.
Consider an equitable refinement $Q'$ of $Q$ of size at most $f(|Q|) = f(|P|)$. By Lemma 
\ref{lem:appx_ref1} there exists some equitable refinement $P'$ of $P$
which $\delta$-approximates $Q'$ where $|P'| = |Q'| \leq f(|P|)$. 
The robustness of $P$ implies that $\ind(P') - \ind(P) \leq \delta \leq \epsilon / 3$.
By Lemma \ref{lem:approx_index_1}, $|\ind(P) - \ind(Q)| \leq \epsilon / 3$ and $|\ind(P') - \ind(Q')| \leq \epsilon / 3$. We conclude that 
$\ind(Q') - \ind(Q) \leq \epsilon$. Thus, $Q$ is $(f, \epsilon)$-robust.
\end{proof}

The definition of $\delta$-close partitions works exactly the same for interval partitions. We observe that interval equipartitions and their densities are mostly determined by the number of parts.
\begin{observation}\label{obs:allintapprox}
Any two interval equipartitions $I$ and $J$ of $[n]$ into $m$ parts are $\frac{m^2}{n}$-close to each other. In particular, for any $f$, $m$ and $\epsilon$, if $n$ is large enough as a function of $m$ and $\epsilon$, then for such $I$ and $J$ the densities satisfy $\frac1m\sum_{i=1}^m\sum_{\sigma\in\Sigma}|d_{\sigma}(I_i)-d_{\sigma}(J_i)|\leq\epsilon$.
\end{observation}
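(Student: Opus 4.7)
The plan is to work directly with the interval endpoints. An interval equipartition $I=(I_1,\ldots,I_m)$ of $[n]$ is determined by its right endpoints $0=a_0<a_1<\cdots<a_m=n$, and each $|I_i|=a_i-a_{i-1}$ lies in $\{\lfloor n/m\rfloor,\lceil n/m\rceil\}$. Setting $r=n-m\lfloor n/m\rfloor$, any such equipartition uses exactly $r$ ceiling-sized parts and $m-r$ floor-sized parts. A brief counting argument on how many of the first $i$ parts are ceiling-sized shows that for any two interval equipartitions $I,J$ of $[n]$ into $m$ parts, with right endpoints $a_i$ and $b_i$ respectively, one has $|a_i-b_i|\leq\min(i,m-i)$ for every $0\leq i\leq m$ (using $a_0=b_0=0$ and $a_m=b_m=n$).

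For the closeness claim, I would take $T$ to be the set of points assigned to distinct parts by $I$ and $J$. Each $x\in T$, lying in some $I_i\cap J_j$ with $i\neq j$, appears in exactly the two symmetric differences $I_i\triangle J_i$ and $I_j\triangle J_j$, while points outside $T$ contribute to none. Combining this double-counting with the easy geometric bound $|I_i\triangle J_i|\leq|a_i-b_i|+|a_{i-1}-b_{i-1}|$ and the endpoint estimate above gives
\[
|T|=\tfrac{1}{2}\sum_{i=1}^m|I_i\triangle J_i|\leq\sum_{i=1}^{m-1}|a_i-b_i|\leq\sum_{i=1}^{m-1}\min(i,m-i)\leq m^2/4\leq m^2,
\]
so $I$ and $J$ are $m^2/n$-close in the sense of Definition \ref{def:delta_approx}.

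For the density bound, I would fix $i$ and note that $p=|I_i|$ and $q=|J_i|$ satisfy $|p-q|\leq 1$ and $p,q\geq\lfloor n/m\rfloor$. Writing $a_\sigma=|I_i\cap S^{-1}(\sigma)|$ and $b_\sigma=|J_i\cap S^{-1}(\sigma)|$, the standard ratio manipulation $|a_\sigma/p-b_\sigma/q|\leq a_\sigma|p-q|/(pq)+|a_\sigma-b_\sigma|/q$, combined with $\sum_\sigma a_\sigma=p$ and $\sum_\sigma|a_\sigma-b_\sigma|\leq|I_i\triangle J_i|$, yields
\[
\sum_{\sigma\in\Sigma}|d_\sigma(I_i)-d_\sigma(J_i)|\leq\frac{|p-q|+|I_i\triangle J_i|}{\lfloor n/m\rfloor}\leq\frac{1+|I_i\triangle J_i|}{\lfloor n/m\rfloor}.
\]
Summing over $i$ and using $\sum_i|I_i\triangle J_i|\leq m^2/2$, one obtains $\frac{1}{m}\sum_i\sum_\sigma|d_\sigma(I_i)-d_\sigma(J_i)|=O(m/n)$, which is at most $\epsilon$ once $n$ is sufficiently large in terms of $m$ and $\epsilon$.

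There is no real obstacle here; the argument is essentially elementary counting. The only points needing care are the double-counting identity $2|T|=\sum_i|I_i\triangle J_i|$ and the off-by-one from $|p-q|\leq 1$ in the density estimate. The parameter $f$ mentioned in the statement does not enter the conclusion and plays no role in the proof.
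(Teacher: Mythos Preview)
Your argument is correct; the minor slip is that the final bound is $O(m^2/n)$, not $O(m/n)$ as you wrote (since $\frac{1+m/2}{\lfloor n/m\rfloor}\approx m^2/(2n)$), but this of course still tends to zero for fixed $m$ and does not affect the conclusion.

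Your route differs from the paper's in one notable respect. You build a $T$ tailored to the specific pair $(I,J)$, namely the set of points on which the two partitions disagree, and then bound $|T|$ via the double-counting identity $2|T|=\sum_i|I_i\triangle J_i|$ together with the endpoint estimate $|a_i-b_i|\le\min(i,m-i)$; this yields the sharper constant $|T|\le m^2/4$. The paper instead observes that the $i$-th breakpoint of \emph{any} interval equipartition must lie in $[i\lfloor n/m\rfloor,\,i\lceil n/m\rceil]$, and so takes the single universal set $T=\bigcup_{i=1}^{m-1}[i\lfloor n/m\rfloor+1,\,i\lceil n/m\rceil]$, which has size at most $\sum_{i=1}^{m-1}i<m^2$ and satisfies $I_i\setminus T=J_i\setminus T$ for every $i$ and every such pair $I,J$ simultaneously. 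This is a cleaner one-line construction at the cost of a slightly worse constant; the density statement is then dismissed as following ``easily from the first part,'' whereas you work it out explicitly. Both approaches are entirely valid for this elementary observation.
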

\begin{proof}
If $I$ and $J$ are two interval equipartitions of $[n]$ with $|I|=|J|=m$, then we can set $T=\bigcup_{i=1}^{m-1}[i\lfloor\frac{n}{m}\rfloor+1,i\lceil\frac{m}{m}\rceil]$. Clearly $|T|<m^2$ and $I_i\setminus T=J_i\setminus T$ for every $i\in [m]$. The second part of the observation then follows easily from the first part for $n$ large enough.
\end{proof}

\subsection{The core lemmas}\label{sub:lcrscheme}

\begin{definition}[Least Common Refinement]
For two partitions $P = (V_1, \ldots, V_k)$ and $Q = (U_1, \ldots, U_l)$ of a colored graph $G$,
the \emph{least common refinement (LCR)} $P \sqcap Q$ of $P$ and $Q$ is the partition 
$(V_1 \cap U_1, \ldots, V_1 \cap U_l, \ldots, V_k \cap U_1, \ldots, V_k \cap U_l)$ (after removing empty sets from the list).
\end{definition}

Note that even if $P$ and $Q$ are equitable, $P \sqcap Q$ is not necessarily equitable.

The following lemma allows us to combine an ``order-respecting'' interval partition and a robust graph partition. The last statement in the formulation (about even $n$ and $m$) is not needed for the rest of the proofs concerning ordered graphs, but we will refer to it when we discuss ordered matrices.

\begin{lemma}
\label{lem:main_equiparts}
For any $\delta > 0$ and positive integers $k$, $m$ and $b$, there exists $\gamma = \gamma_{\ref{lem:main_equiparts}}(\delta,k) > 0$ such that the following holds: 
If $P$ is an equipartition of an $n$-vertex colored graph $G$ (for $n\geq N_{\ref{lem:main_equiparts}}(\delta,k,m,b)$) with $|P| = k$,
and $J$ is an interval equipartition of $S_P$ of size $m$ which is $(f, \gamma)$-robust, where $f(m)\geq mb$,
then there exist an interval equipartition $I=I_{\ref{lem:main_equiparts}}(\delta,P,m,b)$ of size $m$,
an interval equipartition $I'=I'_{\ref{lem:main_equiparts}}(\delta,P,m,b)$ of size $mb$ which refines $I$,
an equipartition $Q = Q_{\ref{lem:main_equiparts}}(\delta,P,m,b)$ of size $k$ which $\delta$-approximates $P$, and 
an equipartition $Q' = Q'_{\ref{lem:main_equiparts}}(\delta,P,m,b)$ of size at most $T_{\ref{lem:main_equiparts}}(\delta,k,m)$ which is a refinement of both
$I$ and $Q$, all satisfying that the LCR $Q'' =  I' \sqcap Q'$ is an equipartition of size $|Q''|=b|Q'|=|Q'||I'|/|I|$ (i.e., each set of $Q'$ intersects ``nicely'' all subintervals of the interval of $I$ that contains it).

Furthermore, if $m$ and $n$ are even, then $I$ ``respects the middle'', that is $\sum_{i=1}^{m/2}|I_i|=\frac{n}{2}$.
\end{lemma}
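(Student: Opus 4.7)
The plan is to construct all five equipartitions simultaneously, using the robustness of $J$ only to obtain an average density bound and deferring all divisibility bookkeeping to two coordinated applications of Lemma~\ref{lem:round}. Set $\gamma$ smaller than $\delta_{\ref{lem:string_robust_to_densities}}(k,\delta/3)$, let $s$ be a sufficiently large integer depending on $\delta,k,m,b$ (its role is to ensure that the $O(1)$ rounding slack from Lemma~\ref{lem:round} stays negligible compared to $n/(kmsb)$), and let $J'=(J_{ij})_{i\in[m],j\in[b]}$ be any equitable interval refinement of $J$ into $mb$ consecutive parts. Since $f(m)\geq mb$ and $J$ is $(f,\gamma)$-robust, Lemma~\ref{lem:string_robust_to_densities} applied to the $P$-string $S_P$ yields $\frac{1}{mb}\sum_{\sigma,i,j}|d_\sigma(J_{ij})-d_\sigma(J_i)|\leq\delta/3$, which at the count level says $\sum_{\sigma,i,j}\big||V_\sigma\cap J_{ij}|-|V_\sigma\cap J_i|/b\big|\leq(\delta/2)n$, and this is the only use of robustness in the proof.

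The central step is to apply Lemma~\ref{lem:round} to $L=[k]\times[m]\times[b]$ with target reals $\lambda_{\sigma i j}=|V_\sigma\cap J_i|/b$, producing integer sizes $\ell_{\sigma i j}$ that will serve as $|U_\sigma\cap I'_{ij}|$. I choose multipartition $M$ to contain (beyond singletons and $L$) the nested family of sets $\{\sigma\}\times\{i\}\times[b]\subset\{\sigma\}\times[m]\times[b]$; since both $\sum_j\lambda_{\sigma i j}=|V_\sigma\cap J_i|$ and $\sum_{i,j}\lambda_{\sigma i j}=|V_\sigma|$ are integers, these partial sums match exactly after rounding, so $|U_\sigma|=|V_\sigma|$ and $Q$ is automatically an equipartition of size $k$. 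I choose multipartition $N$ to contain the nested family $[k]\times\{i\}\times\{j\}\subset[k]\times\{i\}\times[b]$ (and also $[k]\times[m/2]\times[b]$ when $m,n$ are even, after a preliminary small modification of $J$ to arrange $\sum_{i\leq m/2}|J_i|=n/2$ so that the $\lambda$-sum there is the integer $n/2$). This forces $|I'_{ij}|\in\{\lfloor n/(mb)\rfloor,\lceil n/(mb)\rceil\}$, $|I_i|\in\{\lfloor n/m\rfloor,\lceil n/m\rceil\}$, and the middle condition in the even case. Then $I$ and $I'$ are the consecutive block partitions of the prescribed sizes, $U_\sigma\cap I'_{ij}$ is any size-$\ell_{\sigma i j}$ subset of $I'_{ij}$ with maximal overlap with $V_\sigma$, and a second application of Lemma~\ref{lem:round} splits each $U_\sigma\cap I_i$ into $c_{i\sigma}$ consecutive near-equal pieces (with $\sum_\sigma c_{i\sigma}=ks$ and $\sum_i c_{i\sigma}=ms$) that form $Q'$. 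The LCR $Q''=I'\sqcap Q'$ is then obtained by slicing each $Q'$-piece by the $b$ sub-intervals of $I'$ inside its $I_i$.

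The main obstacle is the combinatorial bookkeeping: one must verify that $M,N$ are honest multipartitions (they are, via the nested chains exhibited above, with the symmetric chain for $i>m/2$ in the middle condition); that a $Q'$-piece of size in $\{\lfloor n/(kms)\rfloor,\lceil n/(kms)\rceil\}$, split equitably into $b$ consecutive sub-pieces, yields sub-pieces of size in $\{\lfloor n/(kmsb)\rfloor,\lceil n/(kmsb)\rceil\}$ (a routine floor-ceiling computation), whence $Q''$ is an equipartition of the claimed size; and that the number of vertices moved in creating $Q$ from $P$, which equals $\frac{1}{2}\sum_{\sigma,i,j}|\ell_{\sigma i j}-|V_\sigma\cap I'_{ij}||$, stays below $\delta n$. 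This last bound follows by combining the $(\delta/2)n$ inequality from the first paragraph with an $O(kmb)$ rounding slack from Lemma~\ref{lem:round} and an $O(mb)$ discrepancy between $I'_{ij}$ and $J_{ij}$, all of which become negligible for $n\geq N_{\ref{lem:main_equiparts}}(\delta,k,m,b)$.
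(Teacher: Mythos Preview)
Your overall architecture—two coordinated applications of Lemma~\ref{lem:round}, one to control the intersections $U_\sigma\cap I'_{ij}$ and one to allocate the $Q'$-pieces—is close in spirit to the paper's, but the second step has a genuine gap that breaks the conclusion $|Q''|=b|Q'|$.

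After your first rounding fixes $|U_\sigma\cap I'_{ij}|=\ell_{\sigma i j}$, you form the $Q'$-parts by taking ``$c_{i\sigma}$ consecutive near-equal pieces'' of $U_\sigma\cap I_i$. But a piece that is consecutive in the vertex order will typically lie in only a few of the sub-intervals $I'_{ij}$: for instance if $c_{i\sigma}=b$, each piece has about $\ell_{\sigma i j}$ elements and sits essentially inside a single $I'_{ij}$, so $Q'\sqcap I'$ has many empty parts and is neither equitable nor of size $b|Q'|$. Your check that ``a $Q'$-piece split equitably into $b$ consecutive sub-pieces yields sub-pieces of the right size'' is beside the point—the LCR $I'\sqcap Q'$ is a specific partition, not an arbitrary equitable refinement of $Q'$. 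Relatedly, the claim that the $Q'$-pieces themselves have sizes in $\{\lfloor n/(kms)\rfloor,\lceil n/(kms)\rceil\}$ is not established: splitting $U_\sigma\cap I_i$ (of size $|V_\sigma\cap J_i|$) into $c_{i\sigma}$ near-equal parts gives sizes $|V_\sigma\cap J_i|/c_{i\sigma}$, and nothing in your second rounding forces these quotients to agree across different $(\sigma,i)$ to within $1$. A secondary issue: you let $s$ depend on $b$, but then $|Q'|=kms$ cannot be bounded by a function $T(\delta,k,m)$ independent of $b$, as the lemma demands.

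The paper fixes both problems by reversing the order of the two roundings. The first rounding determines only the \emph{number} $\ell'_{ia}$ of $Q'$-parts of each colour $a$ inside each $I_i$ (this is your $c_{i\sigma}$, and it needs $t=k\lceil 20/\delta\rceil$ parts per interval, depending only on $\delta,k$). The second rounding then operates on the full index set $[m]\times[b]\times[t]$ with the \emph{constant} target $\lambda_{ijs}=n/(mbt)$, and the multipartitions are chosen so that the various partial sums simultaneously force $Q''$, $Q'$, $I'$, $I$ and $Q$ all to be equitable by construction; only afterwards are the actual vertices placed into the sets $W_{ijs}$. In short, you must let the second rounding determine the $Q''$-sizes \emph{directly} rather than hope that an a posteriori slicing of consecutive $Q'$-pieces by $I'$ comes out equitable.
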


\begin{proof}
We denote $P=(V_1,\ldots,V_k)$, and set $\gamma_{\ref{lem:main_equiparts}}(\delta,k)=\delta_{\ref{lem:string_robust_to_densities}}(k,\delta/20)$. The sets of the original interval equipartition $J$ will be denoted by $J_1,\ldots,J_m$.

We denote the eventual intervals of $I$ by $(I_1,\ldots,I_m)$, denote the eventual intervals of $I'$ by $(I_{11}, \ldots, I_{1b}, \ldots, I_{m1}, \ldots, I_{mb})$  where $I_i = \bigcup_{j=1}^{b} I_{ij}$ for any $i \in [m]$. The eventual sets of $Q$ will be denoted by $(U_1,\ldots,U_k)$, the sets of $Q'$ by $(W_{11},\ldots,W_{1t},\ldots,W_{m1},\ldots,W_{mt})$ where $I_i=\bigcup_{s=1}^tW_{is}$, and the eventual sets of $Q''$ will be denoted as $W_{ijs}=W_{is}\cap I_{ij}$. We pick $t=k\lceil 20/\delta\rceil$, and correspondingly $T_{\ref{lem:main_equiparts}}(\delta,k,m)=mt$.

Before choosing the partition intervals and sets themselves, we will choose sizes for the sets, and also choose sets of indexes $K_1,\ldots,K_k$ describing the connection of $Q$ to its refinement $Q'$. That is, eventually we will have $U_a=\bigcup_{(is)\in K_a}W_{is}$ for every $a\in [k]$. Finally defining the eventual $K_{ia}=\{s:(is)\in K_a\}$, $U_{ia}=I_i\cap U_a$ and $U_{ija}=I_{ij}\cap U_a$, we will also have $U_{ia}=\bigcup_{s\in K_{ia}}W_{is}$ and $U_{ija}=\bigcup_{s\in K_{ia}}W_{ijs}$.

We next determine the sizes $|K_{ia}|$, which will be found through our first use of Lemma \ref{lem:round} (and some further processing). We set the following parameters and multipartitions.
\begin{itemize}
 \item $\lambda_{ia}=t|J_i\cap V_a|/|J_i|$.
 \item $N$ contains the singleton sets $\{(ia)\}$, the set $[m]\times [k]$, and the set $\{i\}\times [k]$ for every $i\in [m]$. Note that in particular $\sum_{a=1}^k\lambda_{ia}=t$ is an integer, so we   also have $\sum_{a=1}^k\ell_{ia}=t$.
 \item $M$ contains the singleton sets $\{(ia)\}$, the set $[m]\times [k]$, and the set $[m]\times\{a\}$ for every $a\in [k]$. Note that since $|J_i|=\frac{n}{m}\pm 1=(1\pm\frac{m}{n})\frac{n}{m}$ and $|V_a|=(1\pm\frac{k}{n})\frac{n}{k}$, we have $\sum_{i=1}^m\lambda_{ia}=(1\pm 2\frac{m+k}{n})\frac{mt}{k}$, which means that for $n$ large enough all the sums $\sum_{i=1}^m\ell_{ia}$ will equal $\frac{mt}{k}\pm 1$ (note that $\frac{mt}{k}$ is an integer), and moreover the number of $a\in [k]$ for which $\sum_{i=1}^m\ell_{ia}=\frac{mt}{k}+1$ will equal the number of $a\in [k]$ for which this value is $\frac{mt}{k}-1$.
\end{itemize}
After obtaining the values $\ell_{ia}$ through Lemma \ref{lem:round}, we obtain $\ell'_{ia}$ from $\ell_{ia}$ through the following process: For all $a$ for which $\sum_{i=1}^m\ell_{ia}=\frac{mt}{k}$, we set $\ell'_{ia}=\ell_{ia}$ for all $i\in [m]$. Otherwise we each time take an $a$ for which $\sum_{i=1}^m\ell_{ia}=\frac{mt}{k}+1$ and an $a'$ for which $\sum_{i=1}^m\ell_{ia'}=\frac{mt}{k}-1$. We choose $i$ for which $\ell_{ia}>\ell_{ia'}$ set $\ell'_{ia}=\ell_{ia}-1$, $\ell'_{ia'}=\ell_{ia'}+1$, and for all other $i'$ we set $\ell'_{i'a}=\ell_{i'a}$ and $\ell'_{i'a'}=\ell_{i'a'}$. The resulting $\ell'_{ia}$ satisfy $\sum_{a=1}^k\ell'_{ia}=t$, $\sum_{i=1}^m\ell'_{ia}=\frac{mt}{k}$, and $\ell'_{ia}=\lambda_{ia}\pm 2$.

Now we construct disjoint $K_1,\ldots,K_k\subset [m]\times [t]$ so that for every $i\in [m]$ and every $a\in [k]$ we have $|K_{ia}|=\ell'_{ia}$. By the equations on the sums of $\ell'_{ia}$ above this is doable, and results in $|K_a|=\frac{mt}{k}$ for every $a\in [k]$.

Next, we determine the sizes of the sets $I_{ij}$ of $I'$ and $W_{ijs}$ of $Q''$, through a second use of Lemma \ref{lem:round}. We set the following parameters and multipartitions, for determining $\ell_{ijs}=|W_{ijs}|$.
\begin{itemize}
 \item We plainly set $\lambda_{ijs}=\lambda=\frac{n}{mbt}$ for all $i\in [m]$, $j\in [b]$ and $s\in [t]$. Since all values are the same, the $\ell_{ijs}$ will have value differences bounded by $1$, as befits the equipartition $Q''$.
 \item $M$ consists of the singletons, the set $[m]\times [b]\times [t]$, and the following.
 \begin{itemize}
  \item The set $\bigcup_{s\in K_{ia}}(\{i\}\times \{j\}\times \{s\})$ for every $i\in [m]$, $j\in [b]$ and $a\in [k]$. This will make $U_{ija}$ have size between $\lfloor\frac{|K_{ia}|}{t}|I_{ij}|\rfloor$ and $\lceil\frac{|K_{ia}|}{t}|I_{ij}|\rceil$ (see about $|I_{ij}|$ below).
  \item The set $\{i\}\times \{j\}\times [t]$ for each $i\in [m]$ and $j\in [b]$. Eventually we will have $|I_{ij}|=\sum_{s=1}^t\ell_{ijs}\in\{\lfloor\frac{n}{mb}\rfloor,\lceil\frac{n}{mb}\rceil\}$, so $I'$ will be equitable.
  \item The set $\{i\}\times [b]\times [t]$ for each $i\in [m]$. Eventually we will have $|I_i|=\sum_{s=1}^t\sum_{j=1}^b\ell_{ijs}\in\{\lfloor\frac{n}{m}\rfloor,\lceil\frac{n}{m}\rceil\}$, so $I$ will be equitable.
  \item If $n$ and $m$ are both even, we also add the sets $[1,m/2]\times [b]\times [t]$ and $[m/2+1,m]\times [b]\times [t]$ to $M$. 
  Eventually we will have $\sum_{i=1}^{m/2} |I_i| = \sum_{i=m/2+1}^{m} |I_i| = n/2$.
 \end{itemize}
 \item $N$ consists of the singletons, the set $[m]\times [b]\times [t]$, and the following.
 \begin{itemize}
  \item The set $\{i\}\times [b]\times \{s\}$ for every $i\in [m]$ and $s\in [t]$. This will ensure that the eventual $Q'$ is equitable.
  \item The set $\bigcup_{s\in K_{ia}}(\{i\}\times [b]\times \{s\})$ for every $i\in [m]$ and $a\in [k]$. This will make every $U_{ia}$ have size between $\lfloor\frac{|K_{ia}|}{t}|I_i|\rfloor$ and $\lceil\frac{|K_{ia}|}{t}|I_i|\rceil$.
  \item The set $\bigcup_{(is)\in K_a}(\{i\}\times [b]\times \{s\})$ for every $a\in [k]$. This will ensure that the eventual $Q$ is equitable.
 \end{itemize}
\end{itemize}
After obtaining the values $\ell_{ijs}$ for the respective set sizes $|W_{ijs}|$, we finally construct the partitions themselves. First we construct $I$ as the only interval partition for which $|I_i|=\sum_{s=1}^t\sum_{j=1}^b\ell_{ijs}$ for every $i\in [m]$, and $I'$ as the only refinement of $I$ for which $|I_{ij}|=\sum_{s=1}^t\ell_{ijs}$ for every $i\in [m]$ and $j\in [b]$.
For every $i\in [m]$ and $s\in [t]$ let $b_{is}\in [k]$ be the index such that $(is)\in K_{b_{is}}$. We now go over the indexes $i\in [m]$ and $j\in [b]$, and partition the vertices of $I_{ij}$ into the sets $W_{ijs}$ so that as many members of $V_{b_{is}}\cap I_{ij}$ as possible will go into every $W_{ijs}$. When we can no longer assign vertices in this manner (because $|I_{ij}\cap V_b|$ will not necessarily equal $\sum_{b_{is}=b}\ell_{ijs}$), we assign the remaining vertices to complete the sets that do not yet have the correct size.

Having defined $I$, $I'$ and $Q''$, we define $Q'$ by setting $W_{is}=\bigcup_{j=1}^bW_{ijs}$ for every $i\in [m]$ and $s\in [t]$, and define $Q$ by setting $U_a=\bigcup_{(is)\in K_a}W_{is}$. All properties of $I$, $I'$, $Q$, $Q'$ and $Q''$ immediately follow from the construction, apart from the relationship between $Q$ and $P$ that we still need to prove.

Because of the way we chose the sets $W_{ijs}$ to maximize the number of vertices they contain from the ``correct'' sets of $P$, The partitions $P$ and $Q$ will be $\delta$-close if
$$\sum_{i=1}^m\sum_{j=1}^b\sum_{a=1}^k\big| |V_a\cap I_{ij}|-|U_a\cap I_{ij}| \big|\leq \delta n.$$
Denote the densities according to the string $S_P$ by $d_{P,a}(I_{ij})$ (where $a\in [k]$), and the densities according to $S_Q$ by $d_{Q,a}(I_{ij})$. For $n$ large enough, because $I'$ is an equipartition (interval sizes differ by not more than $1$), we have
$$\frac1n \sum_{i=1}^m\sum_{j=1}^b\sum_{a=1}^k\big| |V_a\cap I_{ij}|-|U_a\cap I_{ij}| \big|\leq \frac{2}{mb}\sum_{i=1}^m\sum_{j=1}^b\sum_{a=1}^k|d_{P,a}(I_{ij})-d_{Q,a}(I_{ij})|.$$
From now on we bound the sums on the right hand side. Recall that $J$ denotes the original interval equipartition of size $m$, and let $J'$ be any refinement of $J$ of size $mb$. 
By Observation \ref{obs:allintapprox}, for $n$ large enough we have
$\frac{1}{mb} \sum_{a \in [k]} \sum_{i \in [m]} \sum_{j \in [b]} |d_{P,a}(I_{ij}) - d_{P,a}(J_{ij})| \leq \delta/20$.
By Lemma \ref{lem:string_robust_to_densities}, we know that $\frac{1}{mb}\sum_{a \in [k]} \sum_{i \in [m]} \sum_{j \in [b]} |d_{P,a}(J_{ij}) - d_{P,a}(J_i)| \leq \delta/20$. 
Now, recall that we chose the sets $K_a$ so that $|K_{ia}|=t\cdot d_{P,a}(J_i)\pm 2$. This means that $\frac1m\sum_{a \in [k]} \sum_{i \in [m]} |d_{P,a}(J_i) - d_{Q,a}(I_i)| \leq \delta/5$ (recalling also how we chose $t$). Finally, by our construction, for $n$ large enough, $\frac{1}{mb}\sum_{a \in [k]} \sum_{i \in [m]} \sum_{j \in [b]} |d_{Q,a}(I_{i}) - d_{Q,a}(I_{ij})| \leq \delta/20$. This follows from the size restriction that we ensured for the sets $U_{ia}$ and $U_{ija}$.
Using triangle inequalities with all these bounds on the density differences concludes the proof.
\end{proof}

\begin{lemma}\label{lem:mainpartscheme}
For any positive integer $k$, real value $\gamma$, functions $r:\mathbb{N}\times\mathbb{N}\to\mathbb{N}$ and $f:\mathbb{N}\to\mathbb{N}$, and any $n$-vertex ordered colored graph $G$ (for large enough $n$), there exist an interval equipartition $I$ into $m$ parts where $k\leq m\leq S_{\ref{lem:mainpartscheme}}(\gamma,k,f,r)$, an equipartition $Q'$ of $G$ into $mt$ parts (not necessarily an interval equipartition) which refines $I$ and is additionally $(f,\gamma)$-robust, where $mt\leq T_{\ref{lem:mainpartscheme}}(\gamma,k,f,r)$, and an interval equipartition $I'$ into $m\cdot r(m,t)$ parts also refining $I$, so that the LCR $Q''=Q'\sqcap I'$ is an equipartition into exactly $mt\cdot r(m,t)$ parts (so each set of $Q'$ intersects ``nicely'' all relevant intervals in $I'$).

Moreover, if $n$ is even, then $m$ will be even and $I$ will respect the middle.
\end{lemma}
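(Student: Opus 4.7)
The plan is to build the scheme in three stages: first extract a robust base equipartition $P$ of $G$, then extract a robust interval equipartition $J$ of the string $S_P$, and finally feed these into Lemma \ref{lem:main_equiparts} to obtain $I$, $I'$, $Q$, $Q'$, and $Q''$. The key technical point driving the construction is that $(f,\gamma)$-robustness of the final $Q'$ must survive both the $\delta$-approximation step ($P\to Q$) and the refinement step ($Q\to Q'$), so the robustness we demand of $P$ must be stronger than $(f,\gamma)$ by exactly the right margin.

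First I would fix $\delta^* := \delta_{\ref{lem:approx_robust1}}(\gamma)$, which will serve both as the approximation parameter of Lemma \ref{lem:main_equiparts} and as the robustness parameter that Lemma \ref{lem:approx_robust1} promotes to $\gamma$. For a generic candidate value $k_P$ of $|P|$, set $t^*(k_P) := k_P \lceil 20/\delta^* \rceil$ (matching the value of $t$ produced inside Lemma \ref{lem:main_equiparts}), $\gamma_J(k_P) := \gamma_{\ref{lem:main_equiparts}}(\delta^*, k_P)$, and $f_J^{(k_P)}(m) := m\cdot r(m, t^*(k_P))$. Let $M(k_P) := T_{\ref{obs:robust_partition_intervals}}(k, f_J^{(k_P)}, \gamma_J(k_P))$ bound the size of the interval equipartition we will extract, and set $g(k_P) := M(k_P)\cdot t^*(k_P)$, so that $g(k_P)$ a priori exceeds every possible value of $|Q'|$.

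Next, I apply Observation \ref{obs:robust_partition_colored_graphs} to a trivial equipartition of $G$ with the function $f\circ g$ and parameter $\delta^*$, obtaining an $(f\circ g, \delta^*)$-robust equipartition $P$ with $|P|=k_P$ bounded by $T_{\ref{obs:robust_partition_colored_graphs}}(1, f\circ g, \delta^*)$. Then I form the string $S_P$ and apply Observation \ref{obs:robust_partition_intervals} to the equitable interval partition of $S_P$ of size $k$ (or $2k$ when $n$ is even, in order to enforce $m$ even) with parameters $(f_J^{(k_P)}, \gamma_J(k_P))$, producing a robust interval equipartition $J$ of size $m\in[k, M(k_P)]$. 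Finally I invoke Lemma \ref{lem:main_equiparts} on inputs $\delta^*$, $P$, $m$, and $b := r(m, t^*(k_P))$: the prerequisite $f_J^{(k_P)}(m)\geq mb$ holds by construction, and the output supplies $I$, $I'$, $Q$, $Q'$, and $Q'' = I'\sqcap Q'$ with the claimed structural and divisibility properties, including the middle-respecting clause for $I$ when both $n$ and $m$ are even.

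Robustness of $Q'$ then follows from two applications. Because $Q$ is $\delta^*$-close to $P$ and $P$ is $(f\circ g, \delta^*)$-robust, Lemma \ref{lem:approx_robust1} yields that $Q$ is $(f\circ g, \gamma)$-robust; because $|Q'| = m\cdot t^*(k_P)\leq M(k_P)\cdot t^*(k_P) = g(|Q|)$, Observation \ref{obs:chainrobust} lifts this to $(f,\gamma)$-robustness of $Q'$. The hard part of the argument is the circular setup of parameters: the robustness we extract from $P$ is indexed by $g$, but $g$ itself depends on $k_P=|P|$, which is not known before we invoke Observation \ref{obs:robust_partition_colored_graphs}. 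The resolution is to define $g$ as an honest function of its argument $k_P$ and to feed $f\circ g$, as a function, into Observation \ref{obs:robust_partition_colored_graphs}; whatever $k_P$ arises, the corresponding value $g(k_P)$ is already a valid upper bound on the eventual $|Q'|$. The secondary annoyance of divisibility in the middle-respecting case is handled by starting with an even number of intervals and arranging the refinement steps of Observation \ref{obs:robust_partition_intervals} to preserve evenness, which is always possible since one may split each interval into two halves when needed.
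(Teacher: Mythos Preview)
Your proposal is correct and follows essentially the same route as the paper: obtain a sufficiently robust base equipartition $P$ via Observation~\ref{obs:robust_partition_colored_graphs}, then a robust interval equipartition $J$ of $S_P$ via Observation~\ref{obs:robust_partition_intervals}, feed both into Lemma~\ref{lem:main_equiparts}, and finally recover the $(f,\gamma)$-robustness of $Q'$ by combining Lemma~\ref{lem:approx_robust1} (transferring robustness from $P$ to $Q$) with Observation~\ref{obs:chainrobust} (passing it down to $Q'$). Your functions $t^*(k_P)$, $f_J^{(k_P)}$, $M(k_P)$, and $g(k_P)$ unwind to exactly the paper's $g_l$ and $h$ (indeed $h=f\circ g$ once one substitutes $T_{\ref{lem:main_equiparts}}(\delta^*,l,m)/m = l\lceil 20/\delta^*\rceil = t^*(l)$), and your treatment of the even-$m$ clause matches the paper's in both content and level of detail.
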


\begin{proof}
For each $l\in\mathbb{N}$ we define a function $g_l:\mathbb{N}\to\mathbb{N}$ by setting for every $m\in\mathbb{N}$
$$g_l(m)=m\cdot r(m,T_{\ref{lem:main_equiparts}}(\delta_{\ref{lem:approx_robust1}}(\gamma),l,m)/m).$$
Then we define a function $h:\mathbb{N}\to\mathbb{N}$ setting for every $l\in\mathbb{N}$
$$h(l)=f(T_{\ref{lem:main_equiparts}}(\delta_{\ref{lem:approx_robust1}}(\gamma),l,T_{\ref{obs:robust_partition_intervals}}(k,g_l,\gamma_{\ref{lem:main_equiparts}}(\delta_{\ref{lem:approx_robust1}}(\gamma),l)))).$$

We start with an equipartition $P$ that is $(h,\delta_{\ref{lem:approx_robust1}}(\gamma))$-robust that 
we obtain by Observation \ref{obs:robust_partition_colored_graphs}, and then with respect to the string $S_P$ we obtain by Observation \ref{obs:robust_partition_intervals} an interval equipartition $J$ that has at least $k$ parts and is $(g_{|P|},\gamma_{\ref{lem:main_equiparts}}(\delta_{\ref{lem:approx_robust1}}(\gamma),|P|))$-robust. Note that $|P|\leq T_{\ref{obs:robust_partition_colored_graphs}}(1,h,\delta_{\ref{lem:approx_robust1}}(\gamma))$, and hence $|J|\leq T_{\ref{obs:robust_partition_intervals}}(k,g_{T_{\ref{obs:robust_partition_colored_graphs}}(1,h,\delta_{\ref{lem:approx_robust1}}(\gamma))}\,,\gamma_{\ref{lem:main_equiparts}}(\delta_{\ref{lem:approx_robust1}}(\gamma),T_{\ref{obs:robust_partition_colored_graphs}}(1,h,\delta_{\ref{lem:approx_robust1}}(\gamma))))$, which we set as our $S_{\ref{lem:mainpartscheme}}(\gamma,k,f,r)$.

If $n$ is even, then we make sure that $k$ is also even (otherwise we replace it with $k+1$ in all of the above), and then $|J|$ will be even as well (and our subsequent use of Lemma \ref{lem:main_equiparts} will provide an $I$ that respects the middle).

We then invoke Lemma \ref{lem:main_equiparts} to get our partitions $I=I_{\ref{lem:main_equiparts}}(\delta_{\ref{lem:approx_robust1}}(\gamma),P,|J|,g_{|P|}(|J|)/|J|)$, $I'=I'_{\ref{lem:main_equiparts}}(\delta_{\ref{lem:approx_robust1}}(\gamma),P,|J|,g_{|P|}(|J|)/|J|)$, and $Q' = Q'_{\ref{lem:main_equiparts}}(\delta_{\ref{lem:approx_robust1}}(\gamma),P,|J|,g_{|P|}(|J|)/|J|)$. By the size guarantees of Lemma \ref{lem:main_equiparts} we have $|I|=|J|$ (ensuring our size bound for $|I|$), and $|Q'|$ is bounded by $T_{\ref{lem:main_equiparts}}(\delta_{\ref{lem:approx_robust1}}(\gamma),T_{\ref{obs:robust_partition_colored_graphs}}(1, h,\delta_{\ref{lem:approx_robust1}}(\gamma)),S_{\ref{lem:mainpartscheme}}(\gamma,k,f,r))$, which we set as our $T_{\ref{lem:mainpartscheme}}(\gamma,k,f,r)$.

Lemma \ref{lem:main_equiparts} guarantees all requirements apart from the robustness of $Q'$. To prove it, we note that $Q'$ is a refinement of the partition $Q=Q_{\ref{lem:main_equiparts}}(\delta_{\ref{lem:approx_robust1}}(\gamma),P,|J|,g_{|P|}(|J|)/|J|)$ into at most $T_{\ref{lem:main_equiparts}}(\delta_{\ref{lem:approx_robust1}}(\gamma),|P|,T_{\ref{obs:robust_partition_intervals}}(k,g_{|P|},\gamma_{\ref{lem:main_equiparts}}(\delta_{\ref{lem:approx_robust1}}(\gamma),|P|)))$ parts, where $|Q|=|P|$ and $Q$ and $P$ are $\delta_{\ref{lem:approx_robust1}}(\gamma)$-close. Hence by invoking Lemma \ref{lem:approx_robust1} (which makes $Q$ $(h,\gamma)$-robust), and then Observation \ref{obs:chainrobust}, we get that $Q'$ is indeed $(f,\gamma)$-robust.
\end{proof}

\section{The finite case for graphs}
\label{sec:proof_main_result}
This section contains the proof of Theorem \ref{thm:infinite_removal_lemma} for the case that the forbidden family $\mathcal{F}$ is finite.
This is the ordered generalization of the finite induced graph removal lemma (Theorem \ref{thm:induced_graph_removal_lemma_AFKS}).
\begin{thm}[Finite ordered graph removal lemma]
	\label{thm:finite_removal_lemma}
	Fix a finite set $\Sigma$ with $|\Sigma| \geq 2$.
	For any finite family $\mathcal{F}$ of ordered graphs $F:\binom{[n_F]}{2} \to \Sigma$
	and any $\epsilon > 0$ there exists
	$\delta = \delta(\mathcal{F}, \epsilon) > 0$, such that any ordered graph $G:\binom{V}{2} \to \Sigma$ that is
	$\epsilon$-far from $\mathcal{F}$-freeness contains at least $\delta n^q$ induced copies of some graph $F \in \mathcal{F}$.
\end{thm}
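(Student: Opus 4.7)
The plan follows Subsection~\ref{subsec:main_proof_sketch}. Let $d = d_{\mathcal{F}}$ denote the maximum number of vertices in a graph from $\mathcal{F}$, and pick a density threshold $\eta = \eta(\epsilon,d) > 0$ small enough to feed Lemma~\ref{lem:regularity_to_many_copies}. Invoke Lemma~\ref{lem:mainpartscheme} with suitably chosen $k$, $\gamma$, $f$ and two-variable $r(\cdot,\cdot)$: $f$ is chosen strong enough that both Lemma~\ref{lem:robust_regular} (making $Q'$ Szemer\'edi-regular at a parameter $\beta_{\ref{lem:regularity_to_many_copies}}^{\eta}(d)$) and Lemma~\ref{lem:robust_to_densities} (making densities between $Q''$-parts faithfully represent those between $Q'$-parts) can be applied; $r(m,t)$ is set to dominate $R_{\ref{coro:ramsey_app}}(|\Sigma|^{t^2 \cdot 2^{|\Sigma|}}, m, d)$ needed later; and $\gamma$ is small enough that the cleaning step succeeds. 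This produces equipartitions $I$ (into $m$ large intervals), $I'$ (into $mb$ small intervals, $b = r(m,t)$, refining $I$), $Q'$ (into $mt$ parts refining $I$, with the desired robustness), and the LCR $Q'' = Q' \sqcap I'$ as a perfect equipartition of size $mtb$.

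Next, select representatives: for each part $v$ of $Q''$, a subset $R_v \subseteq v$ of size linear in $n$ such that all pairs $(R_u, R_v)$ are $\beta_{\ref{lem:regularity_to_many_copies}}^{\eta}(d)$-regular and $d_\sigma(R_u, R_v) \approx d_\sigma(u, v)$ up to a negligible additive error. Such representatives exist by the standard sampling argument, using that $Q''$ is robust (via Observation~\ref{obs:chainrobust}) hence regular (Lemma~\ref{lem:robust_regular}). Applying Lemma~\ref{lem:robust_to_densities} to the refinement $Q''$ of $Q'$, for most pairs of $Q'$-parts the representative densities also track the $Q'$-densities. Now build an auxiliary chart $H$ whose vertices are the small intervals of $I'$, grouped into $m$ classes according to the large interval of $I$ containing them. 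The ``color'' of an $H$-edge between small intervals $I'_a, I'_{a'}$ is the $t \times t$ matrix whose $(s, s')$ entry is
\[
\{\sigma \in \Sigma : d_\sigma(R_{v_s^a}, R_{v_{s'}^{a'}}) \geq \eta\},
\]
where $v_s^a$ is the $s$-th $Q''$-part lying inside $I'_a$. An $H$-edge is \emph{undesirable} if its matrix differs from the corresponding matrix defined at the level of the two containing large intervals of $I$ (computed from $Q'$-densities); by Lemma~\ref{lem:robust_to_densities} applied in the interval direction, the fraction of undesirables is at most some small $\epsilon'(\gamma)$. Crucially, undesirability is determined by the edge color and its class pair alone, hence the undesirable set is orderly as required by Corollary~\ref{coro:ramsey_app}.

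Apply Corollary~\ref{coro:ramsey_app} to $H$ to obtain an induced subchart $D$ with exactly $d$ small-interval vertices per class, monochromatic between any pair of classes and within each class, and with at most $2\epsilon'(\gamma)\binom{m}{2}d^2$ undesirable edges. Use $D$ to \emph{clean} $G$: for every pair of $Q'$-parts $(v_s^a, v_{s'}^{a'})$ with $I'_a, I'_{a'}$ in $D$, recolor every $G$-edge whose color is not in the $(s, s')$ entry of the common $D$-matrix to a color that is. The few undesirables, together with the approximation of $Q'$-densities by representative densities and the regularity of pairs, keep the total fraction of recolored edges below $\epsilon/2$. Hence the cleaned graph $G'$ still contains an induced copy of some $F \in \mathcal{F}$ with $q \leq d$ vertices. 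These $q$ vertices lie in $q$ specific $Q'$-parts within $D$-vertices; since $D$-vertices from different classes are fully ordered by the interval structure of $I$, and small intervals within a class are ordered by $I'$, we can read off representatives $R_1, \ldots, R_q$ with every $R_i$ entirely preceding $R_{i+1}$ in the vertex order of $G$. By the monochromaticity of $D$ and the cleaning rule, $d_{F(ij)}(R_i, R_j) \geq \eta$ for all $i < j$, and each pair $(R_i, R_j)$ is $\beta_{\ref{lem:regularity_to_many_copies}}^{\eta}(d)$-regular. Lemma~\ref{lem:regularity_to_many_copies} then produces at least $\kappa \prod_{i=1}^{q} |R_i| \geq \delta n^q$ induced ordered copies of $F$, as desired.

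The main obstacle is orchestrating the parameter chain. The function $r(m,t)$ must be fixed before $m$ and $t$ are known, yet it has to dominate a Ramsey number depending on $t$; $f$ must simultaneously secure Szemer\'edi regularity at the representative level and small undesirability at the interval level via Lemma~\ref{lem:robust_to_densities}; and $\gamma$ must be small enough that the combined cleaning error beats $\epsilon$. The genuinely new ingredient is controlling the interval-refinement direction: bounding how the density matrix between a pair of small intervals can deviate from that between their parent large intervals. This is precisely where Lemma~\ref{lem:robust_to_densities} is applied to the interval equipartition of the $P$-string of $Q'$, and it is the engine that makes the undesirable set orderly and rare enough for Corollary~\ref{coro:ramsey_app} to deliver a usable $D$.
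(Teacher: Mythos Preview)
Your overall plan matches the paper's: build the scheme via Lemma~\ref{lem:mainpartscheme}, extract representatives, form the threshold graph $H$, apply Corollary~\ref{coro:ramsey_app} to get a nicely colored $D$, clean, and count via Lemma~\ref{lem:regularity_to_many_copies}. However, your description of the cleaning step contains a real gap. You write that you recolor edges ``for every pair of $Q'$-parts $(v_s^a, v_{s'}^{a'})$ with $I'_a, I'_{a'}$ in $D$''. Taken literally this touches only edges whose endpoints lie in the $m\,d_{\mathcal{F}}$ small intervals selected by $D$, a vanishing fraction of $G$; then $G'$ is trivially close to $G$, but the surviving $F$-copy in $G'$ may sit entirely outside those intervals, its edge colors need not lie in any allowed set, and the representative argument collapses. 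The cleaning must be applied to \emph{every} edge of $G$, according to the pair of $Q'$-parts $(U_{js},U_{j's'})$ containing its endpoints: the allowed color set is the $(s,s')$-entry of the single matrix $C^{(D)}_{jj'}$ determined by the \emph{large} intervals $j,j'$ (this is Definition~\ref{def:cleaned_graph} in the paper). Only then does every $F$-copy in $G'$ automatically have all edge colors allowed, after which---because $D$ supplies $d_{\mathcal{F}}$ ordered small intervals per large interval and is monochromatic---one \emph{chooses} representatives $W_{j_i r_i s_i}$ from the $D$-vertices (the copy's own vertices need not lie there) to realize the order and density guarantees; this is Lemma~\ref{lem:has_good_copy}. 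Your sentence ``these $q$ vertices lie in $q$ specific $Q'$-parts within $D$-vertices'' reflects the same confusion.

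A smaller misattribution: in your final paragraph you say the undesirability bound comes from Lemma~\ref{lem:robust_to_densities} applied ``to the interval equipartition of the $P$-string of $Q'$''. It does not. The bound comes from Lemma~\ref{lem:robust_to_densities} applied to the \emph{graph}-partition refinement $Q''$ of $Q'$ (packaged in the paper as Lemmas~\ref{lem:double_representing} and~\ref{lem:robust_undesirable}); this is what you in fact invoke earlier in your sketch. The string-level robustness (Lemma~\ref{lem:string_robust_to_densities}) is consumed inside the construction of the scheme (Lemma~\ref{lem:main_equiparts}) and plays no direct role in controlling undesirables.
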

The proof of Theorem \ref{thm:infinite_removal_lemma} is completed in Section \ref{sec:proofinfinite}, by considering the case where $\mathcal{F}$ is infinite. The proof for the infinite case mostly relies on ideas and tools presented in this section, but requires another step, which is motivated by the ideas of Alon and Shapira \cite{AlonShapira2008} for the unordered case.

\subsection{Representing subsets}
Fix a finite alphabet $\Sigma$ and a finite family $\mathcal{F}$ over $\Sigma$. 
Let $d_{\mathcal{F}}$ denote the largest number of vertices in a graph from $\mathcal{F}$.
Now let $G = (V, c)$ be an $n$-vertex $\Sigma$-colored graph 
and suppose that $I, I', Q', Q''$ are equipartitions of $G$ of sizes $m, mb, mt, mbt$ respectively, so that 
$I, I'$ are interval partitions,
$I'$ and $Q'$ refine $I$, and $Q'' = I' \sqcap Q'$. 
More specifically, we write
$I =  (I_1, \ldots, I_m)$, 
$I' = (I_{11}, \ldots, I_{1b}, \ldots, I_{m1}, \ldots, I_{mb})$, 
$Q' = (U_{11}, \ldots, U_{1t}, \ldots, U_{m1}, \ldots, U_{mt})$, 
$Q'' = (U_{111}, \ldots, U_{mbt})$, where $I_j = \bigcup_{r=1}^{b} I_{jr} = \bigcup_{s=1}^{t} U_{js}$ for any $j \in [m]$ and $U_{jrs} = I_{jr} \cap U_{js}$ for any $j \in [m], r \in [b], s \in [t]$.
Note that this is the same setting as the one obtained in Lemma \ref{lem:mainpartscheme}, but we do not apply the lemma at this point; in particular,
we currently do not make any assumptions on the equipartitions other than those stated above.
We may and will assume whenever needed that $n$ is large enough (as a function of
all relevant parameters), and that any tuple of subsets of $V$ considered in this section has at least two parts (i.e., it is not trivial).

\begin{definition}[Representing subsets]
Let $\alpha, \beta, \mu > 0$ be real numbers
and suppose that $A = (A_1, \ldots, A_l)$ is an equipartition of $G$.
We say that $B = (B_1, \ldots, B_l)$ \emph{represents} $A$ if $B_i \subseteq A_i$ for any $i \in [l]$.
Furthermore, we say that $B$ \emph{$(\alpha, \beta, \mu)$-represents} $A$ if the following holds. 
\begin{itemize}
\item $B_i \subseteq A_i$ and $|B_i| \geq  \alpha n$ for any $i \in [l]$.
\item All pairs $(B_i, B_j)$ with $i<j \in [l]$ are $\beta$-regular.
\item $\frac{1}{\binom{l}{2}}\sum_{i < j \in [l]} \sum_{\sigma \in \Sigma} |d_{\sigma}(B_i, B_j) - d_{\sigma}(A_i, A_j)| \leq \mu$.
\end{itemize}
\end{definition}

The following lemma is a slight variant of Corollary 3.4 in \cite{AlonFKS00}, suggesting that partitions that are robust
enough have good representing subsets. 
The proof follows along the same lines of the proof of Lemma 3.2 in \cite{ConlonFox2013}, so we omit it.
\begin{lemma}[\cite{AlonFKS00, ConlonFox2013}]
\label{lem:representatives1}
For any $\mu > 0$ and function $\beta:\mathbb{N} \to (0, 1)$ there exist a function $f = f^{(\beta, \mu)}_{\ref{lem:representatives1}}:\mathbb{N} \to \mathbb{N}$ and a real number $\gamma = \gamma_{\ref{lem:representatives1}}(\mu) > 0$, such that
for any integer $l > 0$  
there is a real number $\alpha = \alpha_{\ref{lem:representatives1}}(\beta, \mu, l) > 0$, 
all satisfying the following.
If $A = (A_1, \ldots, A_l)$ is an $(f, \gamma)$-robust equipartition of $G$, then
there exists a tuple $B = (B_1, \ldots, B_l)$ which $(\alpha, \beta(l), \mu)$-represents $A$.
\end{lemma}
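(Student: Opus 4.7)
The plan is to refine $A$ into a highly regular equipartition $A'$, then pick one sub-part of $A'$ from each $A_i$ at random to play the role of $B_i$. A sufficiently strong regularity parameter for $A'$ will guarantee that, with positive probability, every chosen pair $(B_i, B_{i'})$ is $\beta(l)$-regular, while the robustness of $A$, invoked through Lemma~\ref{lem:robust_to_densities}, will guarantee that on average the chosen densities track those of $A$.

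Concretely, I would set $\gamma = \delta_{\ref{lem:robust_to_densities}}(|\Sigma|, \mu/4)$, which depends only on $\mu$ and the fixed alphabet size. For each input $l$, set $\beta' = \beta(l)/(16l^2)$, and let $f(l)$ be a size bound coming from the Szemer\'edi regularity lemma applied to a $\Sigma$-colored graph with a given $l$-part equipartition as a pre-partition and target regularity $\beta'$; this bound depends only on $\beta$ and $l$, as required for $f$. By this choice, any $(f, \gamma)$-robust equipartition $A$ admits an equitable refinement $A' = (A_{ij})_{i \in [l], j \in [b]}$, with $lb \leq f(l)$, that is $\beta'$-regular. Since $A'$ refines $A$ and has at most $f(l)$ parts, Lemma~\ref{lem:robust_to_densities} applied with parameter $\mu/4$ gives that the average, over pairs $i \neq i' \in [l]$ and sub-indices $j, j' \in [b]$, of $\sum_{\sigma} |d_\sigma(A_{ij}, A_{i'j'}) - d_\sigma(A_i, A_{i'})|$ is at most $\mu/4$.

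Now sample $j(i) \in [b]$ independently and uniformly at random for each $i \in [l]$, and set $B_i = A_{i, j(i)}$. By linearity of expectation, the expected value of $\frac{1}{\binom{l}{2}}\sum_{i<i'}\sum_\sigma |d_\sigma(B_i, B_{i'}) - d_\sigma(A_i, A_{i'})|$ is at most $\mu/4$, so Markov's inequality yields probability at most $1/4$ that it exceeds $\mu$. Simultaneously, since $A'$ contains at most $\beta' \binom{lb}{2}$ non-$\beta'$-regular pairs, and $\beta' \leq \beta(l)$ implies that a $\beta'$-regular pair is also $\beta(l)$-regular, the expected number of non-$\beta(l)$-regular pairs among the $\binom{l}{2}$ pairs of chosen representatives is at most $\beta' l^2 / 2 \leq 1/32$; being a nonnegative integer, it is zero with probability at least $31/32$. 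A union bound leaves positive probability that both events occur simultaneously, proving the existence of the required $B$, with $\alpha = 1/(2lb)$ handling the lower bound $|B_i| \geq \alpha n$ (and a slightly smaller $\alpha$ if one wants to absorb rounding).

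The main subtlety is the coordination of parameter dependencies: $\gamma$ must depend only on $\mu$, whereas $\beta'$ (and hence $f$) must shrink with $l$ like $1/l^2$. This $l^2$ factor is precisely the cost of upgrading the \emph{average} regularity of $A'$ into $\beta(l)$-regularity for \emph{every} one of the $\binom{l}{2}$ chosen pairs, which is what forces the probabilistic survival argument rather than a naive averaging or pigeonhole step.
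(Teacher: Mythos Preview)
Your argument is correct and is essentially the standard one that the paper is pointing to by citing Conlon--Fox: apply Szemer\'edi regularity to obtain a $\beta'$-regular equitable refinement $A'$ of $A$ with $lb\le f(l)$ parts, invoke Lemma~\ref{lem:robust_to_densities} via the robustness of $A$ to control average density deviations, and then choose one sub-part per class by a random selection plus Markov/union bound. The only small correction is that your formula $\alpha = 1/(2lb)$ is not quite admissible, since $b$ may depend on the input graph; you should instead take $\alpha = 1/(2f(l))$ (or any fixed lower bound for $1/(lb)$ guaranteed by the regularity lemma), which depends only on $\beta,\mu,l$ as required.
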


The next lemma is not hard to derive from Lemma \ref{lem:representatives1} using Lemma \ref{lem:robust_to_densities}, and is more suitable to our setting.
\begin{lemma}
\label{lem:double_representing}
For any function $\beta:\mathbb{N} \to (0, 1)$,  function $g:\mathbb{N} \to \mathbb{N}$, and real number $\mu > 0$, 
there exist a function $f = f_{\ref{lem:double_representing}}^{\beta, g, \mu}:\mathbb{N} \to \mathbb{N}$ and a real number $\gamma = \gamma_{\ref{lem:double_representing}}(\mu) > 0$, 
so that for any integer $l > 1$ there exists
$\alpha = \alpha_{\ref{lem:double_representing}}(\beta, g, \mu, l) > 0$ satisfying the following:
If $A = (A_1, \ldots, A_l)$ is an $(f, \gamma)$-robust equipartition of $G$
and $A' = (A_{11}, \ldots, A_{1L}, \ldots, A_{l1}, \ldots A_{lL})$ is an equitable refinement of $A$,
where $lL \leq g(l)$ and  $A_i = \bigcup_{j=1}^{L} A_{ij}$ for any $i \in [l]$, 
then there exists $B = (B_{11}, \ldots, B_{1L}, \ldots B_{l1} ,\ldots, B_{lL})$ which $(\alpha, \beta(lL), \mu)$-represents $A'$,
and satisfies 
\[
\frac{1}{\binom{l}{2} L^2} \sum_{i<i' \in [l]} \sum_{j,j' \in [L]} \sum_{\sigma \in \Sigma} |d_{\sigma}(B_{ij}, B_{i'j'}) - d_{\sigma}(A_i, A_{i'})| \leq 2\mu.
\]
\end{lemma}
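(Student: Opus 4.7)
My plan is to derive the lemma by applying Lemma \ref{lem:representatives1} to the refinement $A'$ and Lemma \ref{lem:robust_to_densities} to the coarser partition $A$, and then combining the two outputs via the triangle inequality
\[
|d_\sigma(B_{ij}, B_{i'j'}) - d_\sigma(A_i, A_{i'})| \leq |d_\sigma(B_{ij}, B_{i'j'}) - d_\sigma(A_{ij}, A_{i'j'})| + |d_\sigma(A_{ij}, A_{i'j'}) - d_\sigma(A_i, A_{i'})|.
\]
The first summand on the right will be controlled by the representation guarantee of $B$ for $A'$, while the second summand is exactly what Lemma \ref{lem:robust_to_densities} is designed to bound.

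The only slightly delicate point is arranging robustness parameters so that both lemmas can be invoked simultaneously. To apply Lemma \ref{lem:representatives1} to $A'$ with slack $\mu/2$, I need $A'$ itself to be $(f_{\ref{lem:representatives1}}^{\beta,\mu/2}, \gamma_{\ref{lem:representatives1}}(\mu/2))$-robust. Since $A'$ is an equitable refinement of $A$ with at most $g(l)$ parts, Observation \ref{obs:chainrobust} supplies this as soon as $A$ is $(f_{\ref{lem:representatives1}}^{\beta,\mu/2} \circ g,\, \gamma_{\ref{lem:representatives1}}(\mu/2))$-robust. On the other hand, to apply Lemma \ref{lem:robust_to_densities} to $A$ with refinement $A'$ and error $\mu$, I need $A$ to be $(g, \delta_{\ref{lem:robust_to_densities}}(|\Sigma|,\mu))$-robust. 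I will therefore set
\[
f_{\ref{lem:double_representing}}^{\beta,g,\mu}(l) := \max\{f_{\ref{lem:representatives1}}^{\beta,\mu/2}(g(l)),\, g(l)\}, \qquad \gamma_{\ref{lem:double_representing}}(\mu) := \min\{\gamma_{\ref{lem:representatives1}}(\mu/2),\, \delta_{\ref{lem:robust_to_densities}}(|\Sigma|,\mu)\},
\]
and, since $\alpha_{\ref{lem:representatives1}}$ is monotone in its last argument and $lL \leq g(l)$, take $\alpha_{\ref{lem:double_representing}}(\beta, g, \mu, l) := \alpha_{\ref{lem:representatives1}}(\beta, \mu/2, g(l))$.

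Granted these choices, Lemma \ref{lem:representatives1} produces a tuple $B = (B_{ij})$ that $(\alpha, \beta(lL), \mu/2)$-represents $A'$; this is a fortiori an $(\alpha, \beta(lL), \mu)$-representation, yielding the first three bullets of the conclusion. For the final inequality, the first summand of the triangle inequality, summed over $i<i'$ and $j,j' \in [L]$, is a sub-sum of the full unordered-pair sum which the representation bounds by $(\mu/2)\binom{lL}{2}$; dividing by $\binom{l}{2} L^2$ and using $\binom{lL}{2}/(\binom{l}{2}L^2) \leq 2$ gives a contribution of at most $\mu$ to the target average. The second summand, after Lemma \ref{lem:robust_to_densities} together with the $i \leftrightarrow i'$ symmetry (which halves the bound when we restrict the ordered-pair sum to $i<i'$), contributes at most $\mu/2$ to the same average. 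Adding these yields the required bound $3\mu/2 \leq 2\mu$. The whole argument is essentially routine parameter-juggling; the only thing one has to watch is matching the ``all unordered pairs'' normalization of the representation to the ``cross-block'' normalization appearing in the statement, which costs the factor of at most $2$ noted above.
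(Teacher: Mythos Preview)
Your proof is correct and follows essentially the same route as the paper's: push robustness down to $A'$ via Observation~\ref{obs:chainrobust}, apply Lemma~\ref{lem:representatives1} to get the representing tuple $B$, apply Lemma~\ref{lem:robust_to_densities} to compare $A'$-densities with $A$-densities, and combine by the triangle inequality. The paper makes the identical choices $f = f_{\ref{lem:representatives1}}^{(\beta,\mu)}\circ g$, $\gamma = \min\{\delta_{\ref{lem:robust_to_densities}}(|\Sigma|,\mu),\gamma_{\ref{lem:representatives1}}(\mu)\}$, and $\alpha = \alpha_{\ref{lem:representatives1}}(\beta,\mu,g(l))$, relying on the standing convention that large functions are bounded below by their arguments to get $f(l)\geq g(l)$ automatically, rather than taking an explicit $\max$ as you do.

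The one genuine difference is that you invoke Lemma~\ref{lem:representatives1} with slack $\mu/2$ and explicitly track the normalization mismatch $\binom{lL}{2}/(\binom{l}{2}L^2)\leq 2$, whereas the paper invokes it with slack $\mu$ and simply asserts the cross-block average is at most $\mu$. Strictly speaking that step in the paper is loose by this same factor of up to $2$; your version is the more careful accounting, and your final bound $3\mu/2\leq 2\mu$ is airtight.
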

\begin{proof}
Pick $f = f_{\ref{lem:double_representing}}^{\beta, g, \mu} = 
f_{\ref{lem:representatives1}}^{(\beta, \mu)} \circ g$
and $\gamma = \gamma_{\ref{lem:double_representing}}(\mu) = \min\{\delta_{\ref{lem:robust_to_densities}}(|\Sigma|, \mu), \gamma_{\ref{lem:representatives1}}(\mu)\}$.
Also pick $\alpha = \alpha_{\ref{lem:double_representing}}(\beta, g, \mu, l) = \alpha_{\ref{lem:representatives1}}(\beta, \mu, g(l))$, 
and suppose that $A$ is $(f, \gamma)$-robust.
By Observation \ref{obs:chainrobust} and the fact that $|A'| = lL \leq g(l)$, we know that $A'$ is $(f_{\ref{lem:representatives1}}^{(\beta, \mu)}, \gamma_{\ref{lem:representatives1}}(\mu))$-robust, so by Lemma \ref{lem:representatives1} there exists a tuple
$B = (B_{11}, \ldots, B_{1L}, \ldots B_{l1} ,\ldots, B_{lL})$ which $(\alpha_{\ref{lem:representatives1}}(\beta, \mu, lL), \beta(lL), \mu)$-represents $A'$, and by the monotonicity of $\alpha$, $B$ also $(\alpha, \beta(lL), \mu)$-represents $A'$. In particular, 
\[
\frac{1}{\binom{l}{2} L^2} \sum_{i < i' \in [l]} \sum_{j,j' \in [L]} \sum_{\sigma \in \Sigma} 
|d_{\sigma}(B_{ij}, B_{i'j'}) - d_{\sigma}(A'_{ij}, A'_{i'j'})| \leq \mu.
\]
Now by Lemma \ref{lem:robust_to_densities}, and since $|A'| \leq g(l) \leq f(l)$,
\[
\frac{1}{\binom{l}{2} L^2} \sum_{i < i' \in [l]} \sum_{j,j' \in [L]} \sum_{\sigma \in \Sigma} 
|d_{\sigma}(A'_{ij}, A'_{i'j'}) - d_{\sigma}(A_{i}, A_{i'})| \leq \mu.
\]
Combining the above two inequalities and using the triangle inequality concludes the proof.
\end{proof}

\subsection{The graph of the representatives and its coloring}
For the next step, let $\Gamma = \Gamma(\Sigma, t)$ denote the collection of all $t \times t$ matrices $M$ of the following form: Each entry of $M$ is a non-empty subset of the color set $\Sigma$ (where a subset is allowed to appear in multiple entries of $M$), so $|\Gamma(\Sigma, t)| < 2^{|\Sigma|t^2}$.

\begin{definition}[Threshold color matrices, threshold graphs, undesirability] 
Suppose \\
that $W = (W_{111}, \ldots, W_{mbt})$ represents $Q''$ and define $W_{jr} = (W_{jr1}, \ldots, W_{jrt})$ and $X_{j} = (U_{j1}, \ldots, U_{jt})$ for any $j \in [m]$ and $r \in [b]$. Let $0 < \eta < \rho < 1/|\Sigma|$ be real numbers.

For two $t$-tuples $A = (A_1, \ldots, A_t)$ and $B = (B_1, \ldots, B_t)$ where $A_s, B_s \subseteq V$ for any $s \in [t]$, the \emph{$\eta$-threshold matrix} $M = M(A, B, \eta) \in \Gamma$ of the pair $A, B$ is the $t \times t$ matrix whose $(s,s')$ entry (for $(s,s') \in [t]^2$) is the set of colors $\sigma \in \Sigma$ that satisfy $d_{\sigma}(A_s, B_{s'}, c\restrict{A_s, B_{s'}}) \geq \eta$. Note that this set cannot be empty since $\eta < 1 / |\Sigma|$.

The \emph{$(\eta, W)$-threshold graph} $H_W^{\eta}$ is an (ordered) $\Gamma$-colored graph defined as follows: The vertices of $H_W^{\eta}$ are all parts of $I'$,
and the color of the edge $I_{jr} I_{j'r'}$ is $M(W_{jr}, W_{j'r'}, \eta)$.

The edge $I_{jr}I_{j'r'}$ of $H_W^{\eta}$ is $\rho$-undesirable if $j' > j$ and at least $\rho t^2$ of the pairs $(s,s') \in [t]^2$ satisfy $M(X_j, X_{j'}, \rho)[s,s'] \nsubseteq M(W_{jr}, W_{j'r'}, \eta)[s,s']$.
Finally, $H_W^{\eta}$ is $\rho$-undesirable if at least $\rho \binom{m}{2} b^2$ of the edges $I_{jr} I_{j'r'}$ in it are $\rho$-undesirable,
and $\rho$-desirable otherwise.
\end{definition}
In other words, an edge $I_{jr}I_{j'r'}$ is undesirable if there are many pairs of sets $W_{jrs}, W_{j'r's'}$ in $W$, for which the density of some original edge color in $W_{jrs} \times W_{j'r's'}$ is significantly smaller than its density in $U_{js} \times U_{j's'}$.
$H_W^{\eta}$ is undesirable if it contains many undesirable edges.
Note that the set of $\rho$-undesirable edges in $H_W^{\eta}$ is orderly: Whether an edge $I_{jr}, I_{j'r'}$ of $H_W^{\eta}$ 
is undesirable or not depends only on its color $M(W_{jr}, W_{j'r'}, \eta)$ and on $M(X_j, X_{j'}, \rho)$.

The following lemma relates the robustness of our partitions to the desirability of the resulting threshold charts.
\begin{lemma}
\label{lem:robust_undesirable}
For any $0 < \rho  < 1/|\Sigma|$ and functions $\beta:\mathbb{N} \to (0,1/|\Sigma|)$ and $g:\mathbb{N} \to \mathbb{N}$, there exist a function 
$f = f_{\ref{lem:robust_undesirable}}^{\rho, \beta, g}:\mathbb{N} \to \mathbb{N}$ and positive real numbers  $\mu = \mu_{\ref{lem:robust_undesirable}}(\rho) \leq \rho$,  $\gamma = \gamma_{\ref{lem:robust_undesirable}}(\rho)$ and $\alpha = \alpha_{\ref{lem:robust_undesirable}}(\rho, \beta, g, m, t)$, 
such that if $Q'$ is $(f, \gamma)$-robust and $|Q''| \leq g(|Q'|)$, then there is a tuple $W = (W_{111}, \ldots, W_{mbt})$ which $(\alpha, \beta(mbt), \mu)$-represents $Q''$, and furthermore $H_W^{\rho/2}$ is $\rho$-desirable.

\end{lemma}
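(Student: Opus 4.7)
The plan is to apply Lemma \ref{lem:double_representing} with $A = Q'$ and $A' = Q''$ to obtain a suitable $W$, and then to convert the average density-difference bound of that lemma into a bound on the number of $\rho$-undesirable edges in $H_W^{\rho/2}$ via a two-stage Markov argument.

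First I would fix $\mu_{\ref{lem:robust_undesirable}}(\rho) := \rho^3/16$ (so in particular $\mu \leq \rho$), and accordingly set $f_{\ref{lem:robust_undesirable}}^{\rho,\beta,g} := f_{\ref{lem:double_representing}}^{\beta,g,\mu}$, $\gamma_{\ref{lem:robust_undesirable}}(\rho) := \gamma_{\ref{lem:double_representing}}(\mu)$, and $\alpha_{\ref{lem:robust_undesirable}}(\rho,\beta,g,m,t) := \alpha_{\ref{lem:double_representing}}(\beta,g,\mu,mt)$. Since $Q''$ refines $Q'$ by splitting each part $U_{js}$ into the $b$ sub-parts $\{U_{jrs}\}_{r\in[b]}$, applying Lemma \ref{lem:double_representing} with $l = mt$ and $L = b$ under the assumed robustness of $Q'$ and the hypothesis $|Q''| \leq g(|Q'|)$ produces a tuple $W = (W_{111}, \ldots, W_{mbt})$ that $(\alpha, \beta(mtb), \mu)$-represents $Q''$ and satisfies
\[
\sum_{j < j'} \sum_{s, s' \in [t]} \sum_{r, r' \in [b]} \sum_{\sigma \in \Sigma} \bigl| d_\sigma(W_{jrs}, W_{j'r's'}) - d_\sigma(U_{js}, U_{j's'}) \bigr| \leq 2\mu \binom{mt}{2} b^2 \leq \mu(mt)^2 b^2,
\]
because the left-hand sum ranges only over unordered pairs of parts of $Q'$ with distinct first coordinate, and hence is a sub-sum of the full sum bounded by the lemma.

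Next I would count $\rho$-undesirable edges. By the definition of the threshold matrices, a pair $(s,s')$ witnesses $M(X_j, X_{j'}, \rho)[s,s'] \not\subseteq M(W_{jr}, W_{j'r'}, \rho/2)[s,s']$ only if some $\sigma \in \Sigma$ satisfies $d_\sigma(U_{js}, U_{j's'}) \geq \rho$ while $d_\sigma(W_{jrs}, W_{j'r's'}) < \rho/2$; in particular the inner sum $\sum_\sigma |d_\sigma(W_{jrs}, W_{j'r's'}) - d_\sigma(U_{js}, U_{j's'})|$ exceeds $\rho/2$. Markov's inequality then bounds the total number of tuples $(j,j',s,s',r,r')$ with $j < j'$ for which $(s,s')$ is ``bad'' for the edge $I_{jr} I_{j'r'}$ by $(2/\rho) \cdot \mu(mt)^2 b^2 = 2\mu(mt)^2 b^2 / \rho$. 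Since an edge is $\rho$-undesirable iff at least $\rho t^2$ of its pairs $(s,s')$ are bad, a second Markov step gives at most $2\mu m^2 b^2 / \rho^2$ undesirable edges. With $\mu = \rho^3/16$ this is at most $\rho m^2 b^2 / 8$, which for $m \geq 2$ is strictly less than $\rho \binom{m}{2} b^2$, so $H_W^{\rho/2}$ is $\rho$-desirable, as required.

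The only delicate point is the first step, where one must observe that a purely set-theoretic failure of the subset relation defining $\rho$-undesirability is automatically accompanied by a quantitative density deviation of at least $\rho/2$; this is what bridges the averaged estimate of Lemma \ref{lem:double_representing} to a statement about the discrete threshold graph. Once that observation is in place, the rest is a routine two-stage Markov argument, and all parameter dependencies go through $\rho$, $\beta$, and $g$ only, as demanded by the statement.
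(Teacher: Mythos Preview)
Your proposal is correct and follows essentially the same route as the paper's proof: apply Lemma~\ref{lem:double_representing} with $A=Q'$, $A'=Q''$, and then observe that each witnessing pair $(s,s')$ for non-containment forces a density deviation of at least $\rho/2$, so the averaged bound translates into a bound on undesirable edges. The only cosmetic differences are that the paper argues by contrapositive (assume $\rho$-undesirable and derive a lower bound of $\rho^3/2$ on the normalized sum) and takes $\mu=\rho^3/5$, whereas you run the two-step Markov argument directly with $\mu=\rho^3/16$; both are equivalent.
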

\begin{proof}
Let $0 < \rho < 1/|\Sigma|$ and suppose that $H_{W}^{\rho/2}$ is $\rho$-undesirable, where $W$ is any tuple that represents $Q''$. The definition of undesirability implies that
\begin{align}
\label{eq:densities_dont_differ}
\frac{1}{\binom{m}{2} t^2 b^2} \sum_{j<j' \in [m]} \sum_{s,s' \in [t]} \sum_{r,r' \in [b]} \sum_{\sigma \in \Sigma} |d_{\sigma}(W_{jrs}, W_{j'r's'}) - d_{\sigma}(U_{js}, U_{j's'})| \geq 
\frac{ \rho \binom{m}{2} b^2 \rho t^2 \rho/2}{\binom{m}{2} t^2 b^2}= \frac{\rho^3}{2}.
\end{align}
Indeed, if $M(X_j, X_{j'}, \rho)[s,s'] \nsubseteq M(W_{jr}, W_{j'r'}, \rho/2)[s,s']$
then there exists some $\sigma \in \Sigma$ for which $d_{\sigma}(U_{js}, U_{j's'}) \geq \rho$ but $d_{\sigma}(W_{jrs}, U_{j'r's'}) \leq \rho/2$,
so each such event contributes $\rho/2$ to the sum in the left hand side.

Therefore, $H_W^{\rho/2}$ is $\rho$-desirable if the above sum is smaller than $\rho^3/2$.
Thus, we pick $\mu(\rho) = \rho^3 / 5$. 
Also pick $f_{\ref{lem:robust_undesirable}}^{\rho, \beta, g} = f_{\ref{lem:double_representing}}^{\beta, g, \mu}$, $\gamma_{\ref{lem:robust_undesirable}}(\rho) = \gamma_{\ref{lem:double_representing}}(\mu)$, and $\alpha_{\ref{lem:robust_undesirable}}(\rho, \beta, g, m, t) = \alpha_{\ref{lem:double_representing}}(\beta, g, \mu, mt)$.
Since $Q'$ is $\left(f_{\ref{lem:double_representing}}^{\beta, g, \mu}, \gamma_{\ref{lem:double_representing}}(\mu)\right)$-robust, and since $|Q''| \leq g(|Q'|)$, Lemma \ref{lem:double_representing} 
implies that there exists $W = (W_{111}, \ldots, W_{mbt})$ which $(\alpha, \beta(mbt), \mu)$-represents $Q'$, also guaranteeing that the left hand side of \eqref{eq:densities_dont_differ} is at most $2\mu < \rho^3/2$, so $H_W^{\rho/2}$ is $\rho$-desirable.
\end{proof}

\begin{definition}[Nicely colored subgraph]
\label{def:nicely_colored_subgraph}
Let $W = (W_{111}, \ldots, W_{mbt})$ be a tuple of subsets that represents $Q''$ and 
let $\eta > 0$. A subgraph $D = (\bigcup_{j=1}^{m} D_j, c_D)$ of $H_W^{\eta}$ is said to be \emph{nicely colored} if the following conditions hold.
\begin{itemize}
\item For any $j \in [m]$, $D_j \subseteq I_j$ and $|D_j| = d_{\mathcal{F}}$.
\item For any fixed $j \in [m]$, all edges inside $D_j$ have the same color from $\Gamma$, denoted by $C_{jj}^{(D)}$.
\item For any fixed $j < j' \in [m]$, all edges between $D_j$ and $D_{j'}$ have the same color from $\Gamma$, denoted by $C_{jj'}^{(D)}$.
\end{itemize}
\end{definition}

The next lemma follows directly from Corollary \ref{coro:ramsey_app}. 
\begin{lemma}
\label{lem:representing_ramsey}
For any two integers $m,t > 0$ there exists $R = R_{\ref{lem:representing_ramsey}}(m, t)$ satisfying the following: If $b \geq R_{\ref{lem:representing_ramsey}}(m, t)$, 
then for any tuple $W = (W_{111}, \ldots, W_{mbt})$ that represents $Q''$ and any
$\eta > 0$ there exists a nicely colored subgraph $D = D_{\ref{lem:representing_ramsey}}(W, \eta)$ of $H_W^{\eta}$.
Moreover, if $H_W^{\eta}$ is $\rho$-desirable for some $\eta < \rho < 1 / |\Sigma|$, then the number of $\rho$-undesirable edges in $D$ is at most $2 \rho \binom{m}{2} (d_{\mathcal{F}})^2$.
\end{lemma}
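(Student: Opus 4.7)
The plan is to deduce the lemma directly from Corollary \ref{coro:ramsey_app}, applied to $H_W^{\eta}$ viewed as an $m$-partite $\Gamma$-colored graph whose $j$th class is $V_j := \{I_{jr} : r \in [b]\}$. Here $|V_j| = b$, and the color set $\Gamma = \Gamma(\Sigma,t)$ has size depending only on $|\Sigma|$ and $t$. I will therefore set
\[
R_{\ref{lem:representing_ramsey}}(m, t) := R_{\ref{coro:ramsey_app}}\bigl(|\Gamma(\Sigma,t)|,\, m,\, d_{\mathcal{F}}\bigr),
\]
so that the hypothesis $b \geq R_{\ref{lem:representing_ramsey}}(m,t)$ supplies precisely the size condition needed to invoke the corollary (the corollary's $t$ is played by $d_{\mathcal{F}}$, and its $k$ by $m$).

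For the first assertion, apply Corollary \ref{coro:ramsey_app} with $B := \emptyset$. This produces subsets $D_j \subseteq V_j$ of size $d_{\mathcal{F}}$ such that all edges inside each $D_j$ share one $\Gamma$-color and all edges between any two $D_j, D_{j'}$ share one $\Gamma$-color. These are exactly the three conditions of Definition \ref{def:nicely_colored_subgraph}, so the induced subgraph on $\bigcup_j D_j$ is a nicely colored subgraph of $H_W^{\eta}$, which we take as $D_{\ref{lem:representing_ramsey}}(W,\eta)$.

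For the second assertion, suppose $H_W^{\eta}$ is $\rho$-desirable, and let $B$ be the set of $\rho$-undesirable edges of $H_W^{\eta}$. As noted in the excerpt just before the lemma, whether $I_{jr}I_{j'r'}$ lies in $B$ is determined by its color $M(W_{jr}, W_{j'r'}, \eta)$ together with $M(X_j, X_{j'}, \rho)$, so it depends only on the pair of classes $V_j, V_{j'}$ and the edge's color; hence $B$ is orderly. Moreover, $\rho$-desirability gives $|B| \leq \rho \binom{m}{2} b^2$. Applying Corollary \ref{coro:ramsey_app} with this $B$ and $\epsilon := \rho$ produces $D_j$'s satisfying the Ramsey conclusions above and the additional bound
\[
\sum_{j<j'} |B \cap (D_j \times D_{j'})| \leq 2\rho \binom{m}{2} d_{\mathcal{F}}^2,
\]
which is exactly the claimed upper bound on the number of $\rho$-undesirable edges of $D$. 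Since the bulk of the work is already packaged into Corollary \ref{coro:ramsey_app}, no genuine obstacle is anticipated: the only points requiring care are matching the parameters and verifying orderliness of $B$, both of which are immediate from the definitions.
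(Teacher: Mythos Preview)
Your proposal is correct and follows essentially the same route as the paper: apply Corollary~\ref{coro:ramsey_app} to $H_W^{\eta}$ with the orderly set $B$ of $\rho$-undesirable edges, using $|\Gamma(\Sigma,t)|$ (the paper uses the crude upper bound $2^{|\Sigma|t^2}$) as the number of colors and $d_{\mathcal F}$ in the role of the corollary's $t$. The only cosmetic difference is that you split into two invocations (once with $B=\emptyset$, once with $B$ the undesirable edges), whereas the paper makes a single call with $B$ the undesirable set and reads off both conclusions; since the lemma asserts a single $D$ satisfying both parts, it is cleanest to do it that way.
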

\begin{proof}
Take $R_{\ref{lem:representing_ramsey}}(m, t) = R_{\ref{coro:ramsey_app}}(2^{|\Sigma| t^2} , m, d_{\mathcal{F}}) > R_{\ref{coro:ramsey_app}}(|\Gamma_t| , m, d_{\mathcal{F}})$. 
Since the set of $\rho$-undesirable edges in $H_W^{\eta}$ is orderly,
 we may apply Corollary \ref{coro:ramsey_app} on $H_W^{\eta}$, to get a nicely colored subgraph $D$ of it.
If $H_W^{\eta}$ is $\rho$-desirable for some $\eta < \rho < 1 / |\Sigma|$, then by definition it has at most $\rho \binom{m}{2} b^2$ $\rho$-undesirable edges, and so the last condition in Corollary \ref{coro:ramsey_app} implies that $D$ has at most $2 \rho \binom{m}{2} (d_{\mathcal{F}})^2$ $\rho$-undesirable edges.
\end{proof}


\subsection{Cleaning the original graph}
\begin{definition}[Cleaned graph]
\label{def:cleaned_graph}
Let $W = (W_{111}, \ldots, W_{mbt})$ be a tuple of subsets which represents $Q''$, 
let $\eta > 0$, and suppose that $D$ is a nicely colored subgraph of $H_W^{\eta}$.
 The \emph{cleaned graph} $G' = G'(G, D) = (V, c')$ is defined as follows.
For any $u < v \in V$ where $u \in I_{js}$ and $v \in I_{j's'}$, we set $c'(uv) = c(uv)$ if $c(uv) \in C_{jj'}^{(D)}[s,s']$, and otherwise we set $c'(uv)$ to an arbitrary color from $C_{jj'}^{(D)}[s,s']$.
\end{definition}

The next lemma states that if $D$ comes from a desirable $H_W^{\eta}$, then $G'(G, D)$ is close to $G$.
\begin{lemma}
\label{lem:easy_to_clean}
Suppose that $D$ is a nicely colored subgraph
of some $H_W^{\eta}$ with $W$ representing $Q''$ and $0 < \eta < \rho$, such that at most $2 \rho \binom{m}{2} d_{\mathcal{F}}^2$ edges of $D$ are $\rho$-undesirable.
Then $G' = G'(G, D)$ is $(7 |\Sigma| \rho + 2/m)$-close to $G$, where $m = |I|$.
\end{lemma}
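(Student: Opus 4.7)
The plan is to bound the number of edges $uv$ whose color is changed when forming $G'$, and divide the count according to the pair $(j,j')$ of indices in $[m]$ such that $u \in I_j$ and $v \in I_{j'}$. I will handle three cases: $j = j'$ (vertices in the same large interval), $j < j'$ with the pair ``undesirable'' in the sense induced by $D$, and $j < j'$ with the pair ``desirable''. Throughout I may assume $n$ is large enough in terms of $m, t$ so that $|U_{js}| \leq n/(mt) + 1$ and $|I_j| \leq n/m + 1$.

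The key structural observation is that since $D$ is nicely colored, the entire matrix $C_{jj'}^{(D)}$ (and hence the desirability/undesirability of $D$-edges between $D_j$ and $D_{j'}$) depends only on $(j,j')$ and not on the choice of representatives inside $D_j, D_{j'}$. Hence the hypothesis that at most $2\rho\binom{m}{2}d_{\mathcal{F}}^2$ $D$-edges are undesirable means that the number of $j < j'$ for which the pair is undesirable is at most $2\rho\binom{m}{2}$. For these undesirable pairs I simply bound the number of recolored edges by $|I_j||I_{j'}|$, contributing at most $2\rho\binom{m}{2}(n/m+1)^2 \leq 3\rho\binom{n}{2}$ edges after dividing by $\binom{n}{2}$. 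The $j = j'$ case is crudely bounded by $\sum_j \binom{|I_j|}{2} \leq n^2/(2m) + O(n)$, which yields the $2/m$ slack.

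For a desirable pair $(j,j')$ I refine the count by $(s,s') \in [t]^2$. For each pair $(s,s')$, an edge $uv \in U_{js} \times U_{j's'}$ is recolored precisely when $c(uv) \notin C_{jj'}^{(D)}[s,s']$. If $(s,s')$ is \emph{good}, meaning $M(X_j,X_{j'},\rho)[s,s'] \subseteq C_{jj'}^{(D)}[s,s']$, then every color $\sigma \notin C_{jj'}^{(D)}[s,s']$ satisfies $d_\sigma(U_{js},U_{j's'}) < \rho$ by the definition of the $\rho$-threshold matrix $M(X_j,X_{j'},\rho)$; summing over the at most $|\Sigma|$ such colors gives at most $|\Sigma|\rho \cdot |U_{js}||U_{j's'}|$ recolored edges. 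If $(s,s')$ is \emph{bad}, I bound trivially by $|U_{js}||U_{j's'}|$, but by desirability there are fewer than $\rho t^2$ bad sub-pairs. Adding the two contributions gives at most $(|\Sigma|+1)\rho \sum_{s,s'} |U_{js}||U_{j's'}| = (|\Sigma|+1)\rho |I_j||I_{j'}|$ recolored edges per desirable pair, up to the rounding error from $|U_{js}| \neq n/mt$ exactly.

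Summing the desirable contributions over all $(j,j')$ with $j<j'$ and dividing by $\binom{n}{2}$ gives at most $(|\Sigma|+1)\rho$, and combined with the $2\rho$ from undesirable pairs and the $1/m$ from within-interval edges this yields a bound of roughly $(|\Sigma|+4)\rho + 1/m$, which is absorbed by the stated $7|\Sigma|\rho + 2/m$ with room to spare (the slack covers all rounding errors since the ``$+1$'' corrections are $O(n)$ and are negligible compared to $\binom{n}{2}$ when $n$ is large enough). The main bookkeeping obstacle is keeping track of the three layers of indices $(j,r,s)$ and verifying that the desirability notion only couples to $(j,j')$; once this is clear, the calculation is straightforward.
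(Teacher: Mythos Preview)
Your proposal is correct and follows essentially the same approach as the paper's proof: the same four-way case split (within an interval; between intervals in an ``undesirable'' pair $(j,j')$; between intervals in a desirable pair but a bad sub-pair $(s,s')$; desirable pair and good sub-pair), the same key observation that nicely-coloredness makes undesirability depend only on $(j,j')$ so that the hypothesis translates to $|\mathcal J|\le 2\rho\binom{m}{2}$, and the same density argument for good sub-pairs. Your numerical bookkeeping has a couple of harmless slips (you quote both $3\rho$ and $2\rho$ for the undesirable-pair contribution, and both $2/m$ and $1/m$ for the within-interval contribution), but these are absorbed by the generous constant $7|\Sigma|$ exactly as in the paper.
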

\begin{proof}
Write $G' = (V, c')$ and let $\mathcal{J}$ denote the set of pairs $j < j' \in [m]$
such that $D_j \times D_{j'}$ contains an undesirable edge.
An edge $e \in \binom{V}{2}$ may satisfy $c'(e) \neq c(e)$ only if at least one of the following holds (some of the inequalities stated below rely on the assumption that $n$ is large enough).
\begin{enumerate}
\item $e$ lies inside some part $I_j$ of $I$. The number of such edges is $\sum_{j=1}^{m} \binom{|I_j|}{2} \leq m \binom{\lceil n/m \rceil}{2} < \frac{2}{m} \binom{n}{2}$.

\item $e \in I_{j_1} \times I_{j_2}$ where $(j_1, j_2) \in \mathcal{J}
$. But $|\mathcal{J}| \leq 2\rho \binom{m}{2}$: The number of $\rho$-undesirable edges in $D$ is exactly $|\mathcal{J}| d_{\mathcal{F}}^2$,
since $D$ is orderly (with respect to the parts $D_1, \ldots, D_m$) and has $d_{\mathcal{F}}$ vertices in each $D_i$.
Thus, $|\mathcal{J}| d_{\mathcal{F}}^2 \leq 2 \rho \binom{m}{2} d_{\mathcal{F}}^2$, which implies the desired inequality. 
Therefore, the number of edges $e$ of of this type is less than $3 \rho \binom{n}{2}$.

\item $e \in U_{js} \times U_{j's'}$ where $j < j' \in [m]$, $(j,j') \notin \mathcal{J}$, 
and $M(X_j, X_{j'}, \rho)[s,s'] \nsubseteq C_{jj'}(D)[s,s']$.
But since the number of pairs $(s,s') \in [t]^2$ that satisfy this condition for a fixed $(j,j') \notin \mathcal{J}$ is at most $\rho t^2$, only at most $ 3 \rho |I_j| |I_{j'}| / 2$ of the edges $e \in I_j \times I_{j'}$ belong here, implying that the total number of edges of this type is less than $2 \rho \binom{n}{2}$.

\item $e \in U_{js} \times U_{j's'}$ where $j < j' \in [m]$, $(j,j') \notin \mathcal{J}$, 
and $M(X_j, X_{j'}, \rho)[s,s'] \subseteq C_{jj'}(D)[s,s']$, but $d_{c(e)}(U_{js}, U_{j's'}) < \rho$. The number of such edges in $U_{js} \times U_{j's'}$ is at most $ |\Sigma| \cdot \rho |U_{js}| |U_{j's'}|$, and the total number of such edges is less than $2 \rho |\Sigma| \binom{n}{2}$.
\end{enumerate}
Therefore, the total number of edges $e$ with $c(e) \neq c'(e)$ is less than $(7 \rho |\Sigma| + 2/m) \binom{n}{2}$.
\end{proof}

\begin{lemma}
\label{lem:has_good_copy}
Let $W = (W_{111}, \ldots, W_{mbt})$ be a tuple that represents $Q''$ and let $\eta > 0$. 
If $D$ is a nicely colored subgraph of $H_W^{\eta}$ and the cleaned $G'(G, D)$ contains a copy of some $F = ([n_F], c_F) \in \mathcal{F}$, then there exist $W_{j_1 r_1 s_1}, \ldots, W_{j_{n_F} r_{n_F} s_{n_F}} \in W$ with the following properties.
\begin{itemize}
\item For any $i \in [n_F-1]$, either $j_{i+1} > j_i$, or $j_{i+1} = j_i$ and $r_{i+1} > r_i$.
\item For any $i < i' \in [n_F]$ it holds that $d_{c_F(ii')}(W_{j_i r_i s_i}, W_{j_{i'} r_{i'} s_{i'}}) \geq \eta$.
\end{itemize}
\end{lemma}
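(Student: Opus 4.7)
The plan is to start from an induced copy of $F$ in $G'(G,D)$ and produce one representative per vertex of $F$, taken from distinct intervals of $I'$ that appear in $D$. Let $v_1 < \ldots < v_{n_F}$ be the vertices of such a copy in $G'$, so that $c'(v_iv_{i'}) = c_F(ii')$ for every $i<i'$. For each $i$, let $j_i \in [m]$ and $s_i \in [t]$ be the unique indices with $v_i \in U_{j_i s_i}$; since $I$ is an interval partition, we automatically have $j_1 \leq j_2 \leq \ldots \leq j_{n_F}$.

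The only subtlety is that several vertices $v_i$ may lie in the same interval $I_j$, while the lemma requires the pairs $(j_i, r_i)$ to be \emph{strictly} increasing in lexicographic order. To resolve this I would exploit the fact that $D_j$ consists of exactly $d_{\mathcal{F}} \geq n_F$ intervals of $I'$ inside each $I_j$. For each $j \in [m]$, enumerate the intervals of $D_j$ in their natural order as $I_{j p_1^{(j)}}, \ldots, I_{j p_{d_{\mathcal{F}}}^{(j)}}$; then, taking the indices $i$ with $j_i = j$ in their natural order within $[n_F]$, assign $r_i := p_\ell^{(j)}$ when $i$ is the $\ell$-th such index. Since $|\{i : j_i = j\}| \leq n_F \leq d_{\mathcal{F}}$, this assignment is well-defined, and by construction the sequence $(j_i, r_i)$ is strictly lex-increasing, giving the first bullet of the conclusion.

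It then remains to verify the density condition for every $i < i'$. By the definition of the cleaned graph, since $v_i \in U_{j_i s_i}$ and $v_{i'} \in U_{j_{i'} s_{i'}}$, the recolored edge satisfies $c_F(ii') = c'(v_iv_{i'}) \in C_{j_i j_{i'}}^{(D)}[s_i, s_{i'}]$. By the nice-coloring property of $D$, the $D$-edge between $I_{j_i r_i} \in D_{j_i}$ and $I_{j_{i'} r_{i'}} \in D_{j_{i'}}$ -- which is an edge of $D$ whether $j_i < j_{i'}$ (the bipartite case) or $j_i = j_{i'}$ (the intra-$D_j$ case, using $r_i < r_{i'}$ guaranteed by the reassignment) -- is colored exactly $C_{j_i j_{i'}}^{(D)} = M(W_{j_i r_i}, W_{j_{i'} r_{i'}}, \eta)$. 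Unwinding the definition of the threshold matrix then yields $d_{c_F(ii')}(W_{j_i r_i s_i}, W_{j_{i'} r_{i'} s_{i'}}) \geq \eta$, as required. No serious obstacle is anticipated: the proof is essentially a direct unpacking of the definitions of cleaned graph, nicely colored subgraph, and threshold matrix, with the reassignment of $r$-coordinates being the only piece of bookkeeping.
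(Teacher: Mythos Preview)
Your proposal is correct and follows essentially the same approach as the paper. The only cosmetic difference is the choice of $r_i$: the paper assigns to vertex $i$ the $i$-th interval of $D_{j_i}$ (a global index), whereas you assign the $\ell$-th interval where $\ell$ is the local rank of $i$ among vertices landing in $I_{j_i}$; both choices exploit $|D_j|=d_{\mathcal{F}}\geq n_F$ and lead to the same verification via the definitions of the cleaned graph and the threshold matrix.
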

\begin{proof}
Suppose that $G'(G,D) = (V,c')$ contains a copy of $F$ whose vertices in $V$ are $v_1 < \ldots < v_{n_F}$. 
For any $i \in [n_F]$, let $j_i \in [m], s_i \in [t]$ be the indices for which 
$v_i \in I_{j_i s_i}$ and denote the vertices of $D$ inside $I_{j_i}$ by $D_{j_i} = \{I_{j_i r_{i1}}, \ldots, I_{j_i r_{id_{\mathcal{F}}}}\}$, where $r_{i1} < \ldots < r_{i d_{\mathcal{F}}} \in [b]$ for any $ i \in [n_F]$.
Then for any $i,i' \in [n_F]$ and $l,l' \in [d_{\mathcal{F}}]$, for which either $i < i'$, or $i = i'$ and $l < l'$,
it holds that $c_F(ii')  = c'(v_i, v_{i'}) \in  C_{j_i j_{i'}}^{(D)}[s_i,s_{i'}] = M(W_{j_i r_{il}}, W_{j_{i'} r_{i'l'}}, \eta)[s_i,s_{i'}]$, and so by definition
$d_{c_F(ii')}(W_{j_i r_{il} s_{i}}, W_{j_{i'} r_{i'l'} s_{i'}}) \geq \eta$.

Therefore, the sets $W_{j_1 r_{11} s_1}, \ldots, W_{j_{n_F} r_{n_F n_F} s_{n_F}}$ satisfy the conditions of the lemma: 
They exist, since $n_F \leq d_{\mathcal{F}}$. The first condition holds since $j_1 \leq \ldots \leq j_{n_F}$, and if $j_{i} = j_{i+1}$ then $r_{ii} = r_{(i+1)i} < r_{(i+1)(i+1)}$. 
The second condition holds by the first paragraph of the proof (putting $l = i$ and $l' = i'$). 
\end{proof}

\subsection{Proof of Theorem \ref{thm:finite_removal_lemma}}
Suppose that $G$ is $\epsilon$-far from $\mathcal{F}$-freeness.
Take the function $r = R_{\ref{lem:representing_ramsey}}$ (note that $r$ is a two-variable function)
and let $g:\mathbb{N} \to \mathbb{N}$ be defined by $g(l) = lr(l, l)$ for any $l \in \mathbb{N}$. 
Also take $k = \lceil 20 / \epsilon \rceil$,
$\rho = \epsilon / 8 |\Sigma|$,  and $\beta:\mathbb{N} \to (0,1/|\Sigma|)$ as a constant function that satisfies $\beta(l) = \beta_{\ref{lem:regularity_to_many_copies}}^{\rho/2}(d_{\mathcal{F}})$ for any $l \in \mathbb{N}$.
Also take 
$f = f_{\ref{lem:robust_undesirable}}^{\rho, \beta, g}$, and $\gamma = \gamma_{\ref{lem:robust_undesirable}}(\rho)$.

Apply Lemma \ref{lem:mainpartscheme} with parameters $k, \gamma, r, f$, obtaining the equipartitions $I, I', Q', Q''$ of sizes $m, mb, mt, mbt$ as in the statement of the lemma, where $k \leq m \leq S_{\ref{lem:mainpartscheme}}(\gamma,k,f,r)$, $mt \leq T_{\ref{lem:mainpartscheme}}(\gamma,k,f,r)$, $b = r(m,t) = R_{\ref{lem:representing_ramsey}}(m, t)$, and $Q'$ is $(f, \gamma)$-robust.
Observe that $|Q''| = mt r(m, t) \leq g(mt) = g(|Q'|)$.

Next, define $\alpha = \alpha_{\ref{lem:robust_undesirable}}(\rho, \beta, g, S_{\ref{lem:mainpartscheme}}(\gamma, k, f, r), T_{\ref{lem:mainpartscheme}}(\gamma, k, f, r))$ and $\mu = \mu_{\ref{lem:robust_undesirable}}(\rho)$.
By Lemma \ref{lem:robust_undesirable}, and since $\beta(l) = \beta_{\ref{lem:regularity_to_many_copies}}^{\rho/2}(d_{\mathcal{F}})$ for any $l 
\in \mathbb{N}$,
there is a tuple $W$ which $(\alpha, \beta_{\ref{lem:regularity_to_many_copies}}^{\rho/2}(d_{\mathcal{F}}), \mu)$-represents $Q''$, and $H_W^{\rho/2}$ is $\rho$-desirable.
By Lemma \ref{lem:representing_ramsey}, and since $b = R_{\ref{lem:representing_ramsey}}(m,t)$, there is a nicely colored subgraph $D = D_{\ref{lem:representing_ramsey}}(W, \rho/2)$, containing at 
most $2\rho \binom{m}{2} (d_{\mathcal{F}})^2$ $\rho$-undesirable edges.

Lemma \ref{lem:easy_to_clean} implies that $G' = G'(G, D)$ is $(7|\Sigma|\rho + 2/m)$-close to $G$; but $7|\Sigma|\rho + 2/m \leq 7\epsilon/8 + 2/k < \epsilon$,
so $G'$ contains a copy of some $F = ([n_F], c_F) \in \mathcal{F}$. 
Therefore, by Lemma \ref{lem:has_good_copy} (putting $\eta = \rho/2$ in the statement of the lemma),  there exist $W_{j_1 r_1 s_1}, \ldots, W_{j_{n_F}, r_{n_F}, s_{n_F}} \in W$ that satisfy the conditions of the lemma. As all pairs of sets from $W$ are $\beta_{\ref{lem:regularity_to_many_copies}}^{\rho/2}(n_F)$-regular (since $n_F \leq d_{\mathcal{F}}$), we can apply Lemma \ref{lem:regularity_to_many_copies} to conclude that the number of $F$-copies in $G$ is at least 
$\delta n^q$ for $q = n_F \leq d_{\mathcal{F}}$ and
$\delta = \kappa_{\ref{lem:regularity_to_many_copies}}(\rho/2, n_F) \alpha^{n_F} \geq \kappa_{\ref{lem:regularity_to_many_copies}}(\rho/2, d_{\mathcal{F}}) \alpha^{d_{\mathcal{F}}}$, concluding the proof.

\section{The infinite case}\label{sec:proofinfinite}
In this section we use the same notation as in Section \ref{sec:proof_main_result}, unless stated otherwise.
The proof of Theorem \ref{thm:infinite_removal_lemma} follows that of Theorem \ref{thm:finite_removal_lemma} almost word by word, with only one major difference:
In the proof of Theorem \ref{thm:finite_removal_lemma} we have picked $d_{\mathcal{F}}$
to be the largest number of vertices in a graph from $\mathcal{F}$, and showed that 
if $G$ is $\epsilon$-far from $\mathcal{F}$-freeness than there must be 
a set of at most $d_{\mathcal{F}}$ representatives of parts in $Q''$,
that span a large number of $F$-copies for some $F \in \mathcal{F}$.
However, in the infinite case, such a definition of $d_{\mathcal{F}}$ cannot work.
Instead, we take $d_{\mathcal{F}}(m,t)$ to be a parameter that depends on the family $\mathcal{F}$, the size of the alphabet $|\Sigma|$ and the integers $m,t$ (where $m = |I|$, $mt = |Q'|$).
It is then shown that with this choice of $d_{\mathcal{F}}$, the proof follows similarly to the finite case, with Lemmas \ref{lem:representing_ramsey} and \ref{lem:has_good_copy}
being replaced with similar lemmas that are suitable for the infinite case (Lemmas \ref{lem:representing_ramsey_infinite} and \ref{lem:has_good_copy_infinite} below, respectively).

\subsection{Embeddability}
\begin{definition}[Embeddability]
\label{def:embeddability}
For a finite alphabet $\Sigma$, integers $m, t > 0$, $\Gamma(\Sigma, t)$-colored graph with loops $H = ([m], c_H)$ and $\Sigma$-colored graph $F = ([n_F], c_F)$,
we say that $F$ is \emph{embeddable} in $H$ if there exists a mapping $h:[n_F] \to V_H$ with the following properties.
\begin{itemize}
\item $h$ is weakly order-preserving: $h(1) \leq \ldots \leq h(n_F)$.
\item There exist integers $s_1, \ldots, s_{n_F} \in [t]$ so that $c_F(ii') \in c_H(h(i), h(i'))[s_i, s_{i'}]$ for any $i < i' \in [n_F]$.
\end{itemize}
A family $\mathcal{F}$ of $\Sigma$-colored graphs is \emph{embeddable} in $H$ if some $F \in \mathcal{F}$ is embeddable in $H$.
\end{definition}

The next lemma states that the desired $d_{\mathcal{F}}$ is indeed well-defined. It is similar in spirit to
the ideas of Alon and Shapira \cite{AlonShapira2008} (see Section 4 there). 
\begin{lemma}
\label{lem:choosing_num_representatives}
Fix a finite alphabet $\Sigma$. For any (finite or infinite) family $\mathcal{F}$ of $\Sigma$-ordered graphs and integers $m,t > 0$,
there exists $d_{\mathcal{F}} = d_{\mathcal{F}}^{(\ref{lem:choosing_num_representatives})}(m,t)$ with the following property.
If $H = ([m], c_H)$ is a $\Gamma(\Sigma, t)$-colored graph with loops, 
and if $\mathcal{F}$ is embeddable in $H$, then there is a graph $F \in \mathcal{F}$ which is embeddable in $H$ 
and has at most $d_{\mathcal{F}}^{(\ref{lem:choosing_num_representatives})}(m,t)$ vertices.
\end{lemma}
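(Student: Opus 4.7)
The plan is a straightforward finiteness argument. The key observation is that although the family $\mathcal{F}$ may be infinite, the class of possible ``host'' graphs $H$ is finite once $m$, $t$, and $|\Sigma|$ are fixed. Indeed, the number of $\Gamma(\Sigma,t)$-colored graphs with loops on vertex set $[m]$ is at most $|\Gamma(\Sigma,t)|^{\binom{m}{2}+m}$, which in turn is bounded by $2^{|\Sigma| t^2 (\binom{m}{2}+m)}$, a quantity depending only on $|\Sigma|$, $m$, and $t$.

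Next, for each individual $\Gamma(\Sigma,t)$-colored graph with loops $H$ on $[m]$, I would define
\[
 n(H) \;=\; \min\{\,n_F : F \in \mathcal{F} \text{ and } F \text{ is embeddable in } H\,\}
\]
whenever the set on the right-hand side is non-empty, and $n(H) = 1$ otherwise. The point is that this set (when non-empty) is a non-empty subset of the positive integers, and therefore has a well-defined minimum. In other words, even if $\mathcal{F}$ is infinite, the subfamily of $\mathcal{F}$ that embeds into a \emph{specific} $H$ contains at least one member of smallest possible vertex count.

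Finally, I would simply set
\[
 d_{\mathcal{F}}^{(\ref{lem:choosing_num_representatives})}(m,t) \;=\; \max_{H} n(H),
\]
where the maximum is taken over all $\Gamma(\Sigma,t)$-colored graphs with loops on $[m]$. This maximum is finite because, as noted in the first paragraph, the collection of such $H$ is finite. By construction, whenever $\mathcal{F}$ is embeddable in a given $H$, the graph witnessing $n(H) \leq d_{\mathcal{F}}^{(\ref{lem:choosing_num_representatives})}(m,t)$ is an element of $\mathcal{F}$ that embeds in $H$ and has at most $d_{\mathcal{F}}^{(\ref{lem:choosing_num_representatives})}(m,t)$ vertices, as required.

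There is no real obstacle here: the lemma is essentially a compactness-style packaging of the trivial fact that a non-empty set of positive integers has a minimum, combined with the finiteness of the space of possible colored host graphs. The only thing one has to be slightly careful about is writing Definition \ref{def:embeddability} so that embeddability truly depends only on $H$ (and not, say, on any auxiliary ordered graph structure that would blow up the enumeration); inspection of the definition confirms that this is the case, since it only refers to the vertex ordering on $[m]$ and the coloring $c_H$.
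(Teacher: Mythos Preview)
Your proposal is correct and essentially identical to the paper's own proof: both define $d_{\mathcal{F}}(m,t)$ as $\max_H \min_{F} |F|$, using that the set of $\Gamma(\Sigma,t)$-colored graphs with loops on $[m]$ is finite and that a non-empty set of positive integers has a minimum. The only cosmetic difference is that the paper restricts the outer maximum to those $H$ in which $\mathcal{F}$ is embeddable, whereas you take the maximum over all $H$ and set $n(H)=1$ when nothing embeds; both conventions clearly yield a valid bound.
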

\begin{proof}
Let $\mathcal{H} = \mathcal{H}_{m,t}$ denote the set of all $\Gamma(\Sigma, t)$-colored graphs $H = ([m], c_H)$ with loops, such that $\mathcal{F}$ is embeddable in $H$.
Note that $|\mathcal{H}_{m,t}| \leq |\Gamma(\Sigma, t)|^{m^2} \leq 2^{|\Sigma| t^2 m^2}$. 
For any $H \in \mathcal{H}$ let $\mathcal{F}_H \subseteq \mathcal{F}$ denote the collection of all graphs in $\mathcal{F}$ that are embeddable in $H$.
Finally define 
\[
d_{\mathcal{F}}^{(\ref{lem:choosing_num_representatives})}(m,t) = \max_{H \in \mathcal{H}_{m,t}} \min_{F \in \mathcal{F}_H} |F|
\]
where $|F|$ denotes the number of vertices in $F$. Since $\mathcal{H}_{m,t}$ is finite, and since the set $\mathcal{F}_H$ is non-empty
for any $H \in \mathcal{H}$ (by definition of $\mathcal{H}$), 
the function $d_{\mathcal{F}}^{(\ref{lem:choosing_num_representatives})}(m,t)$ is well defined.
Now let $H$ be a graph as in the statement of the lemma and suppose that $\mathcal{F}$ is embeddable in $H$. 
Then $H \in \mathcal{H}_{m,t}$, so there exists $F \in \mathcal{F}_H$ of size at most $d_{\mathcal{F}}^{(\ref{lem:choosing_num_representatives})}(m,t)$.
\end{proof}

\subsection{Adapting the proof for infinite families}
For what follows, 
a \emph{nicely colored $(m,t)$-subgraph} is defined exactly like a nicely colored subgraph (see Definition \ref{def:nicely_colored_subgraph}), except that each set $D_j$ is of size $d_{\mathcal{F}}^{(\ref{lem:choosing_num_representatives})}(m,t)$. 
The following is a variant of Lemma \ref{lem:representing_ramsey} for the infinite case.
\begin{lemma}
\label{lem:representing_ramsey_infinite}
For any two integers $m,t > 0$ there exists $R = R_{\ref{lem:representing_ramsey_infinite}}(m, t)$ satisfying the following: If $b \geq R_{\ref{lem:representing_ramsey_infinite}}(m, t)$, 
then for any tuple $W = (W_{111}, \ldots, W_{mbt})$ that represents $Q''$ and any
$\eta > 0$ there exists a nicely colored $(m,t)$-subgraph $D = D_{\ref{lem:representing_ramsey_infinite}}(W, \eta)$ of $H_W^{\eta}$.
Moreover, if $H_W^{\eta}$ is $\rho$-desirable for some $\eta < \rho < 1 / |\Sigma|$, then the number of $\rho$-undesirable edges in $D$ is at most $2 \rho \binom{m}{2} (d_{\mathcal{F}}^{(\ref{lem:choosing_num_representatives})}(m,t))^2$.
\end{lemma}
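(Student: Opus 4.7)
The plan is to carry out the same argument as in the proof of Lemma \ref{lem:representing_ramsey}, with $d_{\mathcal{F}}^{(\ref{lem:choosing_num_representatives})}(m,t)$ taking the place of the finite quantity $d_{\mathcal{F}}$. Concretely, I would define
\[
R_{\ref{lem:representing_ramsey_infinite}}(m,t) \;=\; R_{\ref{coro:ramsey_app}}\!\left(2^{|\Sigma| t^{2}},\, m,\, d_{\mathcal{F}}^{(\ref{lem:choosing_num_representatives})}(m,t)\right),
\]
which is a well-defined (finite) integer: the only issue that could arise from $\mathcal{F}$ being infinite is already handled by Lemma \ref{lem:choosing_num_representatives}, which guarantees that $d_{\mathcal{F}}^{(\ref{lem:choosing_num_representatives})}(m,t)$ is a finite natural number depending only on $\mathcal{F}$, $\Sigma$, $m$ and $t$. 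The factor $2^{|\Sigma|t^{2}}$ is simply an upper bound on $|\Gamma(\Sigma,t)|$, the size of the color set used by $H_W^{\eta}$.

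Fix $b \geq R_{\ref{lem:representing_ramsey_infinite}}(m,t)$ and a tuple $W$ representing $Q''$. View $H_W^{\eta}$ as an $m$-partite $\Gamma(\Sigma,t)$-colored graph whose $j$-th class is $\{I_{j1},\ldots,I_{jb}\}$ (of size $b$), and let $B$ denote its set of $\rho$-undesirable edges. As noted immediately after the definition of undesirability in Section \ref{sec:proof_main_result}, whether an edge $I_{jr}I_{j'r'}$ lies in $B$ depends only on the pair $(j,j')$ and on the $\Gamma(\Sigma,t)$-color assigned to that edge, so $B$ is orderly in the sense of Corollary \ref{coro:ramsey_app}. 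Applying that corollary to $H_W^{\eta}$ with the $t$-parameter of the corollary set to $d_{\mathcal{F}}^{(\ref{lem:choosing_num_representatives})}(m,t)$ yields subsets $D_j \subseteq \{I_{j1},\ldots,I_{jb}\}$ of exactly that size, such that all $D$-edges inside each $D_j$ carry one common color in $\Gamma(\Sigma,t)$ and all $D$-edges between $D_j$ and $D_{j'}$ (for $j<j'$) carry one common color in $\Gamma(\Sigma,t)$. This is precisely the notion of a nicely colored $(m,t)$-subgraph.

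For the ``moreover'' part, assume $H_W^{\eta}$ is $\rho$-desirable, so $|B| \leq \rho\binom{m}{2}b^{2}$, i.e.\ $|B|$ satisfies the hypothesis of Corollary \ref{coro:ramsey_app} with $\epsilon=\rho$. The final bullet of that corollary then bounds the number of $\rho$-undesirable edges retained inside $D$ by $2\rho\binom{m}{2}(d_{\mathcal{F}}^{(\ref{lem:choosing_num_representatives})}(m,t))^{2}$, which is exactly the promised bound. I do not expect any serious obstacle: the entire difficulty of moving from finite to infinite $\mathcal{F}$ has already been absorbed into Lemma \ref{lem:choosing_num_representatives}, which extracts a finite ``effective maximum size'' $d_{\mathcal{F}}^{(\ref{lem:choosing_num_representatives})}(m,t)$, after which the Ramsey extraction step is identical to the finite case.
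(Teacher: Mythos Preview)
Your proposal is correct and matches the paper's approach exactly: the paper states that the proof is essentially identical to that of Lemma~\ref{lem:representing_ramsey} with every occurrence of $d_{\mathcal{F}}$ replaced by $d_{\mathcal{F}}^{(\ref{lem:choosing_num_representatives})}(m,t)$, and in particular takes $R_{\ref{lem:representing_ramsey_infinite}}(m,t)=R_{\ref{coro:ramsey_app}}\bigl(2^{|\Sigma|t^2},m,d_{\mathcal{F}}^{(\ref{lem:choosing_num_representatives})}(m,t)\bigr)$, which is precisely what you do.
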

The proof of Lemma \ref{lem:representing_ramsey_infinite} is essentially identical to that of Lemma \ref{lem:representing_ramsey}, with any occurrence  of $d_\mathcal{F}$ replaced by $d_{\mathcal{F}}^{(\ref{lem:choosing_num_representatives})}(m,t)$. In particular we take
$R_{\ref{lem:representing_ramsey_infinite}}(m, t) = R_{\ref{coro:ramsey_app}}(2^{|\Sigma| t^2} , m, d_{\mathcal{F}}^{(\ref{lem:choosing_num_representatives})}(m, t))$. 

Next we state the variant of Lemma \ref{lem:easy_to_clean} for the infinite case. 
The proof is essentially identical.
\begin{lemma}
	\label{lem:easy_to_clean_infinite}
	Suppose that $D$ is a nicely colored $(m,t)$-subgraph
	of some $H_W^{\eta}$ with $W$ representing $Q''$ and $0 < \eta < \rho$, such that at most $2 \rho \binom{m}{2} (d_{\mathcal{F}}^{(\ref{lem:choosing_num_representatives})}(m, t))^2$ edges of $D$ are $\rho$-undesirable.
	Then $G' = G'(G, D)$ is $(7 |\Sigma| \rho + 2/m)$-close to $G$, where $m = |I|$.
\end{lemma}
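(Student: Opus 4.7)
The plan is to essentially recycle the proof of Lemma \ref{lem:easy_to_clean} verbatim, observing that the parameter $d_{\mathcal{F}}$ only enters the argument in a place where it ultimately cancels out. Write $G' = (V, c')$ and let $\mathcal{J} \subseteq \binom{[m]}{2}$ denote the set of pairs $j < j'$ such that $D_j \times D_{j'}$ contains a $\rho$-undesirable edge of $D$. As before, any edge $e$ with $c'(e) \neq c(e)$ falls into one of four categories: (i) $e$ lies inside some part $I_j$; (ii) $e \in I_{j_1} \times I_{j_2}$ with $(j_1, j_2) \in \mathcal{J}$; (iii) $e \in U_{js} \times U_{j's'}$ with $(j,j') \notin \mathcal{J}$ and $M(X_j, X_{j'}, \rho)[s,s'] \nsubseteq C_{jj'}^{(D)}[s,s']$; (iv) $e \in U_{js} \times U_{j's'}$ with $(j,j') \notin \mathcal{J}$, $M(X_j, X_{j'}, \rho)[s,s'] \subseteq C_{jj'}^{(D)}[s,s']$, but $d_{c(e)}(U_{js}, U_{j's'}) < \rho$.

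First I would note that categories (i), (iii), (iv) are bounded exactly as in the proof of Lemma \ref{lem:easy_to_clean} by $\frac{2}{m}\binom{n}{2}$, $2\rho\binom{n}{2}$, and $2\rho|\Sigma|\binom{n}{2}$ respectively; none of these bounds makes any reference to the sizes $|D_j|$, so the arguments transfer without change. The only place where the size of the parts of $D$ enters is the estimate of $|\mathcal{J}|$ in category (ii). Here the crucial observation is that $D$ is orderly (since whether an edge of $H_W^{\eta}$ is $\rho$-undesirable depends only on its color and on $M(X_j, X_{j'}, \rho)$), and $D$ has exactly $d_{\mathcal{F}}^{(\ref{lem:choosing_num_representatives})}(m,t)$ vertices in each part $D_j$. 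Therefore the number of $\rho$-undesirable edges of $D$ equals precisely $|\mathcal{J}| \cdot \bigl(d_{\mathcal{F}}^{(\ref{lem:choosing_num_representatives})}(m,t)\bigr)^2$, and the hypothesis gives
\[
|\mathcal{J}| \cdot \bigl(d_{\mathcal{F}}^{(\ref{lem:choosing_num_representatives})}(m,t)\bigr)^2 \leq 2\rho \binom{m}{2} \bigl(d_{\mathcal{F}}^{(\ref{lem:choosing_num_representatives})}(m,t)\bigr)^2,
\]
so $|\mathcal{J}| \leq 2\rho\binom{m}{2}$, exactly as in the finite case. Consequently the number of edges in category (ii) is at most $3\rho\binom{n}{2}$ (with the usual slack for large $n$).

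Summing the four contributions yields at most $(7|\Sigma|\rho + 2/m)\binom{n}{2}$ edges whose color is changed, which is the desired closeness bound. There is no real obstacle here: the whole point of choosing $d_{\mathcal{F}}^{(\ref{lem:choosing_num_representatives})}(m,t)$ as the common part size in a nicely colored $(m,t)$-subgraph and calibrating the undesirability budget in Lemma \ref{lem:representing_ramsey_infinite} accordingly is precisely to make the factor $(d_{\mathcal{F}}^{(\ref{lem:choosing_num_representatives})}(m,t))^2$ cancel on both sides of the bound on $|\mathcal{J}|$, so that the cleaning estimate becomes independent of this parameter. In short, the proof is a line-by-line copy of that of Lemma \ref{lem:easy_to_clean} with every occurrence of $d_{\mathcal{F}}$ replaced by $d_{\mathcal{F}}^{(\ref{lem:choosing_num_representatives})}(m,t)$.
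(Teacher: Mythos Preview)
Your proposal is correct and matches the paper's approach exactly: the paper states that the proof is essentially identical to that of Lemma~\ref{lem:easy_to_clean}, with $d_{\mathcal{F}}$ replaced by $d_{\mathcal{F}}^{(\ref{lem:choosing_num_representatives})}(m,t)$, and your write-up carries this out in detail, correctly identifying that the parameter only appears in the bound on $|\mathcal{J}|$ where it cancels.
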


The next lemma is the variant of Lemma \ref{lem:has_good_copy} that we use in the infinite case.
In contrast to the previous two lemmas, here the proof is slightly modified, and makes use of Lemma \ref{lem:choosing_num_representatives}.
\begin{lemma}
\label{lem:has_good_copy_infinite}
Let $W = (W_{111}, \ldots, W_{mbt})$ be a tuple that represents $Q''$ and let $\eta > 0$. 
If $D$ is a nicely colored $(m,t)$-subgraph of $H_W^{\eta}$ and $G'(G, D)$ contains a copy of a graph from $\mathcal{F}$,
then there exist $F = ([n_F], c_F) \in \mathcal{F}$, where $n_F \leq d_{\mathcal{F}}^{(\ref{lem:choosing_num_representatives})}(m,t)$, and sets 
$W_{j_1 r_1 s_1}, \ldots, W_{j_{n_F} r_{n_F} s_{n_F}} \in W$, with the following properties.
\begin{itemize}
\item For any $i \in [n_F-1]$, either $j_{i+1} > j_i$, or $j_{i+1} = j_i$ and $r_{i+1} > r_i$.
\item For any $i < i' \in [n_F]$ it holds that $d_{c_F(ii')}(W_{j_i r_i s_i}, W_{j_{i'} r_{i'} s_{i'}}) \geq \eta$.
\end{itemize}
\end{lemma}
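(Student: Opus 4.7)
The plan is to build a $\Gamma(\Sigma,t)$-colored graph on $[m]$ from the nicely colored subgraph $D$, use the assumed $F'$-copy in $G'(G,D)$ to witness embeddability of $\mathcal{F}$ into this colored graph, apply Lemma \ref{lem:choosing_num_representatives} to pass to a \emph{small} $F\in\mathcal{F}$ that is still embeddable, and finally read off representative sets in $W$ from that embedding.

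First, define $H=([m],c_H)$ by $c_H(j,j')=C_{jj'}^{(D)}\in\Gamma(\Sigma,t)$ for $j<j'$ and $c_H(j,j)=C_{jj}^{(D)}$, using the colors guaranteed by $D$ being nicely colored in the sense of Definition \ref{def:nicely_colored_subgraph}. Now take the assumed copy of some $F'=([n_{F'}],c_{F'})\in\mathcal{F}$ in $G'(G,D)$, with vertices $v_1<\cdots<v_{n_{F'}}$ in $V$. For each $i$, let $j_i\in[m]$ and $s_i\in[t]$ be determined by $v_i\in U_{j_is_i}$. Since $I$ is an interval partition, $j_1\leq\cdots\leq j_{n_{F'}}$. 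By the cleaning rule in Definition \ref{def:cleaned_graph}, for any $i<i'$ the color $c_{F'}(ii')=c'(v_iv_{i'})$ lies in $C_{j_ij_{i'}}^{(D)}[s_i,s_{i'}]=c_H(j_i,j_{i'})[s_i,s_{i'}]$. Hence the map $i\mapsto j_i$ together with $s_1,\ldots,s_{n_{F'}}$ embeds $F'$ into $H$ in the sense of Definition \ref{def:embeddability}, so $\mathcal{F}$ is embeddable in $H$.

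By Lemma \ref{lem:choosing_num_representatives} there is an $F=([n_F],c_F)\in\mathcal{F}$ with $n_F\leq d_{\mathcal{F}}^{(\ref{lem:choosing_num_representatives})}(m,t)$ together with a weakly order-preserving $h\colon[n_F]\to[m]$ and indices $s_1,\ldots,s_{n_F}\in[t]$ satisfying $c_F(ii')\in c_H(h(i),h(i'))[s_i,s_{i'}]$ for all $i<i'$. Set $j_i:=h(i)$. For each $j\in[m]$, write $D_j=\{I_{jr^j_1},\ldots,I_{jr^j_{d}}\}$ with $r^j_1<\cdots<r^j_{d}$, where $d=d_{\mathcal{F}}^{(\ref{lem:choosing_num_representatives})}(m,t)$; since $h$ is weakly monotone, the preimage $h^{-1}(j)$ has size at most $n_F\leq d=|D_j|$, so we may set $r_i:=r^{j_i}_{\ell}$ where $\ell$ is the rank of $i$ within $h^{-1}(j_i)$ (ordered increasingly). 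This gives the required lexicographic monotonicity: if $j_{i+1}>j_i$ we are done, and if $j_{i+1}=j_i$ then by construction $r_{i+1}>r_i$.

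It remains to verify the density inequality. Fix $i<i'$. Whether $j_i<j_{i'}$ (with the edge $I_{j_ir_i}I_{j_{i'}r_{i'}}$ running between $D_{j_i}$ and $D_{j_{i'}}$) or $j_i=j_{i'}$ with $r_i<r_{i'}$ (with the edge running inside $D_{j_i}$), the fact that $D$ is nicely colored gives
\[
M(W_{j_ir_i},W_{j_{i'}r_{i'}},\eta)=C_{j_ij_{i'}}^{(D)}=c_H(h(i),h(i')),
\]
by definition of the threshold graph $H_W^{\eta}$. The embeddability condition then yields $c_F(ii')\in M(W_{j_ir_i},W_{j_{i'}r_{i'}},\eta)[s_i,s_{i'}]$, which unpacks to $d_{c_F(ii')}(W_{j_ir_is_i},W_{j_{i'}r_{i'}s_{i'}})\geq \eta$, as desired. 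The main (and only mildly delicate) step is to ensure that a single embedding into $H$ can be realized by genuinely distinct triples in $W$ even when $h$ is not injective, which is precisely why one needs $|D_j|\geq d_{\mathcal{F}}^{(\ref{lem:choosing_num_representatives})}(m,t)$ in the definition of a nicely colored $(m,t)$-subgraph.
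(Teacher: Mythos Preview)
Your proof is correct and follows essentially the same approach as the paper's: define a $\Gamma(\Sigma,t)$-colored graph with loops on $[m]$ from the colors $C_{jj'}^{(D)}$, show the assumed copy in $G'(G,D)$ witnesses embeddability of $\mathcal{F}$, invoke Lemma~\ref{lem:choosing_num_representatives} to pass to a small $F$, and then read off representatives from $D$. The only cosmetic difference is your choice of $r_i$ as the rank of $i$ within $h^{-1}(j_i)$, whereas the paper simply takes the $i$-th element of $D_{j_i}$; both choices work since $n_F\le d_{\mathcal{F}}^{(\ref{lem:choosing_num_representatives})}(m,t)=|D_j|$.
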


\begin{proof}
Consider the $\Gamma$-colored graph with loops $D' = ([m], c_{D'})$: For any $j \leq j'$, $c_{D'}(jj') = C_{jj'}^{(D)}$.
Suppose that $G'(G,D) = (V, c')$ contains a copy of $A = ([n_A], c_A) \in \mathcal{F}$,
whose vertices in $V$ are $v_1 < \ldots < v_{n_A}$. 
For any $i \in [n_A]$, let $j_i \in [m], s_i \in [t]$ be the indices for which 
$v_i \in I_{j_i s_i}$. Then for any $i < i' \in [n_A]$ we have $c_A(ii') = c'(v_i v_i') \in C_{j_i j_{i'}}^{(D)}[s_i, s_{i'}] = c_{D'}(j_i j_{i'})[s_i, s_{i'}]$,
and so $A$ is embeddable in $D'$ (by the mapping  $i \mapsto j_i$).
By Lemma \ref{lem:choosing_num_representatives}, there exists $F = ([n_F], c_F)\in \mathcal{F}$ which is embeddable in $D'$, where $n_F \leq d_{\mathcal{F}}^{(\ref{lem:choosing_num_representatives})}(m,t)$. Let $h:[n_F] \to D'$ denote a mapping  that satisfies the 
conditions of Definition \ref{def:embeddability} and let $s'_1, \ldots, s'_{n_F} \in [t]$ be the indices satisfying 
$c_F(ii') \in c_{D'}(h(i), h(i'))[s'_{i}, s'_{i'}]$ for any $i < i' \in [n_F]$. 

For any $ i \in [n_F]$ denote the vertices of $D$ inside $I_{h(i)}$ by $I_{h(i) r'_{i1}}, \ldots, I_{h(i) r'_{id_{\mathcal{F}}(m,t)}}$, where $r'_{i1} < \ldots < r'_{i d_{\mathcal{F}}(m,t)} \in [b]$ for any $ i \in [n_F]$.
The sets $W_{h(1) r'_{11} s'_{1}}, \ldots, W_{h(n_F) r'_{n_F n_F} s'_{n_F}}$ satisfy the desired conditions: 
They exist, since $n_F \leq d_{\mathcal{F}}^{(\ref{lem:choosing_num_representatives})}(m,t)$,
the first condition holds since $h$ is order-preserving, and the second condition holds since
$c_F(ii') \in c_{D'}(h(i), h(i'))[s'_{i}, s'_{i'}] = C_{h(i)h(i')}^{(D)}[s'_i, s'_{i'}]$.
\end{proof}

\begin{proof}[Proof of Theorem \ref{thm:infinite_removal_lemma}]
The proof goes along the same lines as the proof of Theorem \ref{thm:finite_removal_lemma}, but 
any occurrence of $d_{\mathcal{F}}$ in the proof of Theorem \ref{thm:finite_removal_lemma} and the accompanying lemmas is replaced here by $d_{\mathcal{F}}^{(\ref{lem:choosing_num_representatives})}(m, t)$, including in the definitions of the functions $\beta, r$, and the term \emph{nicely colored subgraph} is replaced by \emph{nicely colored $(m, t)$-subgraph}.
More specifically, here are the exact changes needed with respect to the proof of Theorem \ref{thm:finite_removal_lemma}.
\begin{itemize}
	\item We take the functions $\beta = \beta_{\ref{lem:regularity_to_many_copies}}^{\rho/2}$ and $r = R_{\ref{lem:representing_ramsey_infinite}}$ (in the finite case we took $\beta$ as a suitable constant function and $r = R_{\ref{lem:representing_ramsey}}$). The function $g$ is defined
	as $g(l) = l r(l,l)$. Following the application of Lemma \ref{lem:mainpartscheme}, we have $b =  R_{\ref{lem:representing_ramsey_infinite}}(m, t)$.
	\item As in the the proof of Theorem \ref{thm:finite_removal_lemma}, there is a tuple $W$ which $(\alpha, \beta(mbt), \mu)$-represents $Q''$, and $H_{W}^{\rho/2}$ is $\rho$-desirable.
	By Lemma \ref{lem:representing_ramsey_infinite}, and by our new choice of $b$, there is a 
	nicely colored $(m,t)$-subgraph $D$ of $H_{W}^{\rho/2}$, with at 
	most $2\rho \binom{m}{2} \left(d_{\mathcal{F}}^{(\ref{lem:choosing_num_representatives})}(m, t)\right)^2$ $\rho$-undesirable edges.
	\item Lemma \ref{lem:easy_to_clean_infinite} implies that $G'$ contains a copy of a graph from $\mathcal{F}$. Now Lemma \ref{lem:has_good_copy_infinite} implies the existence of sets 
	$W_{j_1 r_1 s_1}, \ldots, W_{j_{n_F}, r_{n_F}, s_{n_F}} \in W$ with $n_F \leq d_{\mathcal{F}}^{(\ref{lem:choosing_num_representatives})}(m, t)$, that satisfy the conditions
	of the lemma for $\eta = \rho/2$. Since all pairs of sets from $W$ are $\beta_{\ref{lem:regularity_to_many_copies}}^{\rho/2}(mbt)$-regular,
	and since $mbt \geq b \geq d_{\mathcal{F}}^{(\ref{lem:choosing_num_representatives})}(m, t) \geq n_F$, these pairs are also $\beta_{\ref{lem:regularity_to_many_copies}}^{\rho/2}(n_F)$-regular. We apply Lemma \ref{lem:regularity_to_many_copies} to get that the number of $F$-copies in $G$ is at least 
	$\delta n^q$ for
	\begin{align*}
	q &= n_F \leq d_{\mathcal{F}}^{(\ref{lem:choosing_num_representatives})}(m, t) \leq d_{\mathcal{F}}^{(\ref{lem:choosing_num_representatives})}(S_{\ref{lem:mainpartscheme}}(\gamma,k,f,r), T_{\ref{lem:mainpartscheme}}(\gamma,k,f,r)), \\
	\delta &= \kappa_{\ref{lem:regularity_to_many_copies}}\left(\rho/2, n_F\right) \alpha^{n_F} 
	\geq \kappa_{\ref{lem:regularity_to_many_copies}}\left(\rho/2,  d_{\mathcal{F}}^{(\ref{lem:choosing_num_representatives})}(m, t)\right) \alpha^{d_{\mathcal{F}}^{(\ref{lem:choosing_num_representatives})}(m, t)} \\
	&\geq \kappa_{\ref{lem:regularity_to_many_copies}}(\rho/2, d_{\mathcal{F}}^{(\ref{lem:choosing_num_representatives})}(S_{\ref{lem:mainpartscheme}}(\gamma,k,f,r), T_{\ref{lem:mainpartscheme}}(\gamma,k,f,r))) \alpha^{d_{\mathcal{F}}^{(\ref{lem:choosing_num_representatives})}(S_{\ref{lem:mainpartscheme}}(\gamma,k,f,r), T_{\ref{lem:mainpartscheme}}(\gamma,k,f,r))}.
	\end{align*} 
\end{itemize}
Indeed, the above bounds for $q$ and $\delta$ depend only on $|\Sigma|, \epsilon, \mathcal{F}$, and not on $n$.
\end{proof}

\subsection{Adapting the proof for matrices}
\label{subsec:adapt_matrices}
Finally we give a sketch of the proof of Theorem \ref{thm:infinite_matrix_removal_lemma} for square matrices. The proof is very similar to the graph case,
so we only describe why the proof for graphs also works here. Finally, we describe shortly how the proof can be adapted to the case of non-square matrices.

\begin{proof}[Proof sketch for Theorem \ref{thm:infinite_matrix_removal_lemma}]
Given a square matrix $M:U \times V \to \Sigma$ where $U, V$ are ordered, and a family $\mathcal{F}$ of forbidden submatrices, consider the $\Sigma'$-colored graph $G = (U \cup V, c)$ where $\Sigma' = \Sigma \cup \{\sigma_0\}$ for some $\sigma_0 \notin \Sigma$, and the union $U \cup V$ is ordered as follows: All elements of $V$ come after all elements of $U$, and the internal orders of $U$ and $V$ remain as before. The edge colors of $G$ satisfy $c(uv) = M(uv)$ for any $u \in U$ and $v \in V$, and 
$c(uv) = \sigma_0$ otherwise. 

The proof now follows as in the graph case. It is important to note that while in the graph case one is allowed to change the color of
any edge, here we are not allowed to change the color of an edge from or to the color $\sigma_0$. However, the proof still works, by the following observations. 
\begin{itemize}
\item Since $|U| = |V|$, the number of vertices in $G$ is even, and so the interval partition $I$ obtained here ``respects the middle''. That is, each part $I_j$ of $I$ will be fully contained in $U$ or in $V$. Therefore, for every two parts $I_j, I_{j'}$ of $I$, either all edges in $I_j \times I_{j'}$ are colored by $\sigma_0$ or none of them is colored by $\sigma_0$.
\item It follows that the set of edges of the cleaned graph $G' = G'(G, D)$ that are colored by $\sigma_0$ is identical to that of $G$. In other words, to generate the cleaned graph we do not modify edge colors to or from $\sigma_0$. Since $G$ is made $\mathcal{F}$-free only by modifying colors between $U$ and $V$ to other colors in $\Sigma$, one needs to modify at least $\epsilon |U| |V|$ edge colors, so the proof follows without changing the main arguments.
\end{itemize}
\end{proof}

The above proof works for square matrices, but it can be adapted to general $m \times n$ matrices: If $m = \Theta(n)$, then the condition on $I$ needed is slightly different than respecting the middle, but this only slightly changes the structure of the equipartitions that we obtain via Lemma \ref{lem:mainpartscheme}, without significantly affecting the proof.
The proof can also be formulated for matrices with, say, $m = o(n)$ and $m = \omega(1)$, 
but then Lemma \ref{lem:mainpartscheme} needs to be especially adapted to accommodate the two ``types'' of vertices (row and column). Essentially we will have two interval equipartitions, one of the row vertices and one of the column vertices, along with their corresponding refinements.
Finally, the case where $m = \Theta(1)$ is essentially the case of testing one-dimensional strings; strings can be handled as per the discussion in the end of Subsection \ref{subsec:related_work}.

It is important to note that one cannot use Theorem \ref{thm:infinite_removal_lemma} as a black box to prove Theorem \ref{thm:infinite_matrix_removal_lemma}, as the distance of the graph $G$ to $\mathcal{F}$-freeness might (potentially) be significantly smaller than $\epsilon$, considering that the set of $\sigma_0$-colored edges in the $\mathcal{F}$-free graph that is closest to $G$ might differ from the set of $\sigma_0$-colored edges in $G$.

\end{document}